\documentclass[format=acmsmall, dvipsnames]{acmart}
\pdfoutput=1
\usepackage{bm}
\usepackage{dsfont}
\usepackage{algorithm}
\usepackage{algorithmic}
\usepackage{cuted}
 
\usepackage[export]{adjustbox}
\makeatletter
\g@addto@macro{\UrlBreaks}{\UrlOrds}
\makeatother
\usepackage{breqn}
\usepackage{enumitem}
\setlist[itemize]{leftmargin=*}
\usepackage[normalem]{ulem}
\usepackage{xcolor}

\newcommand{\change}[1]{\textcolor{black}{#1}}

\newcommand{\oo}{{\rm o}}

\newcommand{\be}{\begin{eqnarray}}

\newcommand{\ee}{\end{eqnarray}}
\newcommand{\ben}{\begin{eqnarray*}}
	\newcommand{\een}{\end{eqnarray*}}
\newcommand{\bfl}{\begin{flalign*}}
	\newcommand{\efl}{\end{flalign*}}

\newcommand{\calJ}{{\mathcal J}}

\setcopyright{none}
\settopmatter{printacmref=false} 
\renewcommand\footnotetextcopyrightpermission[1]{} 
\pagestyle{plain} 

\begin{document}



\title{On Non-Preemptive VM Scheduling in the Cloud}

\author{Konstantinos Psychas, and Javad Ghaderi}
\affiliation{%
	\institution{Columbia University}
}
\thanks{This work was supported by NSF Grant CNS-1652115}

\begin{abstract}
We study the problem of scheduling VMs (Virtual Machines)
in a distributed server platform,
motivated by cloud computing applications.
The VMs arrive dynamically over time to the system,
and require a certain amount of resources
(e.g. memory, CPU, etc) for the duration of their service.
To avoid costly preemptions, we consider non-preemptive scheduling:
Each VM has to be assigned to a server
which has enough residual capacity
to accommodate it, and once a VM is assigned to a server,
its service \textit{cannot} be disrupted (preempted).
Prior approaches to this problem either have high complexity,
require synchronization among the servers, or yield queue sizes/delays which are excessively large.
We propose a non-preemptive scheduling algorithm that resolves these issues. In general, given an approximation algorithm to Knapsack with approximation ratio $r$, our scheduling algorithm can provide $r\beta$ fraction of the throughput region for $\beta < r$. In the special case of a greedy approximation algorithm to Knapsack, we further show that this condition can be relaxed to $\beta<1$. The parameters $\beta$ and $r$ can be tuned to provide a tradeoff between achievable throughput, delay, and computational complexity of the scheduling algorithm. 
Finally extensive simulation results using both synthetic and real traffic traces are presented to verify the performance of our algorithm.
\end{abstract}
\keywords{Scheduling Algorithms, Stability, Queues, Knapsack Problem, Cloud}
%
\acmVolume{} \acmNumber{} \acmArticle{} 
\acmMonth{6} 
\acmYear{2018} \acmPrice{} \acmDOI{}

\maketitle

%


\section{Introduction}
There has been an enormous momentum recently in moving storage, computing, and
various services to the cloud. By using cloud, clients no longer require to
install and maintain their own infrastructure  and  can  instead  use  massive
cloud  computing resources on demand (for example, Expedia~\cite{expedia} and
Netflix are hosted on Amazon's cloud service~\cite{AEC2}). Clients can procure
Virtual Machines (VMs) with specific configurations of CPU, memory, disk, and
networking in the cloud. In a more complex scenario, clients can put together
an entire service by procuring and composing VMs with specific
capabilities~\cite{aws-datapipe,john2013research}.

The datacenter is a distributed server platform, consisting of a large
number of servers. The key challenge for the cloud operator is to efficiently
support a wide range of applications on their physical platform. Recent studies estimate
in many large datacenters the average server utilization to be 6 to 12\%
(see~\cite{glanz2012power} and references therein). At such low utilizations,
VMs can be potentially concentrated onto a smaller number of servers, and many
of the unused servers can be turned off (to save energy) or utilized to increase
the number of VMs that can be simultaneously supported by the system (to maximize
throughput and reduce delay). For instance, suppose a CPU-intensive VM, a
disk-intensive VM, and a memory-intensive VM are located on three individual
servers, we can pack these VMs in a single server to fully utilize the server's
resources along CPU, disk I/O, and memory. However, finding the right packing of
VMs is not easy due to two reasons: first, the cloud workload is a priori unknown
and will likely be variable over both time and space; and second, finding the right
packing even in the case that the workload is known is a hard combinatorial problem.

In this paper, we consider a distributed server platform, consisting of
possibly a large number of servers. The servers could be inhomogeneous
in terms of their capacity (e.g. CPU, memory, storage).
As an abstraction in our model, VM is simply a
multi-dimensional object (vector of resource requirements)
that \textit{cannot} be fragmented among the servers. The VMs of various
types arrive dynamically over time. Once a VM arrives, it is queued and
later served by one of the servers that has sufficient remaining capacity
to serve it. Once the service is completed, the VM departs from the server and
releases the resources.

We consider non-preemptive scheduling, i.e., once a VM starts getting service,
its ongoing service cannot be preempted (interrupted). This is because preemptions
require storing the state of preempted VMs and recovering them at a later time,
which are operationally costly and can also affect the latency~\cite{Dargie2014}.
Admittedly there are scenarios where preemptions could be actually
necessary/useful, e.g. for maintenance, low cost pricing, energy
saving~\cite{PreemtibleVMs,clark2005live, Lin2013}, or for resource allocation in
long-running services (e.g., a long-running VM where the cost of one-time
preemption can be amortized over the VM's life time). In this paper, we focus on
non-preemptive scheduling, and postpone the preemption cost modeling to a separate
work.

We are interested in scalable non-preemptive scheduling algorithms
that can provide high throughput and low delay.
To maintain scalability, we would like the scheduling decisions
to be made by the servers individually in a distributed manner, without
the need for coordination among the servers. In this work, we propose an
algorithm to meet these objectives and will characterize its theoretical
performance. Further extensions are also discussed to make the algorithms
more applicable to realistic settings.

We would like to emphasize that although we use the term VM, our model provides clean abstractions and algorithms that can be applied to
other applications as well. For example, in scheduling tasks in data-parallel clusters, tasks can be viewed as VMs in our model (multi-dimensional objects) with diverse resource requirements
(CPU, memory, storage, etc)~\cite{grandl2014multi}.

\subsection{Motivations and Challenges}\label{sec-related}
Consider a large-scale server system with a finite number of VM types.
At any time, each server could operate in one of many possible configurations,
where each configuration is a way of packing various number of VM types in
the server subject to its capacity. As VMs arrive and depart
over time, the configuration of servers may need to change
appropriately in order to schedule the VMs waiting to get service. To avoid costly preemptions, the
configuration change has to be non-preemptive. For example, suppose there
are only two VM types, if the server configuration is $(2, 2)$ (i.e., it is
currently serving $2$ VMs of type $1$ and $2$ VMs of type $2$), it cannot
suddenly transition to $(0,4)$ (i.e., serving $4$ VMs of type
$2$, and $0$ VMs of type $1$ instead) since this interrupts (preempts) the
service of type-$1$ VMs.
There have been two prior approaches to non-preemptive scheduling, namely,
MaxWeight approach \cite{MS2014,MSY2012,Siva13}, and randomized sampling approach
\cite{G2016}. In the rest of the paper, we use the terms VMs and jobs
interchangeably.

\textit{MaxWeight approach.} This approach is based on the classical
MaxWeight scheduling \cite{tassiulas1992stability}.
However unlike scheduling in data packet networks,
here a MaxWeight schedule cannot be used at arbitrary points in time since
it might cause preemption of jobs already in service.
Recent work \cite{MS2014,Siva13} proposes using the MaxWeight
schedule at instances when the servers becomes empty (the so-called refresh times),
however the approach requires using a MaxWeight schedule at times
when \textit{all} the servers become empty simultaneously
(the so-called global refresh times). This requires some
form of synchronization among the servers to set the
MaxWeight schedule at the same time. Further, such global
refresh times become extremely infrequent in large-scale server systems,
thus causing large queues and delays in scheduling.
There is no proof that MaxWeight based on local refresh times
(i.e. when each server chooses a MaxWeight schedule locally at its own refresh time)
is stable in general. In fact, it was suggested in \cite{G2016} that it might be unstable.
Also the approach requires finding the MaxWeight schedule
which in our setting requires solving a Knapsack problem
which is a hard combinatorial problem~\cite{knapsack}.

\textit{Randomized sampling approach.} A randomized sampling approach
was proposed in \cite{G2016} which has low complexity
and can provide high throughput. The idea is that each queue
samples the servers at random and places a token in the server
if it can fit a job in the sampled server. Token acts as place
holder for a future job arrival and reserves resources
for future job of that type for some time duration. When a job arrives, it is placed in a token of that type, if there is any, otherwise it is queued.
The sampling rate used by a queue depends on its size,
i.e, as a queue builds up, it samples the servers faster.
The algorithm is proved to be throughput optimal however in
general it suffers from long convergence time and excessive
queue sizes/delays.

\subsection{Contributions}
The main contributions of this work are summarized below.

\begin{itemize}
\item \textbf{A scalable non-preemptive scheduling algorithm.} We provide a scalable non-preemptive scheduling algorithm that can provide high throughput and low delay. Each server makes its scheduling decisions locally independently of the other servers based on a Knapsack or an approximated Knapsack solution (e.g. a greedy low-complexity solution). The key ingredient of our algorithm is a new construct of refresh times. Specifically each server actively estimates the right moments in time that it needs to reset its schedule and stops scheduling to allow the schedule to be renewed when the server becomes empty.
\item \textbf{Throughput-delay-complexity tradeoff.} We formally prove the fraction
of the throughput region that our algorithm can achieve. Specifically, given an approximation algorithm for solving the Knapsack problem with approximation ratio $r\in (0,1]$, our algorithm can provide $\beta r$ fraction of the maximum throughput where $\beta$ can be tuned to provide tradeoff between throughput and delay. Any general off-the-shelf approximation algorithm for the Knapsack problem can be used as subroutine in our scheduling algorithm, with $\beta \in (0,r)$, however we also present a greedy approximation algorithm for which $\beta \in (0,1)$ works.
\item \textbf{Empirical evaluations.} We provide extensive simulation results, using both synthetic and real traffic traces, that show that our algorithm in fact outperforms prior scheduling algorithms in terms of queuing delay.
\end{itemize}

\subsection{Related Work}
Our work is related to resource allocation in cloud data centers
(e.g.~\cite{cloudpacking1},\cite{VMcompare3},
\cite{chiang,meng2010improving,yazir2010dynamic},
\cite{GSS2016})
and scheduling algorithms in queueing systems
(e.g.~\cite{tassiulas1992stability,bonald2012rateoptimal,ye20101,marsan2002packet,shah2012randomized}). The VM placement in an infinite server system has
been studied in~\cite{S13,SZ13a,SZ13b,ghaderi2014asymptotic}. Four closely related
papers are~\cite{MSY2012},~\cite{Siva13},~\cite{MS2014},~\cite{G2016} where a finite
model of the cloud is studied and preemptive~\cite{MSY2012} and
non-preemptive~\cite{Siva13,MS2014, G2016} scheduling algorithms to stabilize the
system are proposed. The proposed algorithms either rely on the MaxWeight approach and
hence, as explained in Section~\ref{sec-related}, in general suffer from high
complexity and resetting at the global refresh times, or yield excessive queues and
delays in the case of randomized sampling approach. In the case that all the servers
are identical and each server has its own set of queues, it is sufficient to reset the
server configurations at the so-called local refresh times, namely, time instances
when a server becomes empty~\cite{Siva13,MS2014}; however, it is not clear if
operation based on local refresh times is stable in general when the queues are
centralized or the servers are not homogeneous. In fact, operation based on local refresh times can cause instability
(see Example~1 in Simulations, Section~\ref{sec-ineff}).
\subsection{Notations}
In the rest of the paper we use the following notations.
$\left \Vert \cdot \right \Vert$ denotes the Euclidean norm of vectors,
where $\left \Vert \cdot \right \Vert_\infty$ is the $\ell$-infinity norm
which is the maximum element of a vector, and $\left \Vert \cdot \right \Vert_1$
is the $\ell$-1 norm which is the sum of the absolute values of the elements
of the vector. The inner product of two vectors will be denoted by
$\langle \cdot, \cdot \rangle$. $Conv(S)$ is the convex hull of the points
in the set $S$. $|S|$ is the cardinality (the number of elements) of the
set $S$. $\mathbf{0}_n$ is a zero vector of size $n$. $\mathds{1}(E)$ is the
indicator function which is $1$ if condition $E$ is true and 0 otherwise.
We write $f(x) = o(g(x))$ if $\lim_{x \to 0} \frac{f(x)}{g(x)} = 0$

\section{System Model}
\label{sec-model}

\subsubsection*{Cloud Cluster Model}
We consider a collection of $L$ servers denoted by the set $\mathcal{L}$. 
Each server $\ell \in \mathcal{L}$ has a limited capacity for various resource 
types (e.g., memory, CPU, storage, etc.). We assume there are $R$ different 
types of resources. Servers could be inhomogeneous in terms of their capacities.

\subsubsection*{VM-based Job Model} There is a collection of $J$
VM types denoted by the set $\mathcal{J}$. Each VM type $j \in \mathcal{J}$
requires fixed amounts of the various resources.
So each VM type is a $R$-dimensional vector of resource requirements.

\subsubsection*{Job (VM) Arrivals and Service Times}
Henceforth, we use the terms job and VM interchangeably.
We assume VMs of type $j$ arrive according to a Poisson process
with rate $\lambda_j$. The highest rate among them is denoted
by $\lambda_{max} := \max_j{\lambda_j}$.
Each VM must be placed in a server that has enough remaining
resources to accommodate it. Once a VM of type $j$ is placed in server,
it departs after an exponentially distributed amount of time (service time)
with mean $1/\mu_j$, independently of the other existing VMs in the server.
We will also define the maximum mean service time as $T := \max_j{1/\mu_j}$
and the maximum service rate as $\mu_{max} := \max_j{\mu_j}$. The Poisson and
exponential assumptions are for simplicity and we will in fact broaden
the results to more general distributions later in Section~\ref{sec:generalization}.

\subsubsection*{Server Configuration and System Configuration}
We denote by $k_j^\ell$ the number of type-$j$ VMs that are
accommodated by server $\ell$. For each server $\ell$, a vector
$\mathbf{k}^\ell=(k_1^\ell, \cdots, k_J^\ell) \in \mathds{N}_0^J$
is said to be a feasible configuration if the server can simultaneously
accommodate $k_1^\ell$ type-$1$ VMs, $k_2^\ell$ type-$2$ VMs, ...,
$k_J^\ell$ type-$J$ VMs, without violating its capacity.
\change{
	A feasible configuration is said to be \textit{maximal}
	if no further VM can be added to the configuration without violating the server's capacity.
}
We also define the system configuration as a matrix 
$\mathbf{k} \in \mathds{N}_0^{L\times J}$ whose $\ell$-th 
row ($\mathbf{k}^{\ell}$) is the configuration of server $\ell$.

We use $\mathcal{K}_\ell$ to denote the set of all feasible configurations
for server $\ell$ excluding the $0$-configuration $\mathbf{0}_J$, and $\mathcal{\bar K}_\ell$ to denote $\mathcal{K}_\ell \cup \{\mathbf{0}_J\}$.
Note that we do not necessarily need the resource requirements of VMs in a configuration to be additive (vector addition),
we only require the monotonicity of the feasible configurations,
i.e., if $\mathbf{k}^{\ell} \in \mathcal{\bar K}_\ell$,
and $ {\mathbf{k}^\prime}^{\ell} \leq \mathbf{k}^{\ell}$ (component-wise),
then ${\mathbf{k}^\prime}^{\ell} \in \mathcal{\bar K}_\ell$. Clearly 
monotonicity includes additive resource requirements as a special case.

\subsubsection*{Queueing Dynamics and Stability}
When jobs arrive, they are queued and later served by the servers.
We use $Q_j(t)$ to denote the number of type-$j$ jobs waiting in the
queue to get service. The vector of all queue sizes at time $t$ is denoted by
$\mathbf{Q}(t)$.
$Q_j(t)$ follows the usual dynamics
\[
Q_j(t)=Q_j(t_0)+A_j(t_0,t)-D_j(t_0,t),
\]
where $A_j(t_0,t)$ is the number of type-$j$ jobs arrived
from time $t_0$ up to time $t$ and $D_j(t_0,t)$
is the number of type-$j$ jobs departed from queue in the same time interval.
\change{
The system is said to be stable if the queues remain bounded in the sense that
\be \label{eq:mean sense}
\lim_{t \to \infty} \sup \mathds{E}\left[\sum_j Q_j(t)\right] < \infty.
\ee
}

\change{
A vector of arriving rates $\bm{\lambda}$ and a vector of mean service
times $1/\bm{\mu}$ is said to be supportable
if there exists a scheduling algorithm under which the system is stable.
}
Let $\rho_j=\lambda_j/\mu_j$ be the workload of type-$j$ jobs.
We will define the capacity (throughout) region
of the cluster as
\begin{equation}
\label{eq-C}
\mathcal{C} = \{ \mathbf{x} \in \mathbb{R}^J_+ : \mathbf{x} =
\sum_{\ell \in \mathcal{L}} \mathbf{x}^\ell,\ \mathbf{x}^\ell \in
Conv(\mathcal{\bar K}^\ell), \ell \in \mathcal{L}
   \},
\end{equation}
where $Conv(\cdot)$ is the convex hull operator. It has been shown
~\cite{Siva13,MS2014,MSY2012} that the set of supportable workloads
$\bm{\rho}=(\rho_1, \cdots \rho_J)$ is the interior of $\mathcal{C}$.
We also define $\mathcal{C}_\beta$ as the $\beta$ fraction of the capacity region,
i.e., $\mathcal{C}_\beta = \beta \mathcal{C}$, for $0 < \beta \leq 1$.

\section{Basic Algorithm and Main Result} \label{alg}
In this section, we present our non-preemptive scheduling algorithm and state the main result regarding its performance. Before describing the algorithm, we make two definitions.

\vspace{0.08 in}
\textsc{Definition 1} (weight of a configuration). The weight of configuration
$\mathbf{k}^\ell$ for server $\ell$, given a queue size vector $\mathbf{Q}$,
is defined as
\begin{equation} \label{eq-f}
f(\mathbf{k}^\ell, \mathbf{Q}) := \sum_{j\in \calJ} Q_j k_j^\ell.
\end{equation}

\change{
\textsc{Definition 2} ($r$-max weight configuration). Given a constant
$r \in (0,1]$, and a queue size vector $\mathbf{Q}$, an
$r$-max weight configuration for server $\ell$ is a feasible configuration
$\mathbf{k}^{(r) \ell}\in \mathcal{K}_\ell$ such that
\begin{equation}\label{eq-maxconf}
f(\mathbf{k}^{(r) \ell}, \mathbf{Q}) \geq r f(\mathbf{k}^\ell, \mathbf{Q})
,\ \forall \mathbf{k}^\ell \in \mathcal{K}_\ell.
\end{equation}
}
Note that by Definition 2, an $r$-max weight configuration, is also an $r^\prime$-max weight configuration, for any $0\leq r^\prime \leq r$.

Various approximation algorithms exist that can provide an $r$-max weight
configuration. In Section~\ref{sec-optconf}, we will elaborate further and describe several low complexity approaches to solve (\ref{eq-maxconf}),
but for now assume that such an approximation algorithm exists and is used
as a subroutine in our scheduling algorithm in a black box fashion.

Under our scheduling algorithm, each server at any time is either in an \textit{active} period or in a \textit{stalled} period, defined below.
We will also refer to the state of a server as active or stalled depending on the period in which it is at a certain time.

\textbf{Active period:} In an active period, the server schedules jobs
from the queues according to a \textit{fixed} configuration.
Formally, let the configuration of server $\ell$ in an active period be
$\tilde{\mathbf{k}}^{\ell}=(\tilde{k}^{\ell}_j: j \in \mathcal{J})$.
The server can contain at most $\tilde{k}^{\ell}_j$ jobs of type $j$,
$j\in \mathcal{J}$, at any time.
If there are not enough type-$j$ jobs in the system, the server
reserves the remaining \textit{empty slots} for future type-$j$ arrivals.
We use $\bar{\mathbf{k}}^\ell(t)=(\bar{k}^{\ell}_j(t); j\in \calJ)$
to denote the actual number of jobs in the server $\ell$ at time $t$.
By definition, $\bar{\mathbf{k}}^\ell(t) \leq \tilde{\mathbf{k}}^{\ell}$
(component-wise) at any time $t$ during the active period of server $\ell$.

\textbf{Stalled period:} In a stalled period, the server
does not schedule any more jobs, even if there are jobs waiting
for service that can fit in the server,
and it only processes jobs which already exist in the server.
The stalled period of the server ends when all the existing jobs
in the server finish their service and leave,
at which point the server will enter a new active period.

Note that by the above definitions, an arriving job of type $j$ will not be queued
(i.e., it enters the queue but immediately gets service) if there is an \textit{empty
slot} available for it in any of the active servers (i.e., if there is a server $\ell$
such that $\tilde{k}^\ell_j-\bar{k}^\ell_j(t)\geq 1$), as it will be scheduled in one
of the empty slots immediately.
Also the change of configuration in a server can only happen when the server
is empty and stalled and that change results in a transition from a stalled period to
an active period.
We will refer to these transition times as \textit{configuration reset times}.

Our scheduling algorithm determines:
(1) the time at which a server must go from active to stalled,
(2) the time at which a server must go from stalled to active,
and (3) the server configuration used during the active period
when the server goes from stalled to active.


\begin{enumerate}[leftmargin=*]
\item \textbf{Transition from active to stalled}.
Suppose server $\ell$ is in an active period with configuration
$\tilde{\mathbf{k}}^\ell$. The server makes a transition to a stalled period
if upon departure of a job from the server at time $t$,
\begin{equation} \label{st_cond}
f(\tilde{\mathbf{k}}^\ell, \mathbf{Q}(t)) <
\beta f(\mathbf{k}^{(r) \ell}(t), \mathbf{Q}(t)),
\end{equation}
where $\mathbf{k}^{(r) \ell}(t)$ is an $r$-max configuration given the queue
size vector $\mathbf{Q}(t)$ (based on Definition 2), and  $0 < \beta <1$ is a constant
which is a parameter of the algorithm.
In other words, transition occurs when the weight of the
active server's configuration $\tilde{\mathbf{k}}^\ell$
becomes worse than $\beta$ fraction of the weight of the
$r$-max weight configuration $\mathbf{k}^{(r) \ell}(t)$
computed at the time of job departure $t$.
\textit{Note that condition~(\ref{st_cond}) is only checked when a job
hosted in server $\ell$ is completed.}


\item \textbf{Transition from stalled to active.}
Suppose a server is in a stalled period. When the server becomes empty (i.e., its existing jobs finish service), the server makes a transition to an active period.


\item \textbf{Server configuration during an active period}.
Suppose server $\ell$ enters an active period at time $t_{(a)}$.
The configuration of server $\ell$ for the entire duration of its active period, $\tilde{\mathbf{k}}^\ell$, is \textit{fixed} and
set to $\mathbf{k}^{(r) \ell}(t_{(a)})$, an $r$-max weight configuration based on the
queues at time $t_{(a)}$. Note that in Definition~2, the zero configuration 
$\mathbf{k}^\ell = \mathbf{0}_J$ is not selected, even when all the queues are empty.
\end{enumerate}
Algorithm~\ref{alg1} gives a description of our algorithm.

\begin{algorithm}
\caption{Basic Non-preemptive Scheduling}
\label{alg1}
\begin{flushleft}
	\textit{When a job of type $j$ arrives at time $t$:}
\end{flushleft}

\begin{algorithmic}[1]
\STATE Add the job to the queue $j$
\IF {exists \textit{empty slots} for type-$j$ jobs}
	\STATE Schedule the job in the first empty slot.
\ENDIF
\end{algorithmic}

\begin{flushleft}
	\textit{When a job of type $j$ in server $\ell$ is completed at time $t$:}
\end{flushleft}

\begin{algorithmic}[1]
\IF{$\ell$ is \textit{active} with configuration $\tilde{\mathbf{k}}^\ell$}
     \IF{condition~(\ref{st_cond}) holds}
     	\STATE Switch $\ell$ to \textit{stalled}.
	 \ELSE
	 	\STATE Schedule a type-$j$ job in server $\ell$ from queue $j$. If queue $j$ is empty, register an \textit{empty slot} of type $j$ in server $\ell$.
     \ENDIF
\ENDIF
\IF{$\ell$ is empty and \textit{stalled}}
	\STATE Switch $\ell$ to \textit{active}.
	\STATE Find an $r$-max weight configuration $\mathbf{k}^{(r) \ell}$.
	\STATE Set the configuration of server $\ell$ during its active
	period to be fixed and equal to $\mathbf{k}^{(r) \ell}$.
	\FOR{$j \in \mathcal{J}$}
		\STATE{Schedule ${k_j}^{(r) \ell}$ jobs of type $j$ in server $\ell$.}
If there are not enough jobs in queue $j$, register an \textit{empty slot} for each unused slot.
	\ENDFOR
\ENDIF
\end{algorithmic}
\end{algorithm}
\vspace{0.07 in}
\textsc{Remark 1} (\textit{choice of $r$ and $\beta$}):
The parameter $r$ provides a flexibility in solving the optimization
(\ref{eq-maxconf}) depending on the server and job profiles. In general,
it might be difficult to find the max weight configuration for $r=1$
in~(\ref{eq-maxconf}) (this is the so-called Knapsack problem~\cite{knapsack}),
but there are greedy algorithms that can guarantee that the configuration 
will be $r$-max weight for some $r<1$ (see Section~\ref{sec-optconf}).

The parameter $\beta$ that appears in condition~(\ref{st_cond})
controls how often servers
transit to stall period and as we will prove later controls
what fraction of the maximum throughput (capacity) region is
achievable. Higher $\beta$ makes a server stall more often,
which increases the overall delay
of jobs waiting to get service, however it can achieve higher throughput. Therefore $\beta$ can be tuned to provide a tradeoff between throughput and average delay.

\vspace{0.07 in}
\textsc{Remark 2} (\textit{configuration reset times}): The prior approach~\cite{MS2014} is based
on finding the max weight configuration (corresponding to $r=1$ in~(\ref{eq-maxconf})),
and changing the configuration of a server at the so-called refresh times
when the servers become empty. However their proof of stability requires
resetting the server configuration at `global' refresh times when \textit{all}
the servers become empty at the same time. Such times could be extremely
rare when the system size is large. Resetting the server configurations
at their local refresh times (i.e., when each server itself is empty)
cannot guarantee stability, in fact we can give examples that show that it
becomes unstable (see Example~1 in Section~\ref{sec-ineff}). Algorithm~\ref{alg1}
does not require synchronization among the reset times of servers and every
server can reset its configuration locally based on its local state information.
\change{Intuitively our method works because each server actively estimates the right moment in time that it needs to reset its configuration, and stops scheduling to allow the configuration to reset, something that doesn't happen in the other methods.
}

The following theorem states the main result about the performance of the algorithm.
\begin{theorem}\label{theorem}
Consider Algorithm \ref{alg1} with parameter $r \in (0,1]$ and $0 < \beta < r$.
Then the algorithm can support any workload vector $\bm{\rho}$
in the interior of $\mathcal{C}_{r\beta}$ ($r\beta$-fraction of the capacity region $\mathcal{C}$).
\end{theorem}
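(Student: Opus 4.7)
My plan is to prove positive recurrence of the underlying Markov process $X(t)=(\mathbf{Q}(t),\{\tilde{\mathbf{k}}^\ell(t),\bar{\mathbf{k}}^\ell(t),\sigma_\ell(t)\}_{\ell\in\mathcal L})$, where $\sigma_\ell\in\{\text{active},\text{stalled}\}$ flags the state of server $\ell$, via a Foster--Lyapunov drift argument on $V(X)=\tfrac{1}{2}\sum_j Q_j^2/\mu_j$; the mean-sense bound~(\ref{eq:mean sense}) then follows by the standard moment argument of Meyn--Tweedie. The first structural observation is algorithmic: whenever $Q_j(t)>0$, no empty type-$j$ slot can persist in any active server, since any such slot would be filled immediately by the queued head-of-line arrival. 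Consequently $\bar k_j^\ell(t)=\tilde k_j^\ell(t)$ for every active $\ell$ as long as $Q_j(t)>0$, so the depletion rate of $Q_j$ when $Q_j>0$ equals $\mu_j\sum_{\ell:\sigma_\ell=\text{active}}\tilde k_j^\ell$. Writing the extended generator against $V$ yields
\begin{equation*}
\mathcal L V(X)\;\le\;\langle \mathbf{Q},\bm\rho\rangle\;-\;\sum_{\ell:\,\sigma_\ell=\text{active}} f(\tilde{\mathbf{k}}^\ell,\mathbf{Q})\;+\;C_0,
\end{equation*}
where $C_0$ depends only on $J$, $T$, and the maximum configuration norm.

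Since $\bm\rho$ lies in the interior of $\mathcal{C}_{r\beta}$, I can choose $\epsilon_0>0$ and $\mathbf{p}^\ell\in Conv(\bar{\mathcal K}_\ell)$ with $r\beta\sum_\ell\mathbf{p}^\ell\ge \bm\rho+\epsilon_0\mathbf{1}$; pairing with $\mathbf{Q}$ and upper-bounding each $\langle\mathbf{Q},\mathbf{p}^\ell\rangle$ by the per-server max weight gives
\begin{equation*}
r\beta\sum_{\ell}\max_{\mathbf{k}^\ell\in\bar{\mathcal K}_\ell} f(\mathbf{k}^\ell,\mathbf{Q})\;\ge\; \langle \mathbf{Q},\bm\rho\rangle+\epsilon_0\|\mathbf{Q}\|_1,\qquad\forall\,\mathbf{Q}\ge\mathbf{0}.
\end{equation*}
Comparing with the previous drift bound, the theorem reduces to establishing, up to a term of order $o(\|\mathbf{Q}\|_1)$,
\begin{equation*}
\sum_{\ell:\,\sigma_\ell=\text{active}} f(\tilde{\mathbf{k}}^\ell,\mathbf{Q}(t))\;\ge\; r\beta\sum_{\ell}\max_{\mathbf{k}^\ell\in\bar{\mathcal K}_\ell} f(\mathbf{k}^\ell,\mathbf{Q}(t)).
\end{equation*}
For each active server, the negation of the stall condition~(\ref{st_cond}) together with Definition~2 yields $f(\tilde{\mathbf{k}}^\ell,\mathbf{Q}(t))\ge r\beta\max_{\mathbf{k}^\ell} f(\mathbf{k}^\ell,\mathbf{Q}(t))$ at every departure event of $\ell$; since $\mathbf{Q}$ jumps by a single unit per event and configurations have bounded norm, this extends to all $t$ up to an $O(1)$ additive slack absorbed into $C_0$.

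What remains is to control stalled servers, which contribute $0$ to the active-weight sum. A stalled server empties in expected time bounded by a constant $T_{\max}$ depending only on $T$ and the per-server capacity, and then immediately re-enters an active period, so the long-run time-fraction any server spends stalled is bounded below $1$ uniformly in the state; I would absorb the corresponding wasted weight either by augmenting $V$ with a nonnegative surrogate that charges against stalled servers and decays as their jobs finish, or by analyzing the drift averaged over a window of length $\Theta(T_{\max})$. Assembling these pieces gives $\mathcal L V(X)\le -\tfrac{\epsilon_0}{2}\|\mathbf{Q}\|_1+C$ outside a bounded set, and Foster--Lyapunov delivers positive recurrence and~(\ref{eq:mean sense}). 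The main obstacle is the reconciliation just outlined: bridging the event-based stall check with the continuous-time generator and quantifying the capacity lost during stalls. The hypothesis $\beta<r$ is the structural lever I would lean on, because at the moment a configuration is chosen the ratio $f(\tilde{\mathbf{k}}^\ell,\mathbf{Q})/f(\mathbf{k}^{(r)\ell},\mathbf{Q})$ equals $1$ while the stall threshold is $\beta<r\le 1$, so every active period enjoys a uniformly positive hysteresis and cannot terminate immediately, which is precisely what makes the averaging argument close.
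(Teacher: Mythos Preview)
Your skeleton matches the paper's: the same Lyapunov function $V=\tfrac12\sum_j Q_j^2/\mu_j$, the same instantaneous drift bound $\mathcal L V\le\langle\mathbf{Q},\bm\rho\rangle-\sum_{\ell\text{ active}}f(\tilde{\mathbf{k}}^\ell,\mathbf{Q})+C_0$, and the same decomposition of $\bm\rho$ via convex combinations in each $\mathcal K_\ell$. The genuine gap is in the control of stalled servers.

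You write that ``the long-run time-fraction any server spends stalled is bounded below $1$ uniformly in the state'' and propose averaging over a window of length $\Theta(T_{\max})$. Neither suffices. To support all of $\mathcal C_{r\beta}$ you need the stalled time-fraction to \emph{vanish} as $\|\mathbf{Q}\|\to\infty$, not merely stay below $1$: over a fixed window of length $\Theta(T_{\max})$, the emptying phase of a single stall already occupies a constant fraction of the window, which would multiply the achievable region by a constant strictly less than $1$. Your alternative of an unspecified surrogate potential does not resolve this either, since whatever the surrogate is, it must somehow encode that active periods are long relative to stalls, and you have not established that.

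The missing quantitative ingredient, and the place where the hypothesis $\beta<r$ actually does work, is the paper's Lemma~\ref{lem_active}: once server $\ell$ becomes active at time $t_{(a)}$, a hyperplane argument shows it remains active as long as $\|\mathbf{Q}(t)-\mathbf{Q}(t_{(a)})\|<C\|\mathbf{Q}(t_{(a)})\|$ for a constant $C>0$ that is strictly positive precisely because $\beta<r$ creates a nonzero angle between the ``just-reset'' and ``about-to-stall'' hyperplanes. Hence active periods last $\Omega(\|\mathbf{Q}\|)$ with high probability while stalled periods are $O(1)$. The paper then analyzes drift over a window $[t_0,t_0+NT]$ with $N$ chosen large relative to the stall-duration bound $M$, and large $\|\mathbf{Q}(t_0)\|$ relative to $N$, so that all servers are simultaneously active for all but an $LMT$ portion of the window (Proposition~\ref{prop:Mq}); this is what makes the averaged drift negative. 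Your final sentence correctly identifies the hysteresis as the lever, but ``cannot terminate immediately'' is far weaker than ``lasts $\Omega(\|\mathbf{Q}\|)$'', and the latter is what closes the argument.

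A secondary point: your ``$O(1)$ additive slack'' extending the inequality $f(\tilde{\mathbf{k}}^\ell,\mathbf{Q})\ge r\beta\max f(\mathbf{k}^\ell,\mathbf{Q})$ from departure instants to all times is morally right but not literally so, because between consecutive departures from server $\ell$ the queue can change by a random, unbounded amount (arrivals and departures at other servers). The paper handles this probabilistically (Lemma~\ref{prop:weight_const}), showing the slack is at most $B_1$ with probability $1-\epsilon$ for $B_1$ of order $\log(1/\epsilon)$.
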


\section{Proof of Main Result}\label{sec-lyap}
\change{The proof of Theorem~\ref{theorem} is based on Lyapunov analysis. The idea is to show that for large enough queue sizes,
the servers will be in active periods most of the time and their 
negative contribution to the drift of Lyapunov function will outweigh the positive 
contribution of stalled periods. The challenge is that servers, under 
Algorithm~\ref{alg1}, make their (active, stalled) decisions locally without 
coordination. Despite this, we are still able to show that all the servers will be 
active simultaneously for sufficiently large fraction of time. The proof follows 
3 main steps as follows.}
\subsection{System state}
The system state at any time is given by
\begin{equation}\label{eq-state}
\mathbf{S}(t) = \left(\mathbf{Q}(t), \bar{\mathbf{k}}(t),
	\tilde{\mathbf{k}}(t), \mathbf{I}(t) \right),
\end{equation}
where $\mathbf{Q}(t)$ is the vector of queue sizes (i.e., jobs waiting
to get service), $\bar{\mathbf{k}}(t)$ denotes the existing jobs in
the servers, $\tilde{\mathbf{k}}(t)$ is the system configuration, and
$\mathbf{I}(t)$ indicates which server is active or stalled, i.e.,
$I_\ell(t)=1$ if server $\ell$ is in active period, and is zero if
it is stalled. Under Algorithm~\ref{alg1}, the process $\mathbf{S}(t)$
evolves as a continuous-time and irreducible Markov chain. Note that
when $I_{\ell}(t)=1$, if $\bar{k}^{\ell}_j(t) < \tilde{k}^{\ell}_j(t)$
for some type $j$ in server $\ell$ (i.e., there is at least one
\textit{empty slot} for type-$j$ VMs), that necessarily implies that
$Q_j(t) = 0$. For notational compactness, throughout the proofs, we use
$\mathds{E}_{\mathbf{S}(t)}$ to denote the conditional expectation,
given state $\mathbf{S}(t)$.
\subsection{Duration of overlapping active periods among servers}
\change{We show that as queues get large, the accumulated duration of overlapping 
	active periods (i.e, durations when \textit{all} servers are active simultaneously) 
	will become longer while the accumulated duration of stalled periods 
	remains bounded, with high probability.
	To show this, we analyze the active/stalled periods over an interval of
	length $NT$, where $T=\max_j{1/\mu_j}$ and $N$ is a large constant to be determined.}

The following Lemma is essential to our proof.
\change{
\begin{lemma}
\label{lem_active}
Suppose server $\ell$ becomes active at time $t_{(a)}$. There exists
a constant $C>0$ such that the server will remain active during the interval
$[t_{(a)}, t)$ if
\begin{equation}
\left \Vert \mathbf{A}(t_{(a)}, t) \right \Vert_\infty +
\left \Vert \mathbf{D}(t_{(a)}, t) \right \Vert_\infty <
C \left \Vert \mathbf{Q}(t_{(a)}) \right \Vert,
\nonumber
\end{equation}
where $\mathbf{A}(t_{(a)}, t) $ and $\mathbf{D}(t_{(a)}, t) $ are respectively the vector of number arrivals
and departure during $[t_{(a)},t)$.
\end{lemma}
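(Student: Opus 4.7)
\textbf{Proof proposal for Lemma~\ref{lem_active}.}
The plan is to show that, throughout $[t_{(a)},t)$, the stalling condition~(\ref{st_cond}) cannot be triggered, by using the fact that server $\ell$'s frozen configuration was chosen $r$-max at $t_{(a)}$ and that $\mathbf{Q}$ drifts only slowly when arrivals and departures are controlled. The comparison at the heart of the argument is between $f(\tilde{\mathbf{k}}^\ell,\mathbf{Q}(t'))$ and $\beta f(\mathbf{k}^{(r)\ell}(t'),\mathbf{Q}(t'))$ for every $t'\in[t_{(a)},t)$.

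First, I would exploit the $r$-max property of the frozen configuration. Since $\tilde{\mathbf{k}}^\ell=\mathbf{k}^{(r)\ell}(t_{(a)})$, Definition~2 gives $f(\tilde{\mathbf{k}}^\ell,\mathbf{Q}(t_{(a)}))\geq r\, f(\mathbf{k}^\ell,\mathbf{Q}(t_{(a)}))$ for every feasible $\mathbf{k}^\ell$, so in particular for $\mathbf{k}^\ell=\mathbf{k}^{(r)\ell}(t')$. Because $f(\cdot,\mathbf{Q})$ is linear in $\mathbf{Q}$ with coefficients bounded by the (constant) $\ell_1$-norm of any feasible configuration, I obtain a uniform Lipschitz bound
\[
|f(\mathbf{k}^\ell,\mathbf{Q}(t'))-f(\mathbf{k}^\ell,\mathbf{Q}(t_{(a)}))|\leq K_1\,\|\mathbf{Q}(t')-\mathbf{Q}(t_{(a)})\|_\infty,
\]
where $K_1$ is a constant depending only on the server capacities. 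Chaining the two inequalities yields
\[
f(\tilde{\mathbf{k}}^\ell,\mathbf{Q}(t'))\geq r\,f(\mathbf{k}^{(r)\ell}(t'),\mathbf{Q}(t'))-(r+1)K_1\,\|\mathbf{Q}(t')-\mathbf{Q}(t_{(a)})\|_\infty,
\]
so~(\ref{st_cond}) fails as long as $(r-\beta)\,f(\mathbf{k}^{(r)\ell}(t'),\mathbf{Q}(t'))\geq (r+1)K_1\,\|\mathbf{Q}(t')-\mathbf{Q}(t_{(a)})\|_\infty$.

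Next, I would bound the queue drift $\|\mathbf{Q}(t')-\mathbf{Q}(t_{(a)})\|_\infty$. From the queue dynamics, $|Q_j(t')-Q_j(t_{(a)})|\leq A_j(t_{(a)},t')+B_j(t_{(a)},t')$, where $B_j$ counts jobs drained from queue $j$ into service. Every draining event of type $j$ is either caused by a type-$j$ service completion inside an already-active server (at most $D_j(t_{(a)},t')$ such events), by the filling of a pre-existing empty slot of type $j$ (at most a server-capacity constant $K_0$ such slots exist at $t_{(a)}$), or by another server going from stalled to active during the interval; the last requires that server to empty completely, so the number of such events is bounded by a constant multiple of $\|\mathbf{D}(t_{(a)},t')\|_\infty$. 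Together this gives
\[
\|\mathbf{Q}(t')-\mathbf{Q}(t_{(a)})\|_\infty\leq c_1\bigl(\|\mathbf{A}(t_{(a)},t')\|_\infty+\|\mathbf{D}(t_{(a)},t')\|_\infty\bigr)+c_0,
\]
with $c_0,c_1$ depending only on $L$, $J$ and the maximum configuration.

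Finally, to close the argument I would use a lower bound of the form $f(\mathbf{k}^{(r)\ell}(t'),\mathbf{Q}(t'))\geq r\,c\,\|\mathbf{Q}(t')\|$, valid because each job type admits a positive feasible configuration on server $\ell$ (or can be rerouted through the set of servers assumed in the model). Combined with $\|\mathbf{Q}(t')\|\geq \tfrac12\|\mathbf{Q}(t_{(a)})\|$, which holds whenever the queue drift is at most $\tfrac{1}{2\sqrt{J}}\|\mathbf{Q}(t_{(a)})\|$, the required inequality reduces to
\[
\|\mathbf{A}(t_{(a)},t')\|_\infty+\|\mathbf{D}(t_{(a)},t')\|_\infty\leq C\,\|\mathbf{Q}(t_{(a)})\|
\]
for an explicit constant $C=C(r,\beta,L,J,K_1,c)$; the additive term $c_0$ is absorbed by taking $C$ slightly smaller (or, equivalently, the bound becomes meaningful only once $\|\mathbf{Q}(t_{(a)})\|$ is large enough, which is the regime in which the lemma will be applied in the outer Lyapunov argument).

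The main obstacle I anticipate is the bookkeeping of $B_j$: one must argue that the only ways a queued job can exit the queue are those listed above, and that each mechanism is controlled by $\|\mathbf{D}\|_\infty$ up to a capacity-dependent constant. Once this combinatorial accounting is done, the rest is Lipschitz estimates on the linear functional $f(\cdot,\mathbf{Q})$ and a choice of $C$ that makes the gap $r-\beta>0$ in Theorem~\ref{theorem} absorb all the perturbation terms.
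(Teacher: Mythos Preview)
Your route differs substantially from the paper's, and the crucial step does not close.

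\textbf{Where the approaches diverge.} The paper argues geometrically: at $t_{(a)}$ the queue vector $\mathbf{Q}(t_{(a)})$ lies in the half-space $\{\mathbf{Q}:\langle \tilde{\mathbf{k}}^\ell-r\mathbf{k}^\ell,\mathbf{Q}\rangle\ge 0\}$ for every $\mathbf{k}^\ell$, while stalling requires entering $\{\mathbf{Q}:\langle \tilde{\mathbf{k}}^\ell-\beta\mathbf{k}^\ell,\mathbf{Q}\rangle<0\}$. Both hyperplanes pass through the origin, and since $\beta<r$ the angle $\theta_{\mathbf{k}^\ell}$ between them is strictly positive (with the parallel case $\mathbf{k}^\ell\parallel\tilde{\mathbf{k}}^\ell$ ruled out separately). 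Hence the Euclidean displacement needed to cross from one to the other is at least $\sin(\min_{\mathbf{k}^\ell}\theta_{\mathbf{k}^\ell})\,\|\mathbf{Q}(t_{(a)})\|$, and $C=C_a/\sqrt{J}$ follows. No lower bound on any weight is ever invoked.

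\textbf{The gap in your argument.} Your chain reduces non-stalling to
\[
(r-\beta)\,f(\mathbf{k}^{(r)\ell}(t'),\mathbf{Q}(t'))\ \ge\ (r+1)K_1\,\|\mathbf{Q}(t')-\mathbf{Q}(t_{(a)})\|_\infty,
\]
and you then assert $f(\mathbf{k}^{(r)\ell}(t'),\mathbf{Q}(t'))\ge r\,c\,\|\mathbf{Q}(t')\|$. This lower bound is not available in the paper's model: nothing guarantees that server $\ell$ can host every job type, and your parenthetical about ``rerouting through the set of servers'' has no meaning in a statement about a single server's stalling condition. If some type $j$ cannot fit in $\ell$, then $k_j^\ell=0$ for every $\mathbf{k}^\ell\in\mathcal{K}_\ell$, and $f(\mathbf{k}^{(r)\ell}(t'),\mathbf{Q}(t'))$ can be arbitrarily small relative to $\|\mathbf{Q}(t')\|$ (take $Q_j$ large and the other coordinates small). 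Your inequality then forces $C$ to depend on $\mathbf{Q}(t_{(a)})$, whereas the lemma demands a universal constant. The paper's hyperplane-angle device sidesteps this entirely: it compares $\tilde{\mathbf{k}}^\ell$ against each fixed $\mathbf{k}^\ell$ separately and never needs a uniform lower bound on the max weight.

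\textbf{A secondary point.} Your bookkeeping of $B_j$ is unnecessary. In the paper, $\mathbf{D}(t_{(a)},t)$ counts departures \emph{from the queue}, not service completions, so $\mathbf{Q}(t')-\mathbf{Q}(t_{(a)})=\mathbf{A}(t_{(a)},t')-\mathbf{D}(t_{(a)},t')$ holds exactly. This removes both the accounting of empty slots and stalled-to-active transitions and the additive $c_0$ you were forced to carry.
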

}
\begin{proof}

In this proof, we use the inner-product notation to represent the function
$f$ defined in (\ref{eq-f}), i.e
$f(\mathbf{k}^\ell, \mathbf{Q}(t)) = \langle \mathbf{k}^\ell, \mathbf{Q}(t) \rangle$,
to make the vector interpretation easier.

\change{
At time $t_{(a)}$ when server becomes active, its configuration is set to
$\tilde{\mathbf{k}}^{\ell}(t_{(a)})$ which by Definition~2 satisfies
\begin{equation} \label{ineq1}
\langle \tilde{\mathbf{k}}^{\ell}(t_{(a)}) - r\mathbf{k}^\ell, \mathbf{Q}(t_{(a)})
 \rangle \ge 0;\ \forall \mathbf{k}^\ell \in \mathcal{K}^\ell.
\end{equation}
}
For the server to become stalled for the first time at job departure time
$t_{(s)}>t_{(a)}$, the condition (\ref{st_cond}) should hold for the first
time at departure time $t_{(s)}$. This implies that at time $t_{(s)}$,
\change{
\begin{equation} \label{ineq2}
\exists \mathbf{k^\ell} \in \mathcal{K}^{\ell}: \langle
\tilde{\mathbf{k}}^{\ell}(t_{(a)}) - \beta \mathbf{k}^\ell, \mathbf{Q}(t_{(s)})
\rangle < 0,
\end{equation}
which is clearly satisfied by at least the choice of
$\mathbf{k^\ell}=\mathbf{k}^{(r) \ell}(t_{(s)})$
($r$-max weight configuration at time $t_{(s)}$).
}
Hence, as a \textit{sufficient} condition, the server will certainly never get stalled (it remains active) during $[t_{(a)},t_{(s)})$ if at any time $t \in [t_{(a)},t_{(s)})$
\begin{equation} \label{ineq3}
\forall \mathbf{k}^\ell \in \mathcal{K}^\ell:\ \langle
\tilde{\mathbf{k}}^{\ell}(t_{(a)}) - \beta \mathbf{k}^\ell, \mathbf{Q}(t) \rangle \ge 0.
\end{equation}

Figure \ref{fig-lem} gives a visualization of the boundaries
of the Inequalities (\ref{ineq1}) and (\ref{ineq2}), in two dimensions. 
One can see that if $\beta=r$ the boundaries
will be identical, while as $\beta$ becomes less than $r$,
and approaches $0$, the gap between the boundaries becomes
wider, and server $\ell$ stalls less frequently.
Given a fixed $\mathbf{k}^\ell$, the boundaries are hyperplanes
with respect to variable $\mathbf{Q}$ and the angle between them,
as highlighted in Figure \ref{fig-lem}, is
\begin{equation} \label{eq-theta}
\theta_{\mathbf{k}^\ell} = \arccos{\frac{\langle
		\tilde{\mathbf{k}}^{\ell}(t_{(a)}) - r\mathbf{k}^\ell,
  \tilde{\mathbf{k}}^{\ell}(t_{(a)}) - \beta \mathbf{k}^\ell \rangle}
{\left \Vert \tilde{\mathbf{k}}^{\ell}(t_{(a)}) - r\mathbf{k}^\ell \right \Vert \left
  \Vert \tilde{\mathbf{k}}^{\ell}(t_{(a)}) - \beta \mathbf{k}^\ell  \right \Vert}} > 0.
\end{equation}
\change{
This implies that the server will certainly remain active during
$[t_{(a)}, t)$ as long as the change in the queue size vector $\mathbf{Q}(t_{(a)})$,
due to arrivals and departures during $[t_{(a)}, t)$, does not move it from
the green region to the red region, a distance of length $L$ as highlighted
in Figure~\ref{fig-lem}. Since distance $L$ is at least
$\sin(\theta_{\mathbf{k}^\ell}) \left \Vert \mathbf{Q}(t_{(a)})\right \Vert$,
the server is guaranteed to remain active, if the change in the norm of the
queue size vector is less than this quantity. 
This should be true for every possible choice of
$\mathbf{k}^\ell$, i.e.,
$\left \Vert \mathbf{Q}(t_{(a)}) - \mathbf{Q}(t) \right \Vert <
\sin\left(\min_{\mathbf{k}^\ell \neq \tilde{\mathbf{k}}^{\ell}(t_{(a)}) } \theta_{\mathbf{k}^\ell} \right)
\left \Vert \mathbf{Q}(t_{(a)})\right \Vert$,
or equivalently
}
\be \label{eq:ca}
\left \Vert \mathbf{A} - \mathbf{D} \right \Vert <
C_a \left \Vert \mathbf{Q}(t_{(a)}) \right \Vert,
\ee
where $C_a = \sin\left(\min_{\mathbf{k}^\ell \in \mathcal{K}_\ell,
	\mathbf{k}^\ell \neq \tilde{\mathbf{k}}^{\ell}(t_{(a)}) }
	\theta_{\mathbf{k}^\ell} \right).$
Note that $C_a$ is a \textit{strictly positive constant}, because $r > \beta > 0$
and $\mathbf{k}^\ell  \nparallel  \tilde{\mathbf{k}}^{\ell}(t_{(a)})$
($\nparallel$ means not parallel).
The case $\mathbf{k}^\ell  \parallel  \tilde{\mathbf{k}}^{\ell}(t_{(a)})$
never happens. To arrive at a contradiction, suppose $\mathbf{k}^\ell  \parallel  \tilde{\mathbf{k}}^{\ell}(t_{(a)})$, which implies
$\tilde{\mathbf{k}}^{\ell}(t_{(a)}) = C_k \mathbf{k}^\ell $ for some
constant $C_k$. On the other hand by (\ref{ineq1}),
$\langle\tilde{\mathbf{k}}^{\ell}(t_{(a)}), \mathbf{Q}(t_{(a)}) \rangle \ge
r \langle\mathbf{k}^\ell , \mathbf{Q}(t_{(a)}) \rangle$.
Therefore it holds that $C_k\ge r > \beta$ and
\begin{equation}
\begin{aligned}
&\langle\tilde{\mathbf{k}}^{\ell}(t_{(a)}), \mathbf{Q}(t) \rangle =
C_k \langle\mathbf{k}^\ell, \mathbf{Q}(t) \rangle \ge \beta \langle\mathbf{k}^\ell,
\mathbf{Q}(t) \rangle,
\end{aligned} \nonumber
\end{equation}
which implies $\langle\tilde{\mathbf{k}}^{\ell}(t_{(a)}) - \beta \mathbf{k}^\ell, \mathbf{Q}(t) \rangle \ge 0$,
so inequality (\ref{ineq2}) is never true and configuration can never change to $\mathbf{k}^\ell$.

Note that
$\left \Vert \mathbf{A} - \mathbf{D} \right \Vert \le
\left \Vert \mathbf{A} \right \Vert + \left \Vert \mathbf{D} \right \Vert \le
\sqrt{J}(\left \Vert \mathbf{A} \right \Vert_\infty + \left \Vert \mathbf{D} \right \Vert_\infty)$. Thus
a stricter condition than (\ref{eq:ca}) that ensures the server remains active during $[t_{(a)},t)$ is the one given by the statement of Lemma
by choosing $C = \frac{C_a}{\sqrt{J}}$.

\end{proof}

\begin{figure}[t]
	\centering
	\begin{minipage}[t]{0.41\textwidth}
		\centering
		\includegraphics[width=0.92\columnwidth]{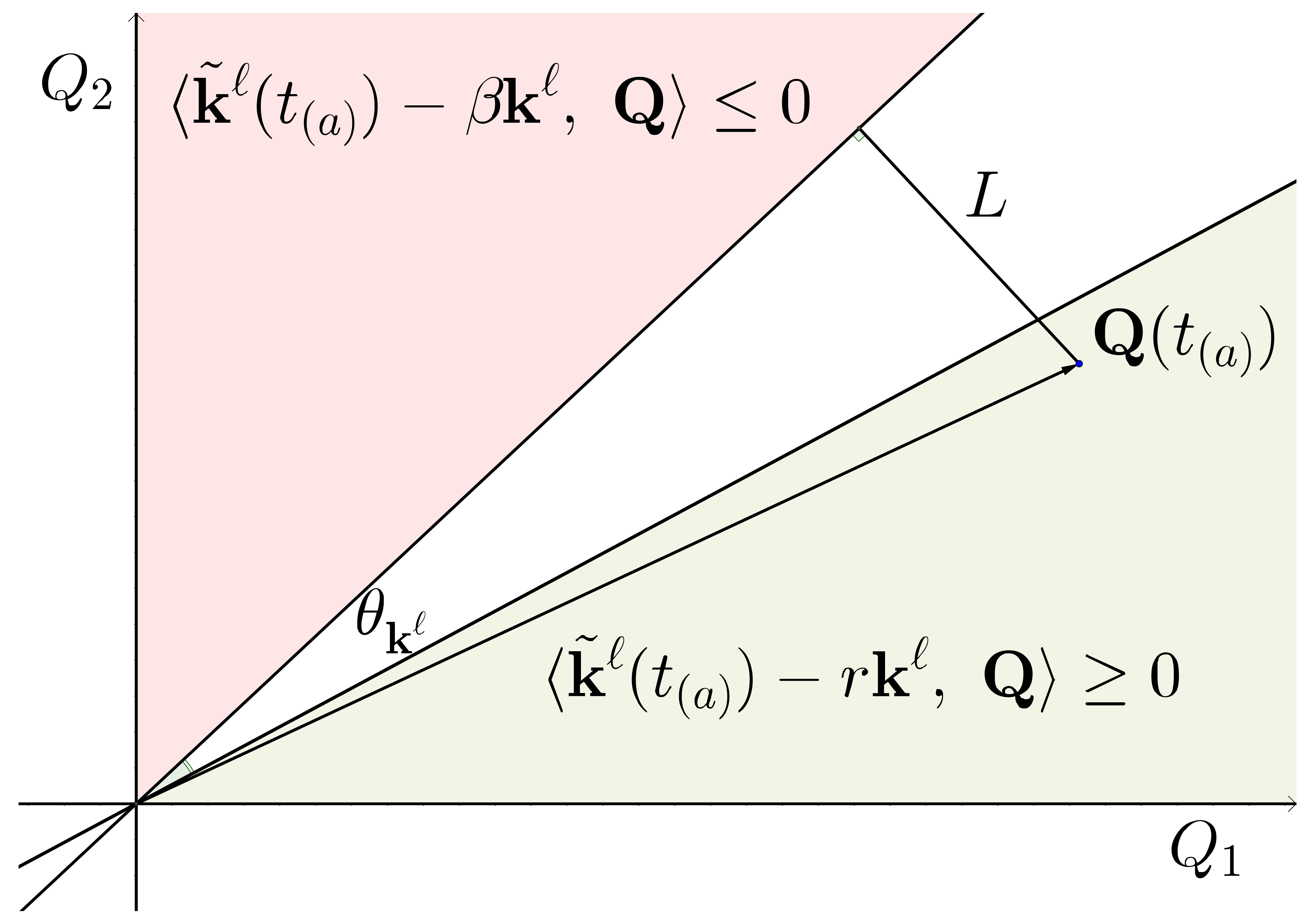}
		\caption{
			\change{
				Illustration of proof of Lemma~\ref{lem_active}	for $2$ dimensions.
				When server becomes active, queue size vector $\mathbf{Q}(t_{(a)})$
				is in the green region. Server will stall if the queue size vector
				reaches the red region for a configuration $\mathbf{k}^\ell$.
			}
		}
		\label{fig-lem}
	\end{minipage}\hfill \vspace{0pt}
	\begin{minipage}[t]{0.56\textwidth}
		\centering
		\includegraphics[width=1.0\columnwidth]{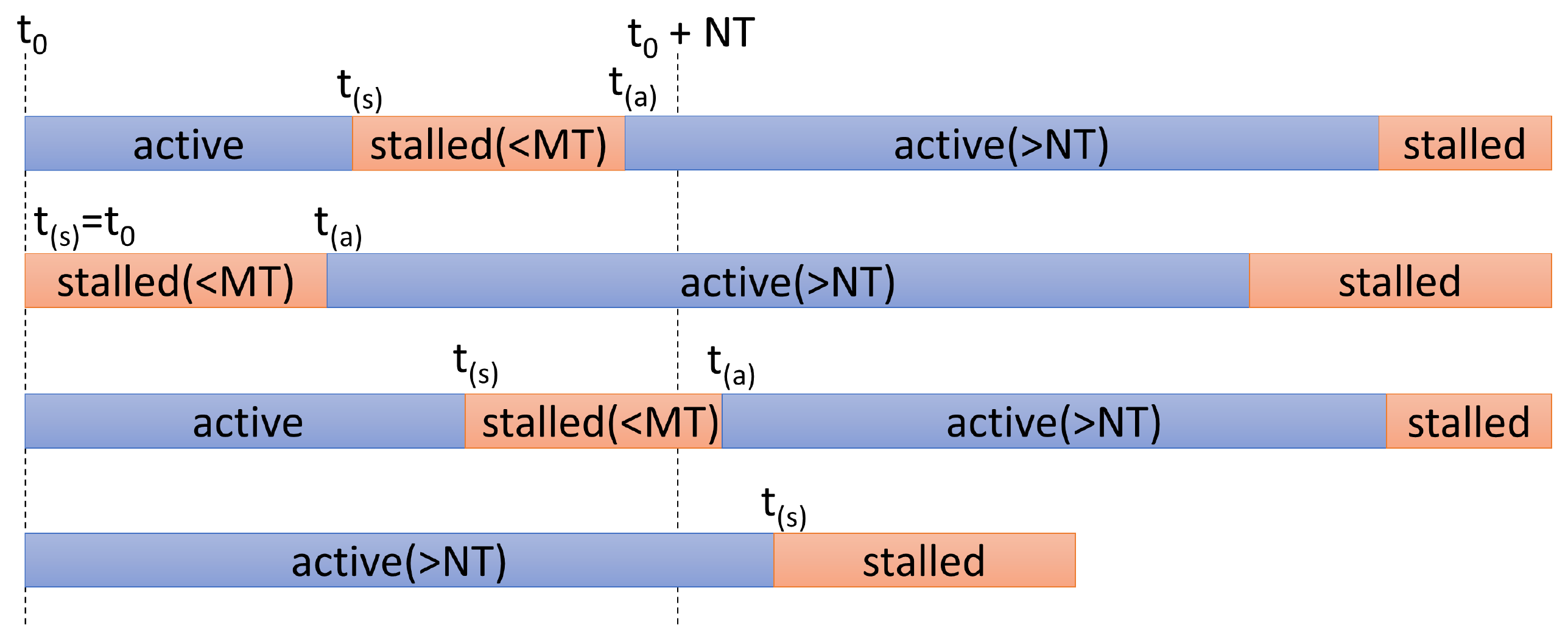}
		\caption{A subset of event $E_{\mathbf{S}(t_0),M,N}$. Any server stalls for `at most' $MT$ amount 
			of time and is active for `at least' $NT$ amount of time afterwards. 
			All possible cases are illustrated above. 
			$t_{(s)}$ ($\geq t_0$) is the entrance time 
			to a stalled period, and $t_{(a)}$ is the entrance time to the subsequent active period).}
		\label{fig-emn}
	\end{minipage}
\end{figure}

Next, we bound the duration of time that servers are active simultaneously during
an interval $[t_0, t_0+NT]$. Define $E_{\mathbf{S}(t_0),M,N}$ as the event
that in this time interval, every server will be stalled at most once
and for at most $MT$ time duration, for some positive constant $M$, 
given the initial state $\mathbf{S}(t_0)$. Note that this will imply 
that the total accumulative amount of time that at least one server 
is stalled in the time interval is less than $LMT$. We show that 
$E_{\mathbf{S}(t_0),M,N}$ is almost certain for large enough values 
of $M$ and $\Vert\mathbf{Q}(t_0)\Vert$.
\begin{proposition}\label{prop:Mq}
Given any $\epsilon \in (0,1)$, there are constants $C_1$ and $C_2$ such that
$\mathds{P}(E_{\mathbf{S}(t_0),M,N}) > 1 - \epsilon$, if
\begin{equation} \label{Mq-cond-def}
\begin{aligned}
M > - \log \left(\epsilon\right) + C_1; \ \left \Vert \mathbf{Q}(t_0) \right \Vert
  > \frac{N}{\epsilon} C_2.
\end{aligned}
\end{equation}
\end{proposition}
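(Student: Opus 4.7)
My plan is to decompose the complementary event $E_{\mathbf{S}(t_0),M,N}^c$ into two sources of failure: (a) some server experiences more than one stalled period in $[t_0,t_0+NT]$, and (b) some stalled period in the interval lasts longer than $MT$. Source (a) will be ruled out on a high-probability event via Lemma~\ref{lem_active} combined with a Markov-inequality concentration argument on total arrivals and queue-departures over the interval, giving the condition on $\|\mathbf{Q}(t_0)\|$. Source (b) will be controlled by the exponential tail of service times, giving the condition on $M$.

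For source (a), I would let $N_A$ be the total system arrivals and $N_D$ the total number of jobs leaving the queues during $[t_0,t_0+NT]$, and set $K_{\max}:=\max_{\ell}\max_{\mathbf{k}^\ell\in\mathcal{K}_\ell}\|\mathbf{k}^\ell\|_1$ (finite by the bounded server capacity). Since arrivals are Poisson of rate at most $\lambda_{\max}$ per type and at most $LK_{\max}$ jobs are in service simultaneously at per-job rate at most $\mu_{\max}$, one has $\mathds{E}[N_A+N_D]\leq c_0 NT$ for a constant $c_0$ depending only on system parameters. Markov's inequality then gives $\mathds{P}\{N_A+N_D>2c_0NT/\epsilon\}\leq\epsilon/2$; call the complementary event $G$. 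On $G$, using $\|\mathbf{Q}(t)\|_1\geq\|\mathbf{Q}(t_0)\|_1-(N_A+N_D)$ together with the norm inequalities $\|\cdot\|_2\leq\|\cdot\|_1\leq\sqrt{J}\|\cdot\|_2$, a suitable choice $C_2=O(T\sqrt{J}/C)$ (with $C$ the constant from Lemma~\ref{lem_active}) guarantees both $\|\mathbf{Q}(t_{(a)})\|_2\geq\|\mathbf{Q}(t_0)\|_2/2$ for every $t_{(a)}\in[t_0,t_0+NT]$ and $\|\mathbf{A}(t_{(a)},t)\|_\infty+\|\mathbf{D}(t_{(a)},t)\|_\infty\leq N_A+N_D<C\|\mathbf{Q}(t_{(a)})\|$ whenever $\|\mathbf{Q}(t_0)\|>C_2 N/\epsilon$. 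Lemma~\ref{lem_active} then forces any server that enters an active period in the interval to remain active through $t_0+NT$, hence on $G$ every server experiences at most one stalled period.

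For source (b), I would note that every stalled period begins with at most $K_{\max}$ jobs, each of whose remaining service times are independent exponentials with rates $\mu_j\geq 1/T$. By the memoryless property and a union bound over the jobs, the probability any single stalled period exceeds $MT$ is at most $K_{\max}e^{-M}$. Since under $G$ there are at most $L$ stalled periods in $[t_0,t_0+NT]$, applying the strong Markov property at each stalled-period onset and taking a union bound over servers gives $\mathds{P}\{\exists\text{ stalled period of duration}>MT\mid G\}\leq LK_{\max}e^{-M}$. Combining,
\[\mathds{P}(E_{\mathbf{S}(t_0),M,N}^c)\leq\mathds{P}(G^c)+LK_{\max}e^{-M}\leq\epsilon/2+\epsilon/2=\epsilon,\]
provided $M>-\log\epsilon+\log(2LK_{\max})=:-\log\epsilon+C_1$.

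The main obstacle is the tight coupling inside the concentration step: the single good event $G$ must simultaneously ensure (i) that $\|\mathbf{Q}(t_{(a)})\|$ stays large enough for Lemma~\ref{lem_active} to apply at an arbitrary active-onset time $t_{(a)}$ in the interval, and (ii) that the subsequent $\ell_\infty$-fluctuations of arrivals and departures are dominated by $C\|\mathbf{Q}(t_{(a)})\|$. Careful accounting of the $\ell_1$/$\ell_2$/$\ell_\infty$ norm conversions and absorption of $T$ and the other system constants delivers the clean form of $C_2$. A secondary subtlety is validating the memoryless application for stalled periods despite conditioning on $G$; this is standard using the strong Markov property at each stalled-period onset.
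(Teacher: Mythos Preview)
Your approach is correct and uses the same key ingredients as the paper: Lemma~\ref{lem_active} for the active-period length, Markov's inequality on arrivals/departures over the window, and exponential tail bounds on service times for the stalled-period length. The paper organizes the bound differently: rather than introducing a single global concentration event $G$ and bounding $\mathds{P}(E^c)$ by a union bound, it directly lower-bounds $\mathds{P}(E_{\mathbf{S}(t_0),M,N})$ by a product of five factors---one factor $(1-e^{-M})^{LK_{\max}}$ for the $M$-condition, and four factors for the $\|\mathbf{q}_0\|$-condition obtained by first conditioning on $\|\mathbf{Q}(t_{(a)}^\ell)\|\geq C_c\|\mathbf{q}_0\|$ and then applying Markov's inequality per job type---and requires each factor to exceed $(1-\epsilon)^{1/5}$. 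Your single global event $G$ is arguably cleaner and avoids the per-server product structure.

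One minor wrinkle in your exposition: the conditional bound $\mathds{P}\{\exists\text{ stalled period}>MT\mid G\}\leq LK_{\max}e^{-M}$ is not literally justified by the strong Markov property, since $G$ depends on the entire interval $[t_0,t_0+NT]$ and conditioning on it destroys the memoryless structure at the stalled-period onset. The final inequality $\mathds{P}(E^c)\leq\mathds{P}(G^c)+LK_{\max}e^{-M}$ is nonetheless correct if you instead bound $\mathds{P}(E^c\cap G)$: on $G$ each server has at most one stalled period, so $E^c\cap G\subseteq\bigcup_\ell\{\text{first stalled period of }\ell\text{ exceeds }MT\}$, and for the latter events you can apply the strong Markov property unconditionally at the first stalled-onset (or at $t_0$ if the server is already stalled).
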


\begin{proof}
	 A sketch of the proof is as follows:
	\begin{enumerate}
	\item The number of jobs in any server is bounded and
	their expected time of service is also bounded, so once
	a server enters a stalled period, it will almost
	certainly enter an active period again in finite time.
	\item Using Lemma~\ref{lem_active}, we can argue that the minimum
	expected length of an active period
	is proportional to the length of queue size vector at the beginning
	of the active period.
	\item To bound the probability of event $E_{\mathbf{S}(t_0),M,N}$, it
	suffices to consider its following subevent: if a server
	becomes stalled at a time in the interval $[t_0, t_0 + NT]$,
	it becomes empty within $MT$ amount of time,
	and once the server becomes active, it remains active for at least
	$NT$ amount of time. This event is a subset of $E_{\mathbf{S}(t_0),M,N}$, 
	as illustrated in Figure~\ref{fig-emn},
	which considers all possible transition times between active
	and stalled periods in the time interval $[t_0, t_0+NT]$.
	\end{enumerate}

The rest of the proof follows from basic probability calculations. The detailed proof can be found in Appendix~\ref{proof:Mq}.
\end{proof}


\subsection{Lyapunov analysis}
To prove the stability of the algorithm, we will use the following Lyapunov function
\begin{equation}\label{lyap}
V(t) = \sum_j \frac{{Q}_j(t)^2}{2\mu_j}.
\end{equation}
Define the infinitesimal generator~\cite{SDE} of the Lyapunov fucntion $V(t)$ as
\begin{equation} \label{eq-infA}
A V(t) := \lim_{u \to 0} \frac{\mathds{E}_{\mathbf{S}(t)}[V(t+u)]-V(t)}{u}
\end{equation}
Then we show the following lemma.
\begin{lemma} \label{lem-inf}
At any time $t$,
\begin{equation} \label{eq-infB}
A V(t) \le  \sum_j [{Q}_j(t)( \rho_j - \sum_\ell I_\ell(t) \tilde{k}_j^\ell(t))]+B_2,
\end{equation}
for a positive constant $B_2$. Recall that $I_\ell(t)$ is the indicator function
defined in the system state (\ref{eq-state}).
\end{lemma}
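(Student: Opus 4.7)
The plan is to compute the infinitesimal generator $AV(t)$ directly by expanding it as a weighted sum over all Markov transitions that change a component of the queue vector $\mathbf{Q}(t)$, and then to separate the $Q_j$-linear contributions from bounded remainders which are collected into $B_2$.

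Concretely, I would write $AV(t) = \sum_j \frac{1}{2\mu_j} A(Q_j^2)(t)$ and enumerate the events affecting $Q_j$. A type-$j$ arrival happens at rate $\lambda_j$ and increments $Q_j$ by one unless an empty slot for type $j$ exists in some active server, in which case $Q_j=0$ already and the arrival is routed directly to that slot; this contributes at most $\lambda_j(2Q_j+1)$. A type-$j$ service completion in active server $\ell$ that remains active causes $Q_j \to Q_j-1$ via a replacement drawn from queue $j$, whenever $Q_j>0$. Here I would invoke the observation made right after equation (7): if $Q_j>0$ then no empty type-$j$ slot can exist in an active server, so $\bar{k}^\ell_j(t)=\tilde{k}^\ell_j(t)$, and the rate of the event is $\mu_j\,\tilde{k}^\ell_j(t)\,I_\ell(t)$. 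This contributes $-\mu_j\,\tilde{k}^\ell_j(t)\,I_\ell(t)(2Q_j-1)$ to $A(Q_j^2)$ whenever $Q_j>0$.

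Summing over $\ell$, dividing by $2\mu_j$, and summing over $j$ collects these pieces into
\begin{equation*}
AV(t) \;\leq\; \sum_j \rho_j Q_j(t) \;-\; \sum_j Q_j(t)\sum_\ell I_\ell(t)\,\tilde{k}_j^\ell(t) \;+\; B_2,
\end{equation*}
where $B_2$ absorbs the $\tfrac{\lambda_j}{2\mu_j}$ terms arising from the arrival jump variances and the $\tfrac{1}{2}\tilde{k}_j^\ell(t)\,I_\ell(t)$ terms arising from the departure jump variances; both are uniformly bounded in the state because arrival rates are bounded and feasible configurations in $\mathcal{K}_\ell$ form a finite set dictated by server capacities. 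Re-activation transitions (a stalled server that becomes empty and instantaneously schedules a bulk of jobs from the queues) produce only non-positive jumps of $V$ and can therefore be dropped from an upper bound on $AV$.

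The main obstacle I anticipate is careful handling of the stalling rule (5): when that rule fires at a job departure, the server transitions to stalled instead of refilling its slot, so the actual $Q_j$ decrement does not occur even though the bound above charges the full service rate $I_\ell\,\tilde{k}^\ell_j$. The stalling jump itself leaves $\mathbf{Q}$ unchanged and hence contributes $0$ to $AV$; the way to reconcile the accounting is to pair each stall with its subsequent re-activation, whose bulk scheduling produces a non-positive $V$-jump that offsets the over-counted service. Verifying that the residual from this pairing fits into a state-independent constant $B_2$, rather than scaling with $\|\mathbf{Q}\|$, is the delicate step in the argument and is where boundedness of $|\mathcal{K}_\ell|$ and $\mu_{\max}$ are crucially used.
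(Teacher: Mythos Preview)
Your approach—computing the generator by enumerating jump rates and jump sizes—is the same as the paper's (which expands $Q_j(t+u)^2 - Q_j(t)^2$ and takes $u\to 0$), and you arrive at the same arrival and service contributions, with the jump-variance remainders collected into $B_2$.

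The obstacle you flag concerning the stalling rule is real, and the paper's proof does not address it: the paper writes the cross term as $2Q_j(t)\bigl(\lambda_j - \sum_\ell I_\ell(t)\tilde{k}_j^\ell(t)\mu_j\bigr)u$ and remarks only that departures during \emph{stalled} periods have been ignored—that is the harmless direction. What you identify is the opposite: when server $\ell$ is active but condition~(\ref{st_cond}) already holds in the current state, the next departure from $\ell$ triggers a stall with no refill, so the instantaneous rate at which $Q_j$ decrements via server $\ell$ is $0$, not $\mu_j\tilde{k}_j^\ell$. Your proposed pairing with the subsequent reactivation cannot repair a \emph{pointwise} bound on $AV(t)$: the generator is a function of the current state only, and in such a state the claimed bound over-credits a negative term $-f(\tilde{\mathbf{k}}^\ell,\mathbf{Q}(t))$ that scales with $\|\mathbf{Q}(t)\|$ and therefore cannot be absorbed into a state-independent $B_2$. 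A clean repair is to replace $I_\ell(t)$ in the lemma statement by the indicator that $\ell$ is active \emph{and} condition~(\ref{st_cond}) fails; the pointwise bound is then valid, and the downstream use in Proposition~\ref{prop:Nq} still goes through, since the time spent in the state ``active but (\ref{st_cond}) holds'' is at most one inter-departure interval per stall and can be lumped with the stalled duration controlled by $E_{\mathbf{S}(t_0),M,N}$. Alternatively one can bypass the pointwise lemma and bound the integrated drift $\mathds{E}_{\mathbf{S}(t_0)}[V(t_f)-V(t_0)]$ directly, which is where your pairing idea becomes legitimate.
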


\begin{proof}
See Appendix~\ref{proof:lem-inf} for the proof.	
\end{proof}

In Algorithm \ref{alg1}, transition from active to stalled could happen only at the departure times of the jobs hosted in the server. Nevertheless, the weight of
the server configuration at any time in the active period, is still `roughly'
at least $\beta r$ fraction of the max weight configuration.
The following lemma formalizes this statement.
\change{
\begin{lemma}\label{prop:weight_const}
	Suppose server $\ell$ is active and has configuration
	$\tilde{\mathbf{k}}^\ell $ for the duration of its
	active period. Let $E_{B_1, \ell}$ be the event that
	$f(\tilde{\mathbf{k}}^\ell, \mathbf{Q}(t)) >
	\beta r f(\mathbf{k}^\ell, \mathbf{Q}(t)) - B_1$,
	for any $\mathbf{k}^\ell \in \mathcal{K}^\ell$ and at
	any time $t$ in the active period. Then given any
	$\epsilon \in (0,1)$, there exist constants $C_3, C_4>0$
	such that $\mathds{P}(E_{B_1, \ell}) > 1-\epsilon$
	if $B_1 > -C_3\log{\epsilon} + C_4$.
\end{lemma}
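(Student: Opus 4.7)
The plan is to introduce, for each $\mathbf{k}^\ell \in \mathcal{K}^\ell$, the slack process
\[ g_{\mathbf{k}^\ell}(t) := f(\tilde{\mathbf{k}}^\ell, \mathbf{Q}(t)) - \beta r\, f(\mathbf{k}^\ell, \mathbf{Q}(t)), \]
so that $E_{B_1,\ell}$ is exactly the event that $g_{\mathbf{k}^\ell}(t) > -B_1$ for all $\mathbf{k}^\ell$ and all $t$ in the active period. The key structural observation is that $g_{\mathbf{k}^\ell}(t) \geq 0$ holds automatically at a natural sequence of \emph{checkpoint times}: the start $t_{(a)}$ of the active period and every job departure from server~$\ell$ inside the active period. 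At $t_{(a)}$ this follows because $\tilde{\mathbf{k}}^\ell = \mathbf{k}^{(r)\ell}(t_{(a)})$ is $r$-max, giving $f(\tilde{\mathbf{k}}^\ell, \mathbf{Q}(t_{(a)})) \geq r\, f(\mathbf{k}^\ell, \mathbf{Q}(t_{(a)})) \geq \beta r\, f(\mathbf{k}^\ell, \mathbf{Q}(t_{(a)}))$. At any subsequent departure time $t_d$ inside the active period the stall condition~(\ref{st_cond}) must have failed (otherwise the server would have switched to stalled), and combining its negation with the $r$-max property of $\mathbf{k}^{(r)\ell}(t_d)$ yields $g_{\mathbf{k}^\ell}(t_d) \geq 0$ simultaneously for every $\mathbf{k}^\ell \in \mathcal{K}^\ell$. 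Thus the task reduces to controlling the negative excursions of $g_{\mathbf{k}^\ell}$ on each inter-checkpoint interval.

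Because $g_{\mathbf{k}^\ell}$ is linear in $\mathbf{Q}(t)$ with coefficients bounded in absolute value by a constant $2K_{\max}$ (where $K_{\max}$ is a uniform bound on configuration entries), each event that modifies a component of $\mathbf{Q}$ shifts $g_{\mathbf{k}^\ell}$ by at most $2K_{\max}$. Between two consecutive departures from server $\ell$, $\mathbf{Q}$ changes only through external arrivals and through schedulings triggered by departures from other servers, and the number of such events is stochastically dominated by a Geometric random variable with success parameter $p^\star = \mu_{\min}/(\mu_{\min} + \Lambda^\star)$, where $\Lambda^\star$ is a deterministic bound on the total rate of all non-$\ell$-departure events. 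This produces an exponential tail of the form
\[
\mathds{P}\Bigl(\sup_{t \in [t_d, t_{d+1})} \bigl|g_{\mathbf{k}^\ell}(t) - g_{\mathbf{k}^\ell}(t_d)\bigr| > B\Bigr) \leq c_1\, e^{-c_2 B},
\]
with constants $c_1, c_2$ depending only on system parameters.

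The final step is to union-bound this tail over the finite set $\mathcal{K}^\ell$ and over every inter-checkpoint interval inside the active period. Since this lemma is invoked inside the main Lyapunov argument on a window of length at most $NT$, the number of inter-checkpoint intervals is, with high probability, at most a deterministic constant times $N$ (because the aggregate event rate is bounded). Taking $C_3 = 2K_{\max}/c_2$ and letting $C_4$ absorb the logarithm of $c_1 |\mathcal{K}^\ell|$ times that deterministic interval count then yields the claimed implication. The main obstacle I anticipate is ensuring that $C_4$ does not inherit a dependence on $\Vert\mathbf{Q}(t_{(a)})\Vert$ from the random number of departures in an arbitrarily long active period; the cleanest resolution is either to restrict the analysis to the finite window $[t_0, t_0+NT]$ used in Proposition~\ref{prop:Mq}, or to apply a Doob-type maximal inequality to the centered martingale component of $g_{\mathbf{k}^\ell}$ between checkpoints so that the supremum-in-time can be controlled without paying a per-event union bound cost.
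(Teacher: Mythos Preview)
Your structural insights are correct and match the paper's: the slack $g_{\mathbf{k}^\ell}$ is nonnegative at every departure from server $\ell$ that does not trigger a stall, and between such departures the number of queue-changing events is stochastically dominated by a geometric random variable, so each inter-checkpoint excursion stays above $-B$ with probability $1 - c_1 e^{-c_2 B}$.

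The gap is exactly the one you flag, and neither of your proposed fixes closes it. A union bound over inter-checkpoint intervals multiplies the per-interval tail by the number of intervals, and that number scales with the active-period length, which by Lemma~\ref{lem_active} scales with $\Vert \mathbf{Q}(t_{(a)})\Vert$. Restricting to the window $[t_0,t_0+NT]$ makes $C_4$ depend on $N$, so you no longer prove the lemma as stated (whose constants must be universal); a Doob-type inequality controls the supremum within one interval but does nothing about the number of intervals.

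The paper sidesteps this by reorganizing the bookkeeping around \emph{excursions} rather than checkpoints. It partitions queue space into $R_0=\{g_{\mathbf{k}^\ell}>0\ \forall\mathbf{k}^\ell\}$, $R_2=\{g_{\mathbf{k}^\ell}\le -B_1\ \text{for some }\mathbf{k}^\ell\}$, and $R_1$ the strip in between, and looks at the successive times $t_i$ when $\mathbf{Q}$ crosses from $R_0$ into $R_1$. At each such crossing, with probability at least $C_f = \mu_{\min}/(\mu_{\min}+J\lambda_{\max}+LK_{\max}\mu_{\max})$ the very next event is a departure from server $\ell$, which terminates the excursion (and the whole active period). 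Hence the probability of ever reaching the $i$-th crossing is at most $(1-C_f)^{i-1}$, and the probability that the $i$-th excursion reaches $R_2$ before a server-$\ell$ departure is at most $(1-C_f)^{\lceil B_1/K_{\max}^2\rceil -1}$, uniformly in $i$. Summing gives
\[
1-\mathds{P}(E_{B_1,\ell}) \le \sum_{i\ge 1}(1-C_f)^{i-1}\,(1-C_f)^{\lceil B_1/K_{\max}^2\rceil -1}
= \frac{(1-C_f)^{\lceil B_1/K_{\max}^2\rceil -1}}{C_f},
\]
and the geometric series absorbs the potentially unbounded number of intervals without any dependence on $N$ or $\Vert\mathbf{Q}(t_{(a)})\Vert$. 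This is the missing idea: the crossings are not independent events to be union-bounded, they are nested events whose probabilities already decay geometrically.
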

}
\begin{proof}
See Appendix~\ref{proof:weight_const} for the proof.	
\end{proof}

Equipped with the Lemmas and Propositions above,
we analyze the drift of the Lyapunov function
in the following proposition.
\change{
\begin{proposition}\label{prop:Nq}
Consider the Lyapunov function $V(t)$ defined in (\ref{lyap}).
Given the workload $\bm{\rho}$ inside the $r\beta$ fraction of the
capacity region $\mathcal{C}$, $t_f=t_0+NT$, and any $\delta>0$,
\[\mathds{E}_{\mathbf{S}(t_0)} \left[V(t_f) - V(t_0)\right] < -\delta \]
if
\begin{equation} \label{Nq-cond-def}
\begin{aligned}
N > M C_5,\ \left \Vert \mathbf{Q}(t_0) \right \Vert > C_6(M, N, \delta),
\end{aligned}
\end{equation}
where $C_5$ is a constant and $C_6$ is a function of $M$, $N$, $\delta$.
\end{proposition}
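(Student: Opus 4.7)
The plan is to integrate the drift bound from Lemma~\ref{lem-inf} over $[t_0,t_f]$ and split the integration interval into the subset $\mathcal{T}_a$ on which \emph{all} $L$ servers are simultaneously active and its complement $\mathcal{T}_s$. On $\mathcal{T}_a$ I will show that $\sum_j Q_j(t)(\rho_j-\sum_\ell \tilde k_j^\ell(t))$ is sharply negative, of order $-\|\mathbf{Q}(t)\|$, while on $\mathcal{T}_s$ I use only the elementary bound $\sum_j Q_j(t)(\rho_j-\sum_\ell I_\ell(t)\tilde k_j^\ell(t))\leq \rho_{\max}\|\mathbf{Q}(t)\|_1$ (dropping the non-negative service term). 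Proposition~\ref{prop:Mq} controls $|\mathcal{T}_s|\leq LMT$ with high probability on the event $E_{\mathbf{S}(t_0),M,N}$, and a standard Poisson tail bound on the arrivals in $[t_0,t_f]$ keeps $\|\mathbf{Q}(t)\|_1$ close to $\|\mathbf{Q}(t_0)\|_1$ uniformly in $t$, so the good-event contribution integrates cleanly.

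For the sharp negative bound on $\mathcal{T}_a$, since $\bm{\rho}\in\mathrm{int}(r\beta\,\mathcal{C})$ there exists $\eta>0$ with $\bm{\rho}/(r\beta)+\eta\mathbf{1}\in\mathcal{C}$. Applying Lemma~\ref{prop:weight_const} to each server $\ell$ (with $\mathbf{k}^\ell$ chosen to be a true max-weight configuration), summing over $\ell$, and using that $\sum_\ell \max_{\mathbf{k}^\ell\in\mathcal{K}_\ell}f(\mathbf{k}^\ell,\mathbf{Q}(t))\geq \langle\mathbf{x},\mathbf{Q}(t)\rangle$ for every $\mathbf{x}\in\mathcal{C}$ (by definition of $\mathcal{C}$, since the maximum over $Conv(\bar{\mathcal{K}}_\ell)$ is attained at an extreme point and the zero configuration contributes weight $0$), and plugging in $\mathbf{x}=\bm{\rho}/(r\beta)+\eta\mathbf{1}$, yields, on the intersection event $\mathcal{G}_1:=\bigcap_\ell E_{B_1,\ell}$,
\[
\sum_j Q_j(t)\Bigl(\rho_j-\sum_\ell \tilde k_j^\ell(t)\Bigr)\leq -r\beta\eta\|\mathbf{Q}(t)\|_1 + L B_1.
\]
Integrating this together with the elementary bound on $\mathcal{T}_s$, and using $|\mathcal{T}_s|\leq LMT$ and $\|\mathbf{Q}(t)\|_1\geq \tfrac12\|\mathbf{Q}(t_0)\|$ on the full good event $\mathcal{G}:=E_{\mathbf{S}(t_0),M,N}\cap\mathcal{G}_1\cap\{\sup_{t\in[t_0,t_f]}|\|\mathbf{Q}(t)\|_1-\|\mathbf{Q}(t_0)\|_1|\leq \kappa NT\}$ (valid whenever $\|\mathbf{Q}(t_0)\|\gg NT$), I obtain
\[
\int_{t_0}^{t_f} AV(t)\,dt\leq -\tfrac12 r\beta\eta(NT-LMT)\|\mathbf{Q}(t_0)\|+\rho_{\max}LMT(\|\mathbf{Q}(t_0)\|+\kappa NT)+(LB_1+B_2)NT.
\]

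On the complementary bad event (of probability at most $\epsilon$ by the two lemmas and a Poisson tail bound), I would use the crude quadratic bound $V(t_f)-V(t_0)=O((NT)\|\mathbf{Q}(t_0)\|+(NT)^2)$ --- from the quadratic form of $V$ together with the domination of queue increments by a Poisson process of total rate $\sum_j\lambda_j$ --- weighted by $\epsilon$. Choosing $N$ larger than a suitable multiple of $M$ makes $\tfrac12 r\beta\eta\,NT$ dominate $\rho_{\max}LMT$, which produces the $N>MC_5$ condition; calibrating $\epsilon$ to shrink polynomially in $\|\mathbf{Q}(t_0)\|$ (which in turn forces $B_1,M\sim\log\|\mathbf{Q}(t_0)\|$ through Lemma~\ref{prop:weight_const} and Proposition~\ref{prop:Mq}) pushes the bad-event contribution strictly below the negative leading term; finally, taking $\|\mathbf{Q}(t_0)\|$ large versus $M$, $N$, $\delta$ and the constants yields the threshold $C_6(M,N,\delta)$ that makes the expected drift at most $-\delta$. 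The main technical obstacle I anticipate is precisely this \emph{joint} calibration: Proposition~\ref{prop:Mq} already couples $M$, $\epsilon$, and $\|\mathbf{Q}(t_0)\|$ via $\|\mathbf{Q}(t_0)\|>NC_2/\epsilon$ and $M>-\log\epsilon+C_1$, so one must verify that the logarithmic growth of $B_1$ and $M$ in $1/\epsilon$ remains compatible with the polynomial shrinkage of $\epsilon$ needed to dominate the bad-event's quadratic $\|\mathbf{Q}(t_0)\|^2$ contribution against the good-event's linear-in-$\|\mathbf{Q}(t_0)\|$ negative term.
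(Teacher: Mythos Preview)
Your structural plan matches the paper's almost exactly: apply Dynkin's theorem to the bound of Lemma~\ref{lem-inf}, split the integral according to whether all servers are simultaneously active, invoke Proposition~\ref{prop:Mq} to cap the cumulative stalled duration by $LMT$, and invoke Lemma~\ref{prop:weight_const} together with the max-weight comparison over $\mathcal{C}$ to turn the active-period service term into a strictly negative multiple of the queue norm. The condition $N>MC_5$ then arises exactly as you describe, by making the active-period negative contribution dominate the stalled-period positive one.

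Where you diverge from the paper is in the handling of the bad event and the role of $\epsilon$. You propose to bound the bad-event contribution by a crude term weighted by $\epsilon$ and then let $\epsilon$ shrink polynomially in $\|\mathbf{Q}(t_0)\|$, forcing $M,B_1\sim\log\|\mathbf{Q}(t_0)\|$. This calibration is unnecessary, and your worry about a ``quadratic $\|\mathbf{Q}(t_0)\|^2$'' bad-event contribution is misplaced: the drift is linear in $\|\mathbf{Q}(t_0)\|$ on both events, so even within your framework a \emph{fixed} small $\epsilon$ already suffices. The paper sidesteps the issue more cleanly by observing that the service term $\sum_\ell I_\ell(t)\sum_j Q_j(t)\tilde{k}_j^\ell(t)$ is non-negative, so that
\[
\mathds{E}_{\mathbf{S}(t_0)}\Big[\int_{t_0}^{t_f}\sum_\ell I_\ell(t)\sum_j Q_j(t)\tilde{k}_j^\ell(t)\,dt\Big]\ \geq\ \mathds{P}(E_{\mathbf{S}(t_0),M,N})\,\mathds{E}_{\mathbf{S}(t_0)}\Big[\,\cdot\,\Big|\,E_{\mathbf{S}(t_0),M,N}\Big],
\]
i.e.\ the complement of the good event can simply be dropped from the lower bound (and likewise when further conditioning on the events $E_{B_1,\ell}$). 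The paper then fixes $\epsilon$ once and for all from the slack in the capacity assumption, $\bm{\rho}<(1-\epsilon)r\beta\sum_\ell\mathbf{x}^\ell$, uses that same $\epsilon$ in Proposition~\ref{prop:Mq} and Lemma~\ref{prop:weight_const} to select constants $M$ and $B_1$, and absorbs the resulting probability factors $(1-\epsilon)(1-2\epsilon)(1-3\epsilon)$ into the still-negative coefficients $v_j=\rho_j-(1-\epsilon)(1-2\epsilon)(1-3\epsilon)r\beta\sum_\ell x_j^\ell$. This eliminates the joint-calibration obstacle you anticipated and keeps $M$, $B_1$, and $N$ as genuine constants independent of $\|\mathbf{Q}(t_0)\|$.
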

}

\begin{proof}

Let the initial system state be $\mathbf{S}(t_0)$ with initial
queue size vector $\mathbf{q}_0$ and $t_f=t_0+NT$. Then by application
of Dynkin's Theorem \cite{SDE}, applied to Lemma~\ref{lem-inf},

\change{
\begin{equation}\label{eq:drift}
\begin{aligned}
&\mathds{E}_{\mathbf{S}(t_0)} \left[V(t_f) - V(t_0) \right]
= \mathds{E}_{\mathbf{S}(t_0)}\left[\int_{t=t_0}^{t_f} AV(t) dt\right] \le \\
&\mathds{E}_{\mathbf{S}(t_0)}\left[\int_{t=t_0}^{t_f} \left( \sum_j Q_j(t)\rho_j -
\sum_\ell I_\ell(t) \sum_j Q_j(t) \tilde{k}_j^\ell(t) \right) + B_2 dt \right].
\end{aligned}
\end{equation}
}
Given a workload $\bm{\rho}$ inside the $r\beta$ fraction
of the capacity region, there exists an $\epsilon$
such that $\bm{\rho} < (1-\epsilon)r\beta \sum_\ell \mathbf{x}^\ell$
for $\mathbf{x}^\ell$ in $conv(\mathcal{K}_\ell)$.
We denote by $E_{(a)}(t)$ the event that \textit{all} servers are active
at time $t$, by $E_{(s)}(t)$ the events that \textit{at least one} is stalled
and by $\mathbf{k}^{\star \ell}(t) = ({k^\star}_1^{\ell}, \cdots, {k^\star}_J^{\ell})$
a max weight configuration at time $t$, i.e
$f(\mathbf{k}^{\star \ell}(t), \mathbf{Q}(t))
\geq f(\mathbf{k}^\ell, \mathbf{Q}(t)),\ \forall \mathbf{k}^\ell \in \mathcal{K}_\ell$.
Note that by definition, $\mathbf{k}^{\star \ell}(t)$ is an $r$-max weight
configuration for $r=1$. Recall the definition of event $E_{B_1, \ell}$ in Lemma~\ref{prop:weight_const}. With a minor abuse of notation, we use $E_{(i) B_1, \ell}$ to denote $E_{B_1, \ell}$ in the $i$-th active period during the interval $(t_0, t_f)$, $i=1, 2, \cdots$.  
Then we can bound the second term of the expectation above as
\begin{equation} \label{eq-integralk}
\begin{aligned}
&\mathds{E}_{\mathbf{S}(t_0)}\left[\int_{t=t_0}^{t_f} \sum_\ell I_\ell(t)
\sum_j Q_j(t) \tilde{k}_j^\ell(t) dt \right] \ge^{(a)}
\mathds{E}_{\mathbf{S}(t_0)}\left[\int_{t=t_0}^{t_f} \mathds{1}(E_{(a)}(t)) \sum_\ell \sum_j Q_j(t) \tilde{k}_j^\ell(t) dt\right] \ge^{(b)} \\
&\mathds{P}(E_{\mathbf{S}(t_0),M,N}) 
\mathds{E}_{\mathbf{S}(t_0)}\left[\int_{t=t_0}^{t_f} \mathds{1}(E_{(a)}(t))
\sum_\ell \sum_j Q_j(t) \tilde{k}_j^\ell(t) dt
|E_{\mathbf{S}(t_0),M,N}\right] \ge^{(c)} \\
&(1-\epsilon)\mathds{E}_{\mathbf{S}(t_0)}\left[\int_{t=t_0}^{t_f}
\mathds{1}(E_{(a)}(t))
\sum_\ell \mathds{P}(E_{(1) B_1, \ell} | E_{\mathbf{S}(t_0),M,N})
\mathds{P}(E_{(2) B_1, \ell} | E_{\mathbf{S}(t_0),M,N}, E_{(1) B_1, \ell}) \right.\\
&\left.\left(-B_1 +\sum_j Q_j(t) r\beta {k^\star}_j^{\ell}(t) \right) dt
| E_{\mathbf{S}(t_0),M,N} \right] \ge^{(d)} \\
&(1-\epsilon)\mathds{E}_{\mathbf{S}(t_0)}\left[\int_{t=t_0}^{t_f}
 (1-2\epsilon)(1-3\epsilon) \mathds{1}(E_{(a)}(t)) \left(-L B_1 +\sum_\ell \sum_j
Q_j(t) r\beta x_j^\ell \right) dt
| E_{\mathbf{S}(t_0),M,N} \right].
\end{aligned}
\end{equation}
In the above, Inequality (a) holds because we ignore the sum of positive terms
when some of the servers are in active period.
Inequality (b) follows from conditioning on the event $E_{\mathbf{S}(t_0),M,N}$.
In Inequality (c), we have used the fact that
$\mathds{P}(E_{\mathbf{S}(t_0),M,N}) > 1 - \epsilon$
under Lemma~\ref{prop:Mq}, and also the result of Lemma~\ref{prop:weight_const}
with $\mathbf{k}^\ell$ replaced by the max weight configuration
$\mathbf{k}^{\star \ell}(t)$ at time $t$.
Notice that conditioned on the occurrence of event $E_{\mathbf{S}(t_0),M,N}$,
every server could be at most in two active periods in the interval $[t_0, t_0 +NT]$, hence we only need to consider events $E_{(1) B_1, \ell}$ and $E_{(2) B_1, \ell}$. 
Finally Inequality (d) uses that
$\mathds{P}(E_{(1)B_1, \ell}|E_{\mathbf{S}(t_0),M,N}) > (1-2\epsilon)$,
which can be inferred from the law of total probability and the fact that
$\mathds{P}(E_{B_1, \ell}) > 1 - \epsilon$ (Lemma~\ref{prop:weight_const}) and
$\mathds{P}(E_{\mathbf{S}(t_0),M,N}) > 1 - \epsilon$ (Proposition~\ref{prop:Mq}). Similarly,  
$\mathds{P}(E_{(2) B_1, \ell} | E_{\mathbf{S}(t_0),M,N}, E_{(1) B_1, \ell}) >
1 - 3\epsilon$.
Thus using (\ref{eq:drift}) and (\ref{eq-integralk}),
the drift can be bounded as follows
\begin{equation}\label{eq:drift2}
\begin{aligned}
&\mathds{E}_{\mathbf{S}(t_0)} \left[V(t_0) - V(t_f) \right] \leq \\
&\mathds{E}_{\mathbf{S}(t_0)} \left[\int_{t=t_0}^{t_f}
\mathds{1}(E_{(a)}(t)) \sum_j Q_j(t)
\left(\rho_j - (1-\epsilon)(1-2\epsilon)(1-3\epsilon) r\beta  \sum_\ell x_j^\ell \right)
\: dt | E_{\mathbf{S}(t_0),M,N} \right] \\
& + \mathds{E}_{\mathbf{S}(t_0)}\left[ \int_{t=t_0}^{t_f} \mathds{1}(E_{(s)}(t))
\sum_j Q_j(t)\rho_j dt | E_{\mathbf{S}(t_0),M,N} \right] + (LB_1 + B_2) NT \le \\
&(N-LM)T \mathds{E}_{\mathbf{S}(t_0)}\left[\max_{t_0\leq t \leq t_f}
{\sum_j Q_j(t)\left(\rho_j - (1-\epsilon)(1-2\epsilon)(1-3\epsilon) r \beta  
	\sum_\ell x_j^\ell \right)}\right] \\
& + LM T \mathds{E}_{\mathbf{S}(t_0)} \left[\max_{t_0\leq t \leq t_f}
{\sum_j Q_j(t)\rho_j}\right]
+ (LB_1 + B_2) NT,
\end{aligned}
\end{equation}
where in the the first inequality, we have used the fact
that events $E_{(a)}(t)$ and $E_{(s)}(t)$
are complementary. As a result we break the integral into
two depending on whether any of the servers is stalled.
In the case that $E_{(s)}(t)=1$, we ignore the departure rates completely.
The last inequality is immediate by noting that by Lemma~\ref{prop:Mq}, the accumulative time duration that $E_{(s)}(t)=1$ is not greater than $MLT$.

Let $v_j = \rho_j - (1-\epsilon)(1-2\epsilon)(1-3\epsilon)\beta \sum_\ell x_j^\ell $,
and vector $\mathbf{v} = (v_1, \cdots, v_J)$. Note that
$\mathbf{v}$ has negative entries for $\epsilon$ small enough (since $\bm{\rho}$ was inside the capacity region),
and $\bm{\rho}$ has positive entries, thus the RHS (Right-Hand-Side) of (\ref{eq:drift2}) is bounded as follows
\begin{equation}
\begin{aligned}
&\mbox{RHS (\ref{eq:drift2})}\leq
(N-LM)T \left( \sum_j (Q_j(t_0) - L K_{max}T \mu_j) v_j \right)
+ LM T\left( \sum_j (Q_j(t_0) + NT \lambda_j) \rho_j \right) + (LB_1 + B_2) NT.
\end{aligned} \nonumber
\end{equation}
Therefore the Lyapunov drift is bounded as
\begin{equation}
\label{drift-cond}
\mathds{E}_{\mathbf{S}(t_0)} [V(t_0) - V(t_f)]
	\le \sum_j C_j(M, N) Q_j(t_0) + C_g(M,N),
\end{equation}
where
\begin{equation} \label{eq-Cf}
\begin{aligned}
&C_j(M, N) = (N-LM)T v_j + LMT  \rho_j \\
&C_g(M, N) =  (N-LM)NT^2 LK_{max} \sum_j \mu_j v_j
 +  LMNT^2  \sum_j \lambda_j \rho_j + (LB_1 + B_2) NT.
\end{aligned}
\end{equation}
Since term $C_g(M, N)$ is independent of queue sizes,
by having $C_j(M,N)<0$ for all job types $j$,
the drift will be always negative for
large enough queues. We can ensure all $C_j(M,N)<0$ by choosing
\begin{equation}
\label{Nq-cond}
N > LM \max_{ j \in \mathcal{J}} \left(-1-\frac{\rho_j}{v_j}\right).
\end{equation}
Finally given any $\delta > 0$, we can ensure the Lyapunov drift (\ref{drift-cond}) is less than $-\delta$, if
\begin{equation}
\label{queue-cond}
\begin{aligned}
\min_j{C_j(M, N) Q_j(t_0)} < -\delta - C_g(M,N),
\end{aligned}
\end{equation}
which implies,
$
\max_{j} Q_j(t_0) > \frac{-\delta - C_g(M,N)}{\max_j C_j(M, N)},
$
or equivalently
$
\left \Vert \mathbf{q}_0 \right \Vert > \sqrt{J}
\frac{-\delta - C_g(M,N)}{\max_j C_j(M, N)}.
$

The proposition follows by choosing
$C_5 = L \max_{j \in \mathcal{J}} \left(-1-\frac{\rho_j}{v_j}\right)$
and $C_6(M, N, \delta) = \sqrt{J} \frac{ -\delta - C_g(M, N)}{\max_j C_j(M, N)}$.
\end{proof}
Therefore it follows that the Markov chain is positive recurrent
by the continuous-time version of Foster-Lyapunov theorem
and further the stability in the mean sense (\ref{eq:mean sense}) follows~\cite{meyn1993stability}. This concludes the proof of Theorem~\ref{theorem}.

\change{
\section{Generalizing arrival and service processes}\label{sec:generalization}
}
\change{
In Section \ref{sec-model}, we assumed Poisson arrivals and exponential service times. In this section, we show that our results in fact hold under much more general processes.
}

\change{
\subsection{Generalizing service time distribution}
\label{sec-gen-dist}
}
\change{
The assumption that service times follow exponential distribution
is not always realistic. Empirical studies in many applications suggest that service times
have heavy-tailed distributions~\cite{benson2010network,reiss2012heterogeneity}.
It is known that we can approximate a heavy-tailed distribution,
such as Pareto or Weibull, by using a hyper-exponential distribution, with high accuracy~\cite{Feldmann1998}. We
show that Theorem~\ref{theorem} still holds under hyper-exponential service time distributions.
}	
The probability density function
of hyper-exponential distribution is defined by
$
f(x) = \sum_{i=1}^n p_i \mu_i \exp{(-\mu_i x)},\ x \ge 0,
$
with $\sum_{i=1}^n p_i = 1$. This can be thought of as drawing a
value from $n$ possibly different exponential distributions and choosing one
of them with probability $p_i$, $i \in [1, \cdots, n]$.
The mean of the hyper-exponential is $\sum_{i=1}^n p_i \mu_{i}^{-1}$, while its variance is
$$
\left(\sum_{i=1}^n p_i \mu_{i}^{-1}\right)^2 +
\sum_{i=1}^n \sum_{j=1}^n p_i p_{j}
\left(\mu_{i}^{-1} - \mu_{j}^{-1} \right)^2.
$$
By choosing proper values of $p_i$ and $\mu_i$, we can generate
distributions that have the same mean as an exponential distribution with mean $\mu^{-1}$, but with variances much larger than $\mu^{-2}$ (variance of exponential distribution with mean $\mu^{-1}$).

\change{
Alternatively, we can view this as follows. Whenever a job is scheduled for service, it is assigned
to \textit{class} $i$ with probability $p_i$, $i \in \{ 1, \cdots, n \}$. A job of type $j$ that is in class $c$ will follow an exponentially distributed service time
with mean $\mu_{j, c}^{-1}$. By definition $\sum_{i=1}^n p_i \mu_{j, c}^{-1}= \mu_j^{-1}$ (where $\mu_j^{-1}$ is the mean service time for type-$j$ jobs as in the exponential case before).}
We then modify the definition of system state (\ref{eq-state}) to include the class of jobs in service. Specifically,
let $\mathcal{O}_j(t)$ be the set of all jobs of type $j$ being served at time $t$ in all the servers,
$\mathcal{O}_{j,\ell}(t)$ be those being served by server
$\ell$, and $c(i)\in \{1,\cdots, n\}$ denote the class of job $i \in \mathcal{O}_j(t)$.
We modify the Lyapunov function (\ref{lyap}) by considering that a scheduled job of type $j$ that is assigned to class $c$
will add a term $w_{j,c}$ to the queue size $Q_j$. The modified Lyapunov function is as follows
\begin{equation}\label{lyap-hyper}
	V(t) = \sum_j \frac{\left({Q}_j(t) +
	\sum_{i \in \mathcal{O}_j(t)} w_{j, c(i)} \right)^2}{2\mu_j}.
\end{equation}
Next we state the equivalent of Lemma~\ref{lem-inf} for the modified
Lyapunov function.

\begin{lemma}\label{lem-inf-hyper}
	By choosing
	\begin{equation} \label{eq-w-hyper}
	w_{j, c} = \frac{\mu_j}{\mu_{j, c}} - 1,
	\end{equation}
	the following bound holds at any time $t$:
	\begin{equation} \label{eq-inf-hyper}
	\begin{aligned}
	A V(t) \le \sum_j \Big[{Q}_j(t) \left(\rho_j -
	\sum_\ell I_\ell(t) \tilde{k}_j^\ell(t) +
	\sum_\ell(1-I_\ell(t))\frac{Ch}{\mu_j}\right)\Big] + B_h
	\end{aligned}
	\end{equation}
	where $C_h$ and $B_h$ are some constants.
\end{lemma}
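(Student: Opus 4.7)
The plan is to compute the infinitesimal generator $AV(t)$ of the modified quadratic Lyapunov function (\ref{lyap-hyper}) and show that, thanks to the choice $w_{j,c}=\mu_j/\mu_{j,c}-1$, each active server's drift contribution reduces to the exponential expression $-\mu_j\tilde{k}_j^\ell$ already appearing in Lemma~\ref{lem-inf}, while each stalled server contributes only a bounded $O(1)$ correction. Those corrections, multiplied by $Q_j/\mu_j$ at the Lyapunov level, produce exactly the extra term $\sum_\ell(1-I_\ell(t))C_h/\mu_j$ in (\ref{eq-inf-hyper}).

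First I would set $Y_j(t)=Q_j(t)+\sum_{i\in\mathcal{O}_j(t)}w_{j,c(i)}$ so that $V(t)=\sum_j Y_j(t)^2/(2\mu_j)$, and invoke the standard identity for a quadratic Lyapunov function on a CTMC with bounded jumps,
\[
AV(t) \;=\; \sum_j \frac{Y_j(t)}{\mu_j}\,\mathcal{D}_j(t) \;+\; B_0,
\]
where $\mathcal{D}_j(t)$ is the expected instantaneous rate of change of $Y_j$, and $B_0$ is a bounded constant absorbing all second-moment (squared jump-size) contributions. The constant $B_0$ is finite because jump magnitudes are at most $1+\max_{j,c}|w_{j,c}|$ and total event rates are uniformly bounded by $\sum_j\lambda_j+\mu_{max}LK_{max}$.

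Next I would decompose $\mathcal{D}_j(t)$ by transition type. An arrival of type $j$ (rate $\lambda_j$) either finds an empty slot in some active server, in which case it is scheduled with a random class $c$ and the expected jump is $\sum_c p_c w_{j,c}=\mu_j\sum_c p_c\mu_{j,c}^{-1}-1=0$, or it is queued with $\Delta Y_j=+1$. A type-$j$ class-$c$ completion in server $\ell$ (rate $\mu_{j,c}\bar{k}_{j,c}^\ell$) removes $w_{j,c}$ from $\sum_i w_{j,c(i)}$; if the server remains active and $Q_j>0$, a queued job is placed with a fresh class $c'$ so $\Delta Y_j=-1-w_{j,c}+w_{j,c'}$ with mean $-1-w_{j,c}$. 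The key cancellation uses the identity $\mu_{j,c}(1+w_{j,c})=\mu_j$: for an active server with $Q_j>0$ (so no empty slots and $\bar{k}_j^\ell=\tilde{k}_j^\ell$), summing the class contributions gives
\[
-\sum_c \mu_{j,c}\bar{k}_{j,c}^\ell(1+w_{j,c}) \;=\; -\mu_j\,\tilde{k}_j^\ell,
\]
matching the exponential case exactly. In all other scenarios (stalled servers, active servers with $Q_j=0$, or the brief interval when the stall condition (\ref{st_cond}) is already satisfied and the next completion will trigger the transition), a class-$c$ completion changes $Y_j$ by only $-w_{j,c}$, contributing $\bar{k}_{j,c}^\ell(\mu_{j,c}-\mu_j)$ to $\mathcal{D}_j$, which is bounded in magnitude by a constant $C_h':=K_{max}\max_{j,c}|\mu_{j,c}-\mu_j|$ per server.

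Collecting, for $Q_j>0$ one obtains
\[
\mathcal{D}_j(t) \;\le\; \lambda_j - \mu_j\sum_\ell I_\ell(t)\,\tilde{k}_j^\ell(t) + C_h'\sum_\ell(1-I_\ell(t)),
\]
and for $Q_j=0$ the entire contribution $Y_j(t)\mathcal{D}_j(t)/\mu_j$ to $AV$ is $O(1)$ since $|Y_j|\le LK_{max}\max_c|w_{j,c}|$. Multiplying by $Y_j(t)/\mu_j$, substituting $Y_j(t)=Q_j(t)+O(1)$ using the bound $|Y_j-Q_j|\le LK_{max}\max_c|w_{j,c}|$, and absorbing all bounded terms into a single constant $B_h$ yields (\ref{eq-inf-hyper}) with $C_h=C_h'$. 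The main obstacle is the case analysis above, in particular verifying that the short-lived ``active but about to stall'' intervals contribute only a per-server bounded term and do not spoil the active-server cancellation; the entire argument rests on the identity $\mu_{j,c}(1+w_{j,c})=\mu_j$ supplied by the stated choice of $w_{j,c}$.
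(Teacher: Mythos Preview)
Your proposal is correct and follows essentially the same route as the paper. Both arguments rest on the two identities $\sum_c p_c w_{j,c}=0$ (so newly scheduled jobs contribute zero expected weight) and $\mu_{j,c}(1+w_{j,c})=\mu_j$ (so an active server's per-job contribution collapses to $-\mu_j$, matching the exponential case), and both bound the stalled-server contribution by a per-server constant.

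The only organizational difference is that the paper expands $(Q_j+W_j)^2-(Q_j'+W_j')^2$ into three pieces $[Q_j^2-{Q_j'}^2]+[W_j^2-{W_j'}^2]+2[Q_jW_j-Q_j'W_j']$ and bounds each expectation separately, whereas you work directly with the drift $\mathcal{D}_j$ of $Y_j=Q_j+W_j$ and then substitute $Y_j=Q_j+O(1)$. Your route is a bit more compact; the paper's decomposition makes the origin of each constant in $B_h$ more explicit. Your constant $C_h'=K_{max}\max_{j,c}|\mu_{j,c}-\mu_j|$ is slightly sharper than the paper's $C_h=K_{max}W\mu_{max}$ (with $W=\max_{j,c}|w_{j,c}|$), since $|\mu_{j,c}-\mu_j|=|\mu_{j,c}w_{j,c}|\le W\mu_{max}$; either suffices. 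The ``active but about to stall'' case you flag is indeed the delicate point in the case analysis, and your handling of it (bounded correction per server) is consistent with how the paper treats it.
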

\begin{proof}
See Appendix~\ref{proof:lem-inf-hyper} for the proof.	
\end{proof}
Using Lemma~\ref{lem-inf-hyper}, and redefining $\lambda_{max}$, $\mu_{max}$ and $T$
to include all types of jobs and all classes that a job can take,
the proof of Theorem~\ref{theorem} can be extended to the hyper-exponential distribution. We omit repeating the same arguments and mention the result as the following corollary.
\begin{corollary}
	\label{corol-hyper}
	Theorem~\ref{theorem} still holds if the service time
	distribution of jobs of type $j$ follows a hyper-exponential distribution
	with mean $\mu_j^{-1}$, $j \in \calJ$.
\end{corollary}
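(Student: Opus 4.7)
The plan is to replay the argument used in Theorem~\ref{theorem}, but with the exponential service time distribution replaced by its hyper-exponential counterpart via the class-decomposition view, and with the Lyapunov function (\ref{lyap}) replaced by the modified Lyapunov function (\ref{lyap-hyper}). First I would redefine $T := \max_{j,c} 1/\mu_{j,c}$, $\mu_{\max}$, and $\lambda_{\max}$ so that they bound the service/arrival characteristics across every type \emph{and} every class; this ensures that any bound of the form ``number of departures/arrivals in an interval of length $u$ is dominated by a Poisson/geometric random variable with rate proportional to $\mu_{\max}$ or $\lambda_{\max}$'' continues to hold. With these updated constants, Lemma~\ref{lem_active}, Proposition~\ref{prop:Mq}, and Lemma~\ref{prop:weight_const} go through verbatim: their proofs only rely on the existence of a uniform bound on per-job service rates and on Markovian memorylessness, both of which are preserved once the class of each in-service job is carried inside the state $\mathbf{S}(t)$.

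Next I would substitute Lemma~\ref{lem-inf-hyper} for Lemma~\ref{lem-inf} inside the chain of inequalities (\ref{eq:drift})--(\ref{eq:drift2}) of Proposition~\ref{prop:Nq}. The new infinitesimal drift bound differs from the old one by the extra term
\begin{equation}
\sum_j Q_j(t)\sum_\ell (1-I_\ell(t))\frac{C_h}{\mu_j}
\nonumber
\end{equation}
which vanishes precisely on the active-everywhere event $E_{(a)}(t)$. Consequently, when the drift is split over $E_{(a)}$ and $E_{(s)}$ as in (\ref{eq:drift2}), the additional contribution lives entirely in the stalled integral and is bounded above, conditioned on $E_{\mathbf{S}(t_0),M,N}$, by $L\,MT\cdot \max_{t_0\le t\le t_f}\sum_j Q_j(t)\,C_h/\mu_j$, i.e., by the same structural form as the existing stalled-period term $L\,MT\cdot\max_t\sum_j Q_j(t)\rho_j$.

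I would then bound $\max_{t_0\le t\le t_f} Q_j(t)\le Q_j(t_0)+NT\lambda_j$ exactly as in the exponential case, which folds the extra contribution into the constants $C_j(M,N)$ and $C_g(M,N)$ defined in (\ref{eq-Cf}): concretely, $C_j(M,N)$ is updated to
\begin{equation}
C_j(M,N) = (N-LM)T v_j + LMT\Big(\rho_j + \frac{C_h}{\mu_j}\Big),
\nonumber
\end{equation}
with $C_g(M,N)$ updated analogously. Since $v_j<0$ for small enough $\epsilon$ and $C_h/\mu_j$ is a fixed positive constant, the condition (\ref{Nq-cond}) on $N/M$ still forces $C_j(M,N)<0$ for all $j$, provided the ratio is chosen slightly larger to absorb the extra $C_h/\mu_j$ additive perturbation. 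Hence there exist modified constants $C_5',C_6'$ for which (\ref{Nq-cond-def}) continues to guarantee the negative drift bound $\mathds{E}_{\mathbf{S}(t_0)}[V(t_f)-V(t_0)]<-\delta$ whenever $\Vert \mathbf{Q}(t_0)\Vert$ is large enough.

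Finally, because the modified Lyapunov function (\ref{lyap-hyper}) differs from (\ref{lyap}) only by bounded weights $w_{j,c(i)}$ added for each in-service job (and the number of in-service jobs per server is uniformly bounded by $LK_{\max}$), the sets $\{V\le M\}$ remain compact in the enlarged state space. The continuous-time Foster--Lyapunov theorem~\cite{meyn1993stability} therefore yields positive recurrence and stability in the mean sense (\ref{eq:mean sense}), which completes the proof of the corollary. The main obstacle I anticipate is not the mechanical replay itself but the verification that the extra drift term really is active only on $E_{(s)}$ and integrates against the $LMT$ bound on total stalled time; once that is pinned down the argument is morally identical to the exponential case.
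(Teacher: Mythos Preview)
Your proposal is correct and follows essentially the same route as the paper: replace Lemma~\ref{lem-inf} by Lemma~\ref{lem-inf-hyper}, observe that the extra term $\sum_j Q_j(t)\sum_\ell (1-I_\ell(t))C_h/\mu_j$ is nonzero only when some server is stalled so that it folds into the $LMT$-bounded stalled contribution in (\ref{eq:drift2}), and absorb the perturbation into the constants $C_j(M,N)$, $C_g(M,N)$, and $C_5$. Your exposition is in fact more detailed than the paper's short sketch (it explicitly notes the need to redefine $T$, $\mu_{\max}$, $\lambda_{\max}$ over all classes and to carry the class labels in $\mathbf{S}(t)$, and it spells out the Foster--Lyapunov conclusion for the modified Lyapunov function), but the underlying argument is identical.
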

\begin{proof}
See Appendix~\ref{proof-corol-hyper} for the proof.
\end{proof}
\subsection{Batch arrivals}\label{sec-batch}
The Poisson assumption on the arrivals does not allow batch arrivals
at arrival events (only one job is added at any time). In practice,
however, a user may request multiple VMs simultaneously, or a Map job in a data-parallel cluster brings a set of tasks. To adapt our
model to such batch arrivals, we can consider a process where the
requests arrive at rate $\lambda$ and each arrival brings
a vector of VMs $\mathbf{v} = (v_1, \cdots, v_J)$ (i.e., $v_1$ VMs of
type $1$, $\cdots$, $v_J$ VMs of type $J$) with probability $p_{\mathbf{v}}$,
such that $\mathbf{v} \in \mathcal{V}$, for some bounded set $\mathcal{V}
\subset \mathds{N}_0^J$ and $\sum_{\mathbf{v} \in \mathcal{V}} p_{\mathbf{v}}=1$.
Theorem~\ref{theorem} can be extended to this setting. We state the extension as the following corollary.
\begin{corollary}
	\label{corol-batch}
	Suppose requests arrive as a Poisson process with rate $\lambda$, and each request brings a vector $\mathbf{v} = (v_1, \cdots, v_J) \in \mathcal{V}$ with probability $p_{\mathbf{v}}$.
	Define the workload of jobs of type $j$ as
	\begin{equation}\label{eq-batch-workload}
	\rho_j = \frac{\lambda \sum_{\mathbf{v} \in \mathcal{V}} v_j
		p_{\mathbf{v}}}{\mu_j},\ j \in \mathcal{J}.
	\end{equation}
	Under this new definition, Theorem~\ref{theorem} still holds.
\end{corollary}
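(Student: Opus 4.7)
The plan is to verify that the proof of Theorem~\ref{theorem} carries through with essentially cosmetic modifications, since (i) the scheduling algorithm itself does not use arrival statistics, and (ii) boundedness of the set $\mathcal{V}$ of admissible batches keeps the relevant stochastic perturbations no harder to control than in the single-arrival setting. The only places where the per-type Poisson assumption was actually used are the infinitesimal generator computation in Lemma~\ref{lem-inf} and the concentration bounds underlying Proposition~\ref{prop:Mq} and Lemma~\ref{prop:weight_const}.

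First I would recompute the generator of the Lyapunov function $V(t)=\sum_j Q_j(t)^2/(2\mu_j)$ at an arrival epoch. A batch arrival of composition $\mathbf{v}$ occurs at rate $\lambda p_{\mathbf{v}}$ and increases $V$ by $\sum_j v_j(2Q_j+v_j)/(2\mu_j)$. Summing $\lambda p_{\mathbf{v}}$ times this increase over $\mathbf{v}\in\mathcal{V}$ yields a first-order term $\sum_j Q_j\cdot \lambda\bar{v}_j/\mu_j = \sum_j Q_j\rho_j$ with $\rho_j$ as in~(\ref{eq-batch-workload}), plus a second-order term $\lambda\sum_{\mathbf{v}\in\mathcal{V}} p_{\mathbf{v}}\sum_j v_j^2/(2\mu_j)$, which is bounded because $\mathcal{V}$ is a bounded subset of $\mathds{N}_0^J$ and hence finite. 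Since the departure dynamics are unaffected by how arrivals are grouped, Lemma~\ref{lem-inf} continues to hold verbatim, with $\rho_j$ defined by~(\ref{eq-batch-workload}) and the bounded second-order term absorbed into a possibly larger constant $B_2$.

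Next I would observe that Lemma~\ref{lem_active} is purely deterministic and so is unchanged. For Proposition~\ref{prop:Mq} and Lemma~\ref{prop:weight_const}, the only probabilistic input is a tail bound on $\|\mathbf{A}(t_0,t)\|_\infty$ (together with $\|\mathbf{D}(t_0,t)\|_\infty$, whose law is unchanged) over an interval of length $NT$. Writing $v_{\max}:=\max_{\mathbf{v}\in\mathcal{V}}\|\mathbf{v}\|_\infty<\infty$, the per-type arrival count $A_j(t_0,t)$ is stochastically dominated by $v_{\max}$ times a Poisson random variable with parameter $\lambda(t-t_0)$, which enjoys the same Chernoff-type exponential concentration as the original Poisson process up to constant factors. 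The arguments in Proposition~\ref{prop:Mq} and Lemma~\ref{prop:weight_const} therefore go through with $C_1,\ldots,C_4$ replaced by (larger) constants.

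With these three ingredients restored, the Lyapunov drift analysis of Proposition~\ref{prop:Nq} applies without change: by hypothesis the batch workload $\bm{\rho}$ defined in~(\ref{eq-batch-workload}) lies in the interior of $\mathcal{C}_{r\beta}$, so the coefficients $C_j(M,N)$ in~(\ref{eq-Cf}) remain negative for every $j$ under the same choice~(\ref{Nq-cond}) of $N$, and negative Lyapunov drift for sufficiently large $\|\mathbf{Q}(t_0)\|$ follows. Foster--Lyapunov then yields positive recurrence and stability in the mean sense~(\ref{eq:mean sense}). The main conceptual point to verify, though ultimately mild, is that the quadratic second-order term in the arrival drift remains uniformly bounded in $\mathbf{Q}$; this is exactly what boundedness of $\mathcal{V}$ provides.
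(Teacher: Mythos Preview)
Your proposal is correct and follows essentially the same route as the paper, which likewise pinpoints the generator bound in Lemma~\ref{lem-inf}, the arrival tail estimates behind Proposition~\ref{prop:Mq}, and the constants in Lemma~\ref{prop:weight_const} as the only places requiring adjustment. Two minor remarks are worth making. First, the paper explicitly notes that under batch arrivals the per-type counts $A_j$ are no longer independent across $j$, which invalidates the factorization step in~(\ref{p3}) and~(\ref{p4}); the paper repairs this by applying Markov's inequality to the aggregate $\sum_j A_{2,j}$, whereas your stochastic-domination argument (each $A_j$ bounded by $v_{\max}$ times the single batch-arrival Poisson count) handles the same issue implicitly and arguably more cleanly. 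Second, the probabilistic ingredient in Lemma~\ref{prop:weight_const} is not a tail bound on $\|\mathbf{A}\|_\infty$ over an interval but a per-event bound on the queue change (at most $K_{\max}$ per arrival originally, at most $\max_{\mathbf{v}\in\mathcal{V}}\|\mathbf{v}\|_\infty$ per batch here); your conclusion that only the constants $C_3,C_4$ shift is nonetheless correct.
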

\begin{proof}
See Appendix~\ref{proof:corol-batch} for the proof.
\end{proof}
Finally, it is also easy to verify that the arguments in Sections~\ref{sec-gen-dist} and~\ref{sec-batch} can be combined, to establish Theorem~\ref{theorem} under
both batch arrivals and hyper-exponential service distributions.
\section{Implementation Complexity and Customizations} \label{implementation}
Algorithm~\ref{alg1} described the basic non-permeative scheduling algorithm. In this 
section, we propose a few ways to customize the basic algorithm that might be more 
useful depending on the settings. For each suggestion, we briefly explain the 
advantages and discuss the implications in computational cost, as well as any possible 
modifications in the proof of the main theorem.
\subsection{Computing $\mathbf{r}$-max weight configuration} \label{sec-optconf}
Algorithm 1 assumes that there is a subroutine to compute an $r$-max weight configuration
when a job departs. In the case of $r=1$, the problem of finding the max weight
configurations is a hard combinatorial problem since it is an instance of
Knapsack problem \cite{Garey1979}; nevertheless there are approaches to solve this
problem in pseudo-polynomial time, or provide $r$-approximations ($r<1$)
in polynomial time~\cite{vazirani2013approximation,lin1998}. \change{Any $r$-approximation algorithm can be used in Algorithm~\ref{alg1} in a black box fashion.
Below, we briefly overview a few algorithms. The options discussed are not exhaustive 
and are only suggestive.}

\noindent \textbf{1. Finding max weight configuration ($r=1$)}

There are two approaches that are practically useful in this case:
\begin{enumerate}
\item[(i)] Each server can simply compute the set of its \textit{maximal}
configurations initially, i.e configurations in which no other extra job
can fit. This set has the same convex hull as $\mathcal{K}^\ell$
introduced in Section~\ref{sec-model} but it has significantly smaller
number of elements.
Every time, the max weight configuration is needed, server can search
only over the maximal configurations.
\item[(ii)] If the size of the server is large compared to the job sizes,
a dynamic programming approach is better. Assuming the maximum values
of the $R$ resource types of a server are $U_1, U_2, \cdots, U_R$,
the complexity of the algorithm is $O(J\times U_1 \times \cdots U_R)$
which is pseudopolynomial, but is still tractable assuming the number
of resource types is usually small (CPU, memory, disc, etc).
The dynamic programming approach requires to keep track of $G[\mathbf{u}]$
which is defined as the weight of the max weight configuration that
uses up to $\mathbf{u} = (u_1, \cdots, u_R)$ resources
($\mathbf{0} \le \mathbf{u} \le \mathbf{U}$).
Suppose $\mathbf{w}_j = (w_{j1}, \cdots w_{jR})$ is the resource requirement of
job $j \in \mathcal{J}$, then the dynamic programming recursion is as follows
\begin{equation}
G[\mathbf{u}] = \max_j\{G[\mathbf{u}-\mathbf{w}_j] + Q_j(t)\}, \nonumber
\end{equation}
with all values of $G$ being initially $0$.
%
%
\end{enumerate}
\textbf{2. Finding $r$-max weight configuration ($r<1$)\\}
There are several approximate algorithms to solve Knapsack, e.g., see~\cite{lin1998,vazirani2013approximation}. Below, we describe a simple greedy method.
\begin{lemma}
\label{lem2}
Consider a server $\ell$ with $R$ resource types. Suppose for every job type
$j \in \mathcal{J}$ we can fit at least $N_f\geq 1$ jobs of that type in the server.
If we only consider configurations that use one type of job and return the
one that gives the maximum weight, then the returned configuration will be
$r$-max weight configuration with $r=\frac{N_f}{R(N_f+1)}$.
\end{lemma}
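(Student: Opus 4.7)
The plan is to compare the greedy's single-type output with a linear-programming relaxation of the max-weight configuration problem, exploiting the fact that a basic feasible solution of an $R$-constraint LP has at most $R$ non-zero coordinates. First I would fix a true max-weight configuration $\mathbf{k}^{\star\ell}$ with weight $W^{\star}:=f(\mathbf{k}^{\star\ell},\mathbf{Q})$ and let $K_j^{\max}$ denote the largest integer such that the pure configuration with $K_j^{\max}$ type-$j$ jobs (and no others) is feasible on server $\ell$; by hypothesis $K_j^{\max}\ge N_f$, and the greedy's returned weight is $W^{G}=\max_j Q_j K_j^{\max}$. The target is $W^{G}\ge \frac{N_f}{R(N_f+1)} W^{\star}$.

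Assuming the additive resource form $\sum_j w_{jr} k_j\le U_r$ for $r=1,\ldots,R$, I would then form the LP relaxation $\max_{\mathbf{k}\ge \mathbf{0}} \sum_j Q_j k_j$ subject to these constraints, with optimum $W^{LP}\ge W^{\star}$. Because there are only $R$ non-trivial inequalities, any basic feasible solution has support of size at most $R$. Pick an optimal BFS $\mathbf{k}^{LP}$ with support $S$, $|S|\le R$; by a simple averaging argument, some $j^{\star}\in S$ must satisfy $Q_{j^{\star}} k_{j^{\star}}^{LP}\ge W^{LP}/R\ge W^{\star}/R$.

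The next step is to convert this fractional bound into one involving $K_{j^{\star}}^{\max}$. For each resource $r$ the LP constraint in isolation gives $k_{j^{\star}}^{LP}\le U_r/w_{j^{\star},r}$, hence $k_{j^{\star}}^{LP}\le \min_r U_r/w_{j^{\star},r}< K_{j^{\star}}^{\max}+1$ by definition of $K_{j^{\star}}^{\max}$. Using $K_{j^{\star}}^{\max}\ge N_f$ upgrades this to $k_{j^{\star}}^{LP}\le K_{j^{\star}}^{\max}(N_f+1)/N_f$, whence
\[
W^{G}\ \ge\ Q_{j^{\star}} K_{j^{\star}}^{\max}\ \ge\ \frac{N_f}{N_f+1}\,Q_{j^{\star}} k_{j^{\star}}^{LP}\ \ge\ \frac{N_f}{R(N_f+1)}\, W^{\star},
\]
which is exactly the claimed $r$-max bound.

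The step I expect to be the main obstacle is making the LP argument rigorous inside the monotone-but-not-necessarily-additive resource model of Section~\ref{sec-model}: both the counting of active constraints (which produces the factor $R$) and the fractional-to-integer rounding ($k_{j^{\star}}^{LP}<K_{j^{\star}}^{\max}+1$) lean on the additive form $\sum_j w_{jr} k_j\le U_r$. For the strictly additive case the plan above is essentially complete; for the general monotone case one would either restrict the statement to additive resources (the only setting where ``$R$ resource types'' carries a direct numerical meaning) or argue via the structure of $\mathrm{Conv}(\overline{\mathcal{K}}^\ell)$, e.g.\ by bounding the support size of extreme points maximizing a linear objective.
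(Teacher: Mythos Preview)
Your proof is correct in the additive resource model, and the paper's own proof also assumes additivity (it works with the normalized resource vectors $\mathbf{w}_j$ and uses $\sum_j k_j^\ell w_{jn}\le 1$), so your concern about the general monotone model is moot: the factor $R$ only makes sense in the additive setting.

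The route is genuinely different from the paper's. The paper never passes through an LP relaxation or invokes basic feasible solutions. Instead it singles out directly the job type $j'=\arg\max_j Q_j/\max_n w_{jn}$ (the one with highest ``relative value'') and bounds the optimum combinatorially: for any feasible $\mathbf{k}^\ell$ one has
\[
\sum_j Q_j k_j^\ell \;\le\; \frac{Q_{j'}}{\max_n w_{j'n}}\sum_j k_j^\ell \max_n w_{jn}\;\le\; \frac{Q_{j'}}{\max_n w_{j'n}}\sum_n\sum_j k_j^\ell w_{jn}\;\le\; R\,\frac{Q_{j'}}{\max_n w_{j'n}},
\]
so the factor $R$ comes from summing the $R$ capacity constraints rather than from the support size of a BFS. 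The rounding step $\lfloor 1/\max_n w_{j'n}\rfloor \ge \frac{N_f}{N_f+1}\cdot \frac{1}{\max_n w_{j'n}}$ then matches yours. Your LP/BFS argument is a clean and equally valid alternative; the averaging step even lets you avoid identifying the ``best'' type a priori, at the price of invoking a bit of LP structure the paper does not need.
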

\begin{proof}
See Appendix~\ref{proof-lem2} for the proof.
\end{proof}
Let $\mathbf{w}_j = (w_{j1}, w_{j2}, \cdots, w_{jR})$ be the vector of resource 
requirements of job type $j$, normalized with the the server capacity. Then, the simple 
greedy algorithm in Lemma~\ref{lem2} orders the job types according to their relative 
value, $Q_{j}(t)/(\max_n w_{jn})$, and fills the server with the job that has the 
maximum relative value. We can improve this greedy algorithm by iteratively scanning 
the job types with lower relative value and fitting the residual capacity of the server 
with these jobs, this should improve the performance in practice, however it does not 
change the theoretical result in Lemma~\ref{lem2} (which is a worst-case guarantee).

We notice that if $R \ge 2$ and $N_f=1$, the worst-case fraction of
the capacity region that Algorithm~\ref{alg1} provides, by using this greedy method as a subroutine, is small (at most $r^2$ fraction of the capacity region, due to requirement
$\beta<r$ in Theorem~\ref{theorem}). However, we can improve
Theorem~\ref{theorem}, as the the requirement $\beta<r$ can be relaxed to
$\beta<1$ in some cases, and Algorithm~\ref{alg1} can still achieve $r\beta$
fraction of the capacity region,
as stated in Corollary~\ref{prop_r} below.

\begin{corollary}\label{prop_r}
Consider a subset of configurations $\mathcal{\hat K}^\ell \subset \mathcal{K}^\ell$
and a subroutine that finds a max weight configuration out of this subset, i.e.
\[
{\mathbf{k}^\star}^{\ell}(t) = \arg\max_{\mathbf{k}^\ell \in
	\mathcal{\hat K}^\ell}f(\mathbf{k}^\ell, \mathbf{Q}(t)).
\]
Then Algorithm~\ref{alg1} that uses this subroutine to find an $r$-max weight configuration
and has parameter $\beta$, can support any workload vector $\bm{\rho}$
in the interior of $\mathcal{\hat C}_{\beta}$ which is the $\beta$ fraction of set
\be \label{cap-subset}
\mathcal{\hat C} = \{ \mathbf{x} \in \mathbb{R}^J_+ : \mathbf{x} = \sum_{\ell \in \mathcal{L}} \mathbf{x}^\ell,\ \mathbf{x}^\ell \in
Conv(\mathcal{\hat K}^\ell), \ell \in \mathcal{L}  \}
\ee
for $0<\beta<1$.
\end{corollary}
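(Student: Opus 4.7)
The plan is to view Corollary~\ref{prop_r} as an instance of Theorem~\ref{theorem} in disguise, in which the effective approximation ratio is $r=1$ but the set of feasible configurations for each server is restricted to $\mathcal{\hat K}^\ell$. Under the subroutine hypothesis, Algorithm~\ref{alg1} only ever commits to configurations lying in $\mathcal{\hat K}^\ell$, and the stall condition~(\ref{st_cond}) is evaluated against the exact maximizer $\mathbf{k}^{\star \ell}(t)$ of $f(\cdot,\mathbf{Q}(t))$ over this restricted set. The goal is then to rerun the three steps of Section~\ref{sec-lyap} on this restricted system, with $\mathcal{K}^\ell$ replaced by $\mathcal{\hat K}^\ell$ and $r$ replaced by $1$ everywhere.

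First I would revisit Lemma~\ref{lem_active}. Its angle identity~(\ref{eq-theta}) specialized to $r=1$ still produces $\theta_{\mathbf{k}^\ell}>0$ for every $\mathbf{k}^\ell \in \mathcal{\hat K}^\ell$ with $\mathbf{k}^\ell \nparallel \tilde{\mathbf{k}}^{\ell}(t_{(a)})$, because the hypothesis $\beta<1$ ensures that the $r$-hyperplane and the $\beta$-hyperplane in Figure~\ref{fig-lem} are strictly separated. The parallel case $\tilde{\mathbf{k}}^{\ell}(t_{(a)}) = C_k \mathbf{k}^\ell$ is handled verbatim as in the original proof: exact maximality over $\mathcal{\hat K}^\ell$ forces $C_k \ge 1 > \beta$, so the stall condition can never be triggered by such a $\mathbf{k}^\ell$. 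Minimizing $\sin\theta_{\mathbf{k}^\ell}$ over the finite set $\mathcal{\hat K}^\ell \setminus \{\tilde{\mathbf{k}}^{\ell}(t_{(a)})\}$ therefore yields a strictly positive constant $C_a$, and the conclusion of Lemma~\ref{lem_active} transfers.

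Proposition~\ref{prop:Mq} and Lemma~\ref{prop:weight_const} depend on $r$ only through Lemma~\ref{lem_active}, so they carry over unchanged. In the Lyapunov analysis of Proposition~\ref{prop:Nq}, the parameter $r$ enters inequality~(\ref{eq-integralk}) through the bound $\sum_j Q_j(t)\, r\beta\, {k^\star}_j^{\ell}$ and the choice of $\mathbf{x}^\ell \in Conv(\mathcal{K}^\ell)$ satisfying $\bm{\rho} < (1-\epsilon)\, r\beta \sum_\ell \mathbf{x}^\ell$. Setting $r=1$ and picking, for $\bm{\rho}$ in the interior of $\mathcal{\hat C}_\beta$, vectors $\mathbf{x}^\ell \in Conv(\mathcal{\hat K}^\ell)$ with $\bm{\rho} < (1-\epsilon)\beta \sum_\ell \mathbf{x}^\ell$, the exact maximality of $\mathbf{k}^{\star \ell}(t)$ over $\mathcal{\hat K}^\ell$ gives $f(\mathbf{k}^{\star \ell}(t), \mathbf{Q}(t)) \ge \langle \mathbf{x}^\ell, \mathbf{Q}(t) \rangle$ for every $\mathbf{x}^\ell \in Conv(\mathcal{\hat K}^\ell)$. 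Hence the chain of inequalities in~(\ref{eq-integralk}) and the subsequent drift estimate~(\ref{eq:drift2}) go through with $r=1$ and configurations restricted to $\mathcal{\hat K}^\ell$, so Foster-Lyapunov yields stability for $\bm{\rho}$ in the interior of $\mathcal{\hat C}_\beta$.

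The main obstacle, which is precisely what allows the relaxation $\beta<r$ to $\beta<1$, is verifying strict positivity of $C_a$ in the restricted setting. In Theorem~\ref{theorem} the angle bound relies on the gap $r-\beta>0$; here we may genuinely use $r=1$ in the angle estimate because the comparison in~(\ref{st_cond}) is only against the subroutine's output, which already lies in $\mathcal{\hat K}^\ell$, so no adversarial configuration outside $\mathcal{\hat K}^\ell$ can ever trigger a stall. Once this observation is in place, every remaining step is a direct transcription of the proof of Theorem~\ref{theorem} under the substitution $\mathcal{K}^\ell \to \mathcal{\hat K}^\ell$, $r \to 1$, yielding the $\beta$-fraction of $\mathcal{\hat C}$.
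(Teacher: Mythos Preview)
Your proposal is correct and takes essentially the same approach as the paper: the paper's own proof simply states that the argument of Theorem~\ref{theorem} goes through verbatim once $\mathcal{K}^\ell$ is replaced by $\mathcal{\hat K}^\ell$, which is exactly the substitution $\mathcal{K}^\ell \to \mathcal{\hat K}^\ell$, $r \to 1$ you carry out. Your write-up is considerably more detailed than the paper's one-line justification, in particular making explicit why the angle constant $C_a$ in Lemma~\ref{lem_active} stays strictly positive under $\beta<1$ (rather than $\beta<r$), which is precisely the point that enables the relaxation.
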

\begin{proof}
The proof exactly follows the proof of Theorem~\ref{theorem}, the only difference
is that now the capacity region is defined by a subset of all feasible configurations as in (\ref{cap-subset}).
\end{proof}
The implication of Corollary~\ref{prop_r} is that if $\mathcal{\hat C} \supset \mathcal{C}_r$ then
$\mathcal{\hat C}_{\beta} \supset \mathcal{C}_{r\beta}$
and the algorithm can support any workload vector  $\bm{\rho}$
in the interior of $\mathcal{C}_{r\beta}$ with $\beta < 1$.
This is indeed the case for the greedy algorithm of Lemma~\ref{lem2} as it
uses a subset of all the configurations (i.e., those with only one type of jobs).
\subsection{Customization of $\beta$}\label{sec:beta-custom}
\change{As explained, $\beta$ controls the trade off between throughput and delay. Higher $\beta$ makes a server stall more often, which increases the overall delay of jobs waiting to get service, however it
can achieve a higher long-run throughput. We notice that $\beta$ doesn't have to be constant, but can adapt to the queue size. Small queues can be a surrogate for low workload while large queues can indicate a high workload, thus by having $\beta$ automatically adapt to the queue sizes, we can avoid unnecessary stalling and achieve the best throughput-delay tradeoff}. In this section, we consider $\beta$ as a function of
$\mathbf{Q}$, as long as it converges to a desired value $\bar \beta$, when $\left \Vert \mathbf{Q} \right \Vert$ goes to infinity.
The following lemma states the main result.
\begin{corollary}
\label{lem-beta}
Suppose $\beta = h(\left \Vert  \mathbf{Q} \right \Vert_1)$
is an increasing function of
$\left \Vert  \mathbf{Q} \right \Vert_1 = \sum_j Q_j$
which satisfies the following:
$h(\mathbf{0}) = \bar\beta_{min}$ and
$\lim_{\left \Vert \mathbf{Q} \right \Vert_1 \to \infty}
h(\left \Vert \mathbf{Q} \right \Vert_1) = \bar \beta$
with $\bar \beta < r$. Then Algorithm 1 with this
queue-dependent $\beta$ can achieve $r \bar\beta$
fraction of the maximal throughput region $\mathcal{C}$.
\end{corollary}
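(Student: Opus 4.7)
The plan is to reduce Corollary~\ref{lem-beta} to Theorem~\ref{theorem} by exploiting the fact that, for large queues, the queue-dependent $\beta$ is arbitrarily close to its limiting value $\bar\beta$. Fix a workload $\bm{\rho}$ in the interior of $\mathcal{C}_{r\bar\beta}$; by openness, there exists $\tilde\beta \in (\bar\beta_{min}, \bar\beta)$ with $\bm{\rho}$ still in the interior of $\mathcal{C}_{r\tilde\beta}$. Since $h$ is increasing with limit $\bar\beta$, there is a threshold $Q^*$ such that $\tilde\beta \leq h(\|\mathbf{Q}\|_1) \leq \bar\beta$ whenever $\|\mathbf{Q}\|_1 \geq Q^*$. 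On the event that queue sizes stay above $Q^*$ throughout an interval of length $NT$, the algorithm thus effectively runs with $\beta$ in the constant band $[\tilde\beta,\bar\beta]$.

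Next I would revisit each building block of the proof of Theorem~\ref{theorem}. In Lemma~\ref{lem_active}, the stalling-condition check at time $t$ uses $\beta(\mathbf{Q}(t)) \leq \bar\beta$, so substituting $\bar\beta$ in place of $\beta$ in~(\ref{ineq2}) enlarges the stalled region; consequently the angle $\theta_{\mathbf{k}^\ell}$ and the constant $C_a$ computed using $\bar\beta$ give a valid lower bound on the allowed queue displacement before stalling. Since $\bar\beta < r$, this yields $C_a > 0$ uniformly, and Proposition~\ref{prop:Mq} follows verbatim with the new $C_a$. For Lemma~\ref{prop:weight_const}, the fact that the fixed configuration $\tilde{\mathbf{k}}^\ell$ has survived every prior departure $t'$ of the current active period gives $f(\tilde{\mathbf{k}}^\ell,\mathbf{Q}(t')) \geq \beta(\mathbf{Q}(t')) f(\mathbf{k}^{(r)\ell}(t'),\mathbf{Q}(t'))$; whenever $\|\mathbf{Q}(t')\|_1 \geq Q^*$ this uses $\beta(\mathbf{Q}(t'))\geq \tilde\beta$, so the remainder of the argument goes through with $\beta$ replaced by $\tilde\beta$, while the low-queue regime contributes only an additive error that is absorbed into $B_1,B_2$.

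With these adaptations, the Lyapunov drift analysis in Proposition~\ref{prop:Nq} proceeds identically with $\beta$ replaced by $\tilde\beta$: the coefficient $v_j=\rho_j-(1-\epsilon)(1-2\epsilon)(1-3\epsilon)\, r\tilde\beta\sum_\ell x_j^\ell$ is negative for small $\epsilon$ since $\bm{\rho}$ is strictly interior to $\mathcal{C}_{r\tilde\beta}$. Choosing $N$ large relative to $M$ and $\|\mathbf{Q}(t_0)\|$ large enough (and in particular large enough that all queue sizes remain above $Q^*$ over $[t_0,t_0+NT]$ with high probability, using the same arrival/departure concentration already exploited in the proof of Proposition~\ref{prop:Mq}) makes every $C_j(M,N)<0$ and drives the drift below $-\delta$. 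The continuous-time Foster--Lyapunov criterion then yields positive recurrence and the mean-bounded stability~(\ref{eq:mean sense}).

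The hardest part is handling the within-active-period variation of $\beta$ in Lemma~\ref{prop:weight_const}: the fixed configuration is chosen at $t_{(a)}$ using $r$, while the stalling threshold at subsequent times $t'$ uses $\beta(\mathbf{Q}(t'))$, which drifts with the queues. This is resolved by the uniform sandwich $\tilde\beta \leq \beta(\mathbf{Q}(t')) \leq \bar\beta$ on the high-probability event that queue sizes never fall below $Q^*$ over the interval of interest, an event whose probability can be pushed arbitrarily close to $1$ by taking $\|\mathbf{Q}(t_0)\|$ sufficiently large.
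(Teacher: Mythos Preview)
Your proposal is correct and follows essentially the same approach as the paper: both argue that when the initial queue is large, $\|\mathbf{Q}(t)\|_1$ stays above a threshold over the interval $[t_0,t_0+NT]$ with high probability (via the same arrival/departure concentration used in Proposition~\ref{prop:Mq}), so the queue-dependent $\beta$ is effectively pinned near $\bar\beta$, and the drift analysis of Theorem~\ref{theorem} carries over with only constants changing.

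The one structural difference is minor. The paper inserts the variable $\beta(\mathbf{Q}(t))$ directly into the integrand of~(\ref{eq-integralk}) and then lower-bounds $\mathds{E}_{\mathbf{S}(t_0)}[\min_{t}\beta(\mathbf{Q}(t))]$ by $(1-\epsilon)(1-\bar\epsilon)\bar\beta+\epsilon\bar\beta_{min}$, whereas you first fix a constant $\tilde\beta<\bar\beta$ depending on $\bm\rho$ and re-run each building block with that constant on the high-probability event $\{\|\mathbf{Q}(t)\|_1\geq Q^*\ \forall t\}$. Your route is a bit more explicit in one respect: you correctly observe that $\beta$ also enters Lemma~\ref{lem_active} through the stalling condition~(\ref{ineq2}), and you handle it by noting $\beta(\mathbf{Q})\leq\bar\beta<r$ uniformly, which gives a valid $C_a>0$; the paper's appendix states that ``$\beta$ first appears in Equation~(\ref{eq-integralk})'' and does not revisit Lemma~\ref{lem_active}, so your treatment is slightly more careful there. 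Either way the substance is the same.
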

\begin{proof}
See Appendix~\ref{proof:lem-beta} for the proof.
\end{proof}
As an example, a function that satisfies the requirements is
\begin{equation} \label{eq:beta}
	h(\mathbf{Q}) = \bar \beta (p + (1-p)\tanh(z \cdot \sum_j Q_j)),
\end{equation}
where
\begin{itemize}
\item $\bar \beta$ is the maximum value of the function and corresponds to the
fraction of capacity region that is achievable.
\item $z$ is the slope of sigmoid function at $0$ when $p=0$ which controls
how fast the function converges to the maximum value.
\item $p \in(-\infty, 1]$ is a constant that indicated how much constant value is weighted compared to
sigmoid function. $p=1$ makes function constant and equal to $\bar \beta$.
\end{itemize}
In simulations, we choose $p$ to be slightly less than $0$, and $z$ generally less than $0.01$, to avoid frequent configuration changes when the queue sizes are small.
The value of $\bar{\beta}$ depends on the long-run throughput (fraction of the capacity
region) that we want to achieve.
\subsection{Reducing stalled period duration}
\label{sec-redstall}
One way to reduce the stalled period duration further is to have a stalled server
transition to an active period, whenever the remaining jobs in the server are a
subset of the $r$-max weight configuration at that time (in addition to transition
at empty stalled times as before). Then, the server can become active
faster and renew its configuration according to the $r$-max weight configuration \textit{without any job preemptions}. The drawback is that more computation is needed, but this is
not a significant overhead given that servers will be most of the time active.
\subsection{Reducing configuration changes}
\label{sec-redchang}
An important problem with the proposed algorithm is that configuration changes
may happen very often and, approximately at the same time across the servers,
even with the suggested modification based on the queue-dependent $\beta$
(Section~\ref{sec:beta-custom}). The reason is that servers with the same
configuration will observe a similar queue vector, if any of their jobs finish
around the same time. This will make the condition~(\ref{st_cond}) either
true or false for all of these servers and will make most of them stalled before
any of them becomes active again. This behavior will continue if there is
no mechanism to stop it. To avoid this issue we can simply use the information of what fraction of
servers is stalled to decide whether to stall a server or not.
The modification that we suggest is to change the queue-dependent
$\beta$ to be $h(\mathbf{Q}(t)) \cdot q(s(t))$, where $s(t)\in [0,1]$ is the fraction
of servers which are stalled at time $t$ and $q$ is a decreasing function with $q(0)=1$.
To avoid having many servers getting stalled at the same time we need
the function $q$ to be very close to $0$ as $s$ approaches $1$. For example, it could be of the form $q(x) = \mathds{1}(x<p)$
to impose a hard limit of at most $p$ on the fraction of servers that can
be stalled at any time.

The proof arguments of Theorem~\ref{theorem} can be extended to this case.
The constant $B_1$ of Lemma~\ref{prop:weight_const}
can be modified to include the change in the queue sizes
when other servers are stalled. For this, one needs the estimate of
$M$ in Proposition~\ref{prop:Mq}. Another observation that simplifies the analysis is that our original proof treats all the servers
as stalled anyway when at least one of them is stalled so most of the arguments
of the original proof remains the same. We omit the detailed proof for brevity.



%
\section{Simulation Results}
\label{sec:simulations}
In this section, we verify our theoretical results and also compare the performance of our algorithm with two other algorithms, the randomized sampling algorithm~\cite{G2016} and the MaxWeight at local refresh
times~\cite{MS2014}, which will refer to them as G16 and M14 respectively (these algorithms were described in Section~\ref{sec-related}). We provide three sets of simulations using synthetic and real traffic traces:
(i) synthetic examples that our algorithm can handle effectively, while other algorithms fail, (ii) performance evaluation of algorithms with respect to the scaling
of the number of servers and scaling of traffic intensity, under both Poisson process and Log-normal inter-arrival times for the arrival process, and (iii) performance evaluation of algorithms using a real traffic trace from a large Google cluster.

Unless otherwise stated, our algorithm will have the following settings:
$r=1$, $\beta = h(\mathbf{Q}(t))q(s(t))$, for the $h$ function defined in
(\ref{eq:beta}) with $p=-0.05$, $z=0.005$, $\bar \beta=0.9$,
$q(s)=(1-s)\mathds{1}(s<0.1)$ where $s$ is the fraction of the stalled
servers at any time, as in Section~\ref{sec-redchang}.
Also the suggestion of Section~\ref{sec-redstall} is enabled.

Unless otherwise stated, the jobs arrive as a Poisson process
and service times are exponentially distributed as described in
Section~\ref{sec-model}, with the service times being independent
from job type and server. In case distributions of arrivals and
service times are different, we extend the definitions of $\lambda_j$
and $\mu_j$ from Section~\ref{sec-model} to be the mean number of
arrivals and the inverse of mean service time respectively, for each
job type $j$.
For each experiment we will also specify
the traffic intensity $\zeta \in (0, 1)$ of the workload. This
parameter controls how close the workload is to boundary of capacity
region $\mathcal{C}$. A workload $\bm{\rho}$ that has traffic intensity
$\zeta$ will therefore be on the boundary of the $\zeta$-fraction of the
capacity region $\mathcal{C}$.


\subsection{Inefficiency of other algorithms}\label{sec-ineff}
In this section we show handpicked examples where the other algorithms are either unstable or practically unusable,
yet our algorithm performs very well. For simplicity, we consider one dimensional case where there is one type of resource.

\textbf{Example 1 (Instability of M14: MaxWeight based on local refresh times).}
Consider one server with capacity $6$ units and two job types,
type-1 jobs require $4$ units and type-2 jobs require $1$ units.
Service rates are the same for both jobs and arrival rate of the
small job type is $8$ times higher than the large job type.
The traffic intensity is chosen to be $0.89$ so the workload vector is
$0.89\times (0.5, 4)$, which is clearly supportable because it is less than the average
of two maximal configurations $(1, 2)$ and $(0, 6)$. When the server starts scheduling
according to configuration $(1, 2)$, the arrival rate of small jobs
will be higher than their service rate.
That will result in the queue of small jobs to grow to infinity and
configuration never resets with a non-zero probability.
This will inevitably happen, since this probability exists every time
the server schedules according to configuration
$(1, 2)$. 
Figure \ref{badexpM14} depicts the total queue size (sum of the queue sizes) 
under our algorithm and M14. As it is seen, the queue sizes under M14 \cite{MS2014} go 
to infinity while Algorithm~\ref{alg1} keeps the queues stable. The sawtooth behavior 
under our algorithm in Figure \ref{badexpM14} indicates the configuration reset times.

\textbf{Example 2 (Large queue size under G16: Randomized sampling).}
In the second example we show that although G16 \cite{G2016}
guarantees stability it is possible that could yield very large queue sizes. Consider a relatively simple server setting as follows.
There are $4$ different types of servers with $1$, $2$, $4$, $8$ resource
units and $4$ types of jobs with resource requirements $1$, $2$, $4$, $8$
(thus each one can completely fill one of the servers).
Arrival and service rates are the same for all jobs and traffic load is $0.89$.
Figure \ref{badexp_G16} depicts the total queue size under the aggorithms. Intuitively one can see that this example
is hard for G16, since it can discover the best assignment to servers after
$4$ sampling events (one per queue) with probability $1/4^4=1/256$.
If there is a mistaken assignment, it is likely that it will lead to longer
waiting times for larger jobs that cannot fit in small servers.
\begin{figure}[t]
	\centering
	\begin{minipage}{.47\textwidth}
		\centering
		\includegraphics[width=\columnwidth,height=1.4 in]{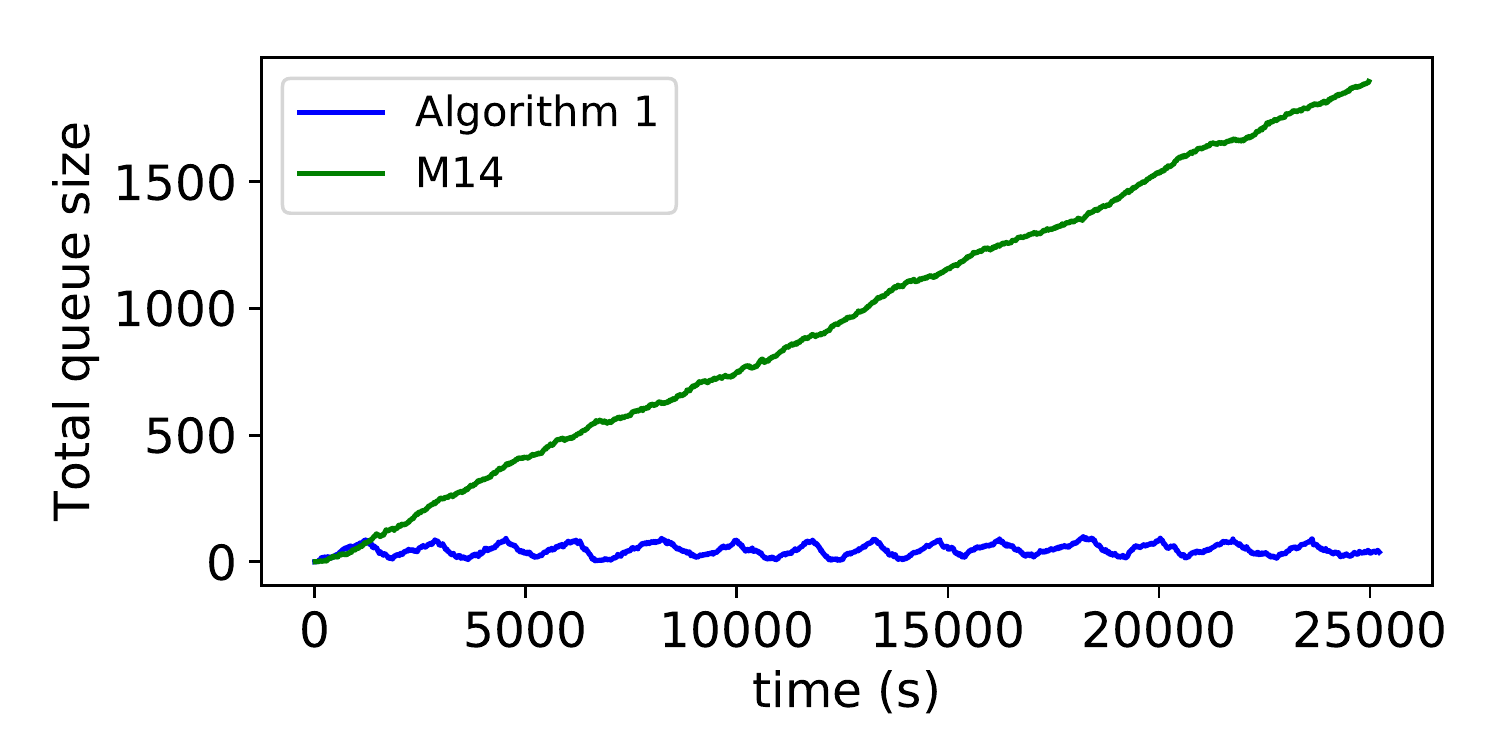}
		\caption{M14 fails in Example~1 while Algorithm~\ref{alg1} still stabilizes the queues.}
		\label{badexpM14}
	\end{minipage}\hfill
	\begin{minipage}{.52\textwidth}
			\centering
			\includegraphics[width=\columnwidth,height=1.4 in]{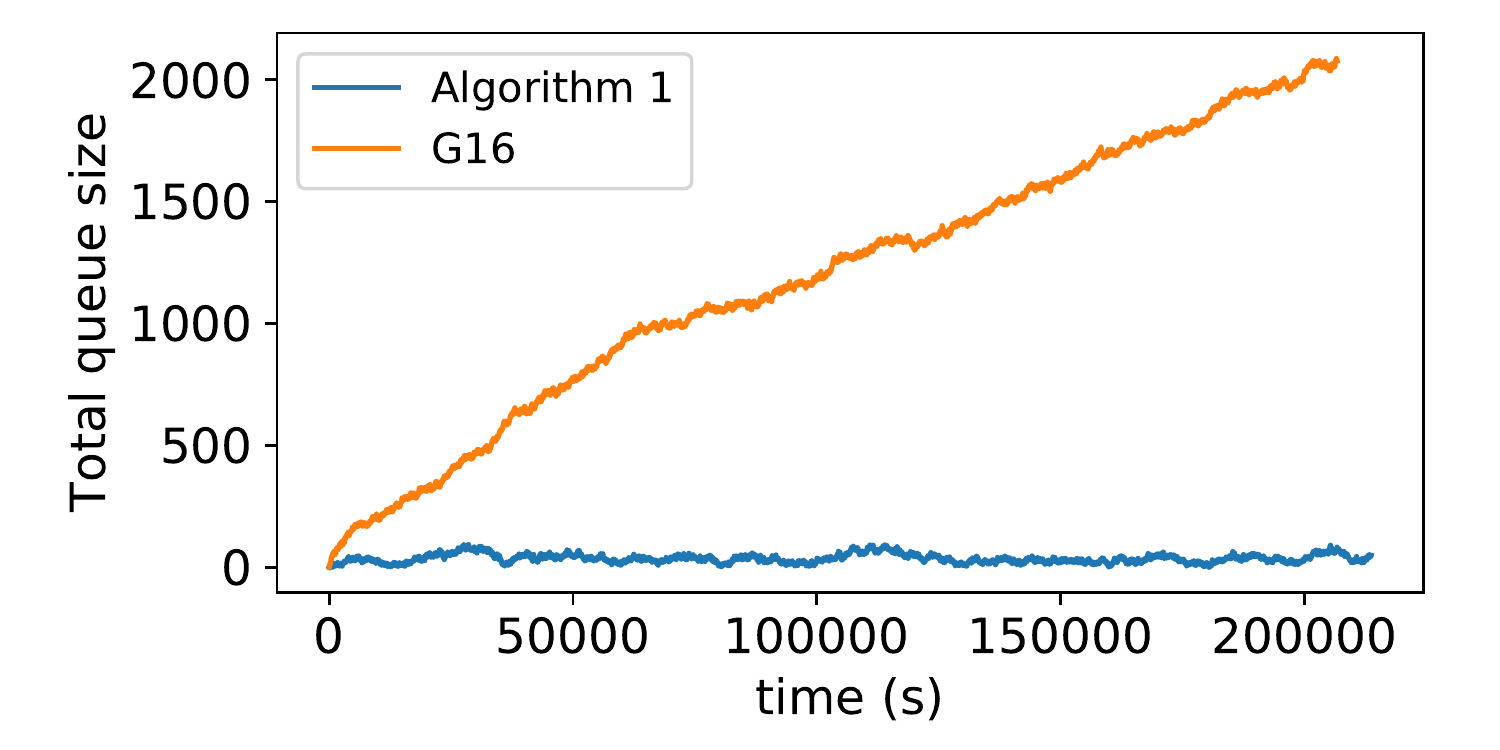}
			\caption{G16 performs poorly in Example~2 although it 
				theoretically converges. Algorithm~\ref{alg1} performs much better.}
			\label{badexp_G16}
	\end{minipage}\hfill
\end{figure}

\subsection{Scaling experiments}
\label{sec-scale-exp}
In this section, we use the VM types originally used in \cite{MSY2012,MS2014, G2016},
as indicated in Table \ref{tab-VM1}. In experiments, servers are homogeneous
with the capacities shown in Table \ref{tab-VM1}. All simulations were
repeated $5$ times and the results reported are the average of the $5$ runs.
For each run, we compute the time average of the total queue size which we refer to as \textit{the mean queue size} in the graphs. All algorithms were simulated for
$200000$ events except for G16 which was simulated for $400000$ events.
Events include arrivals and job completions, and in the case of G16,
they also include the sampling events of the queues.
In all cases we discarded the first $1/4$ fraction of the simulation traces
before computing the mean queue size of a run.

We perform all the simulations under two choices of inter-arrival time distributions:
Exponential (Poisson process) and Log-normal. The latter was used as empirical studies have shown that it is a good
model for the incoming traffic in datacenters~\cite{Ersoz2007}.

\begin{table}
\small
\centering
\caption{VM types and server types}
\begin{tabular}{|c|c|c|c|} \hline
 & Memory & CPU & Storage \\ \hline
Standard Instance & 15 GB & 8 EC2 & 1,690 GB \\ \hline
High-Memory Instance & 17.1 GB & 6.5 EC2 & 420 GB \\ \hline
High-CPU Instance & 7GB & 20 EC2 & 1,690 GB \\ \hline
\hline
Server & 90 GB & 90 EC2 & 5000 GB \\ \hline
\end{tabular}
\label{tab-VM1}
\end{table}

\textit{Scaling the number of servers.} We increase the number of servers to examine how well the algorithms scale. The number of servers ranges from 20 to 200.
The arrival rates were proportional to 
$[2/3, 11/3, 2/3]$
and scaled by the number of servers.
Service time distributions have the same mean for all job types
and are scaled such that the traffic intensity is 0.89.

Figure \ref{exp_server} shows the results of this experiment.
The behavior of Algorithm 1 and M14 is similar and they both perform
better as the number of servers increases, unlike G16.
As we can also see, the results are robust to the arrival process (Poisson vs Log-normal).

\begin{figure} [t]
\centering
	\begin{minipage}{.48\textwidth}
	\centering
	\includegraphics[width=\columnwidth]{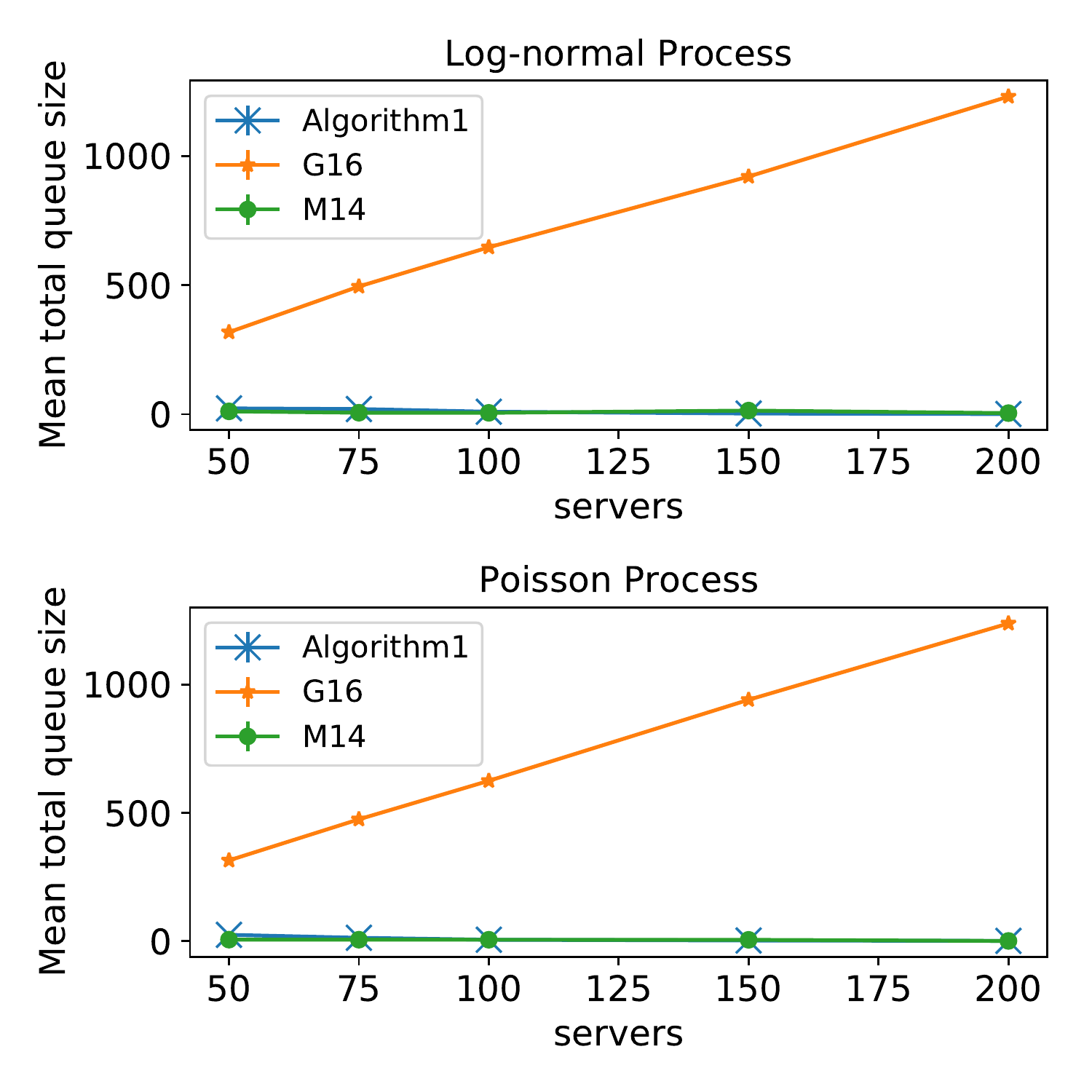}
	\caption{\change{Algorithm 1 is about as good as M14 and much better than G16
		when it comes to scaling cluster to more servers.}}
	\label{exp_server}
	\end{minipage}
\hfill
	\begin{minipage}{.48\textwidth}
	\centering
	\includegraphics[width=\columnwidth]{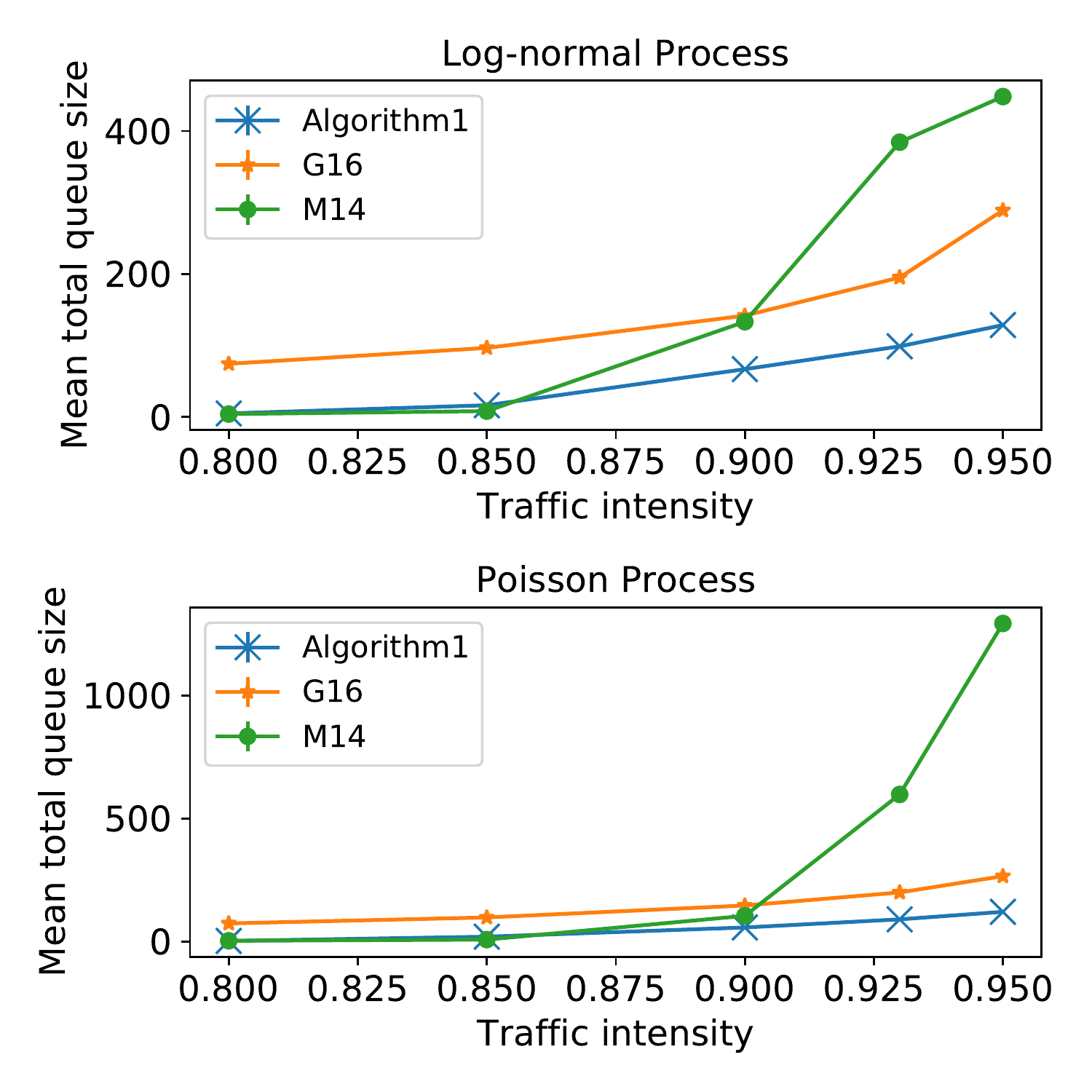}
	\caption{\change{Algorithm~\ref{alg1} has the most consistent performance. M14 deteriorates at higher 
		traffic and G16 deteriorates at lower traffic.}}
	\label{exp_workload}
	\end{minipage}
\end{figure}
\textit{Scaling the traffic intensity.} In the next experiment, we use the same server settings as before but now fix
the number of servers to $20$ and change the traffic intensity from $0.8$ to
$0.95$. To be consistent with our theoretical results, we choose
$\bar \beta=0.98$ in our algorithm so that it is higher than all the traffic intensities tested.
Arrival rates and departure rates are the same as before.

The results of this experiment are depicted in Figure~\ref{exp_workload}.
We notice that our algorithm performs very well in the whole range of workloads. The performance is also robust to the arrival process (Poisson vs Log-normal).
We can also see that M14 seems to become unstable in high traffic loads while G16 and Algorithm~1 are still stable.

\change{
\subsection{Experiment with Google trace dataset} \label{sec-realexp}
}

\change{
In this experiment, we use a real traffic trace from a large Google cluster,
to compare the performance in a more realistic setting. From the original dataset~\cite{ClusterData}, we extracted the arrival times of tasks
and their service times by taking the difference of the deployment time and the completion time. The trace characteristics are as follows: 
}
\change{
\begin{itemize}
	\item Trace includes two types of workload. One comes from batch tasks
		that are scheduled regularly and are not time critical and another
		comes from deployed user products that are serviced by long-running
		jobs \cite{Verma2015}. In our experiments, we extract only
		tasks that were completed without any interruptions,
		with their priority values being ignored.
	\item Resource requirements involve two resources (CPU and memory)
		and are collected once a job is submitted. The resources are not
		treated as discrete; their range in the original dataset
		is normalized to have a minimum of $0$ and a maximum of $1$ so
		they cannot be mapped directly into types. To map the jobs to
		a tractable number of types, we took the maximum out of the
		two resources and rounded it up to the closest integer power
		of $1/2$. All tasks that are mapped to the same power of two are
		considered to belong to the same type and will wait in the same
		queue. The highest power of $1/2$ considered was $7$, since
		lower valued jobs are very few and account for less than $1\%$
		of requests. The total number of queues is consequently $8$.
\item A total of about $18$ million jobs were extracted from trace after the above filtering.
The duration of the whole trace is $29$ days and the average job duration is
about half an hour. All findings about the trace are consistent with those
reported in \cite{Reiss2012} although there are some minor differences because
of the assumptions we made and the different way that the trace was processed.
	\item In actual trace the number of servers changes dynamically with
		servers being added, removed or modified. To keep things simpler
		we assumed that the sizes of all servers are all $1$ which is the
		maximum possible and their number is fixed throughout a run.
\end{itemize}
}

\change{
In the following simulations, we work with a window of $1$ million arrivals
which corresponds to approximately one and a half day. The traffic intensity
for that part of trace is depicted in Figure~\ref{exp_google_arrivals}, in terms of number of arrivals over 20-minute time intervals. The traffic intensity is variable and we suspect that the arrivals are correlated and do not really follow Poisson.
}
\begin{figure}[t]
	\centering
	\begin{minipage}{.45\textwidth}
		\centering
		\includegraphics[width=\columnwidth,height=0.6\columnwidth]{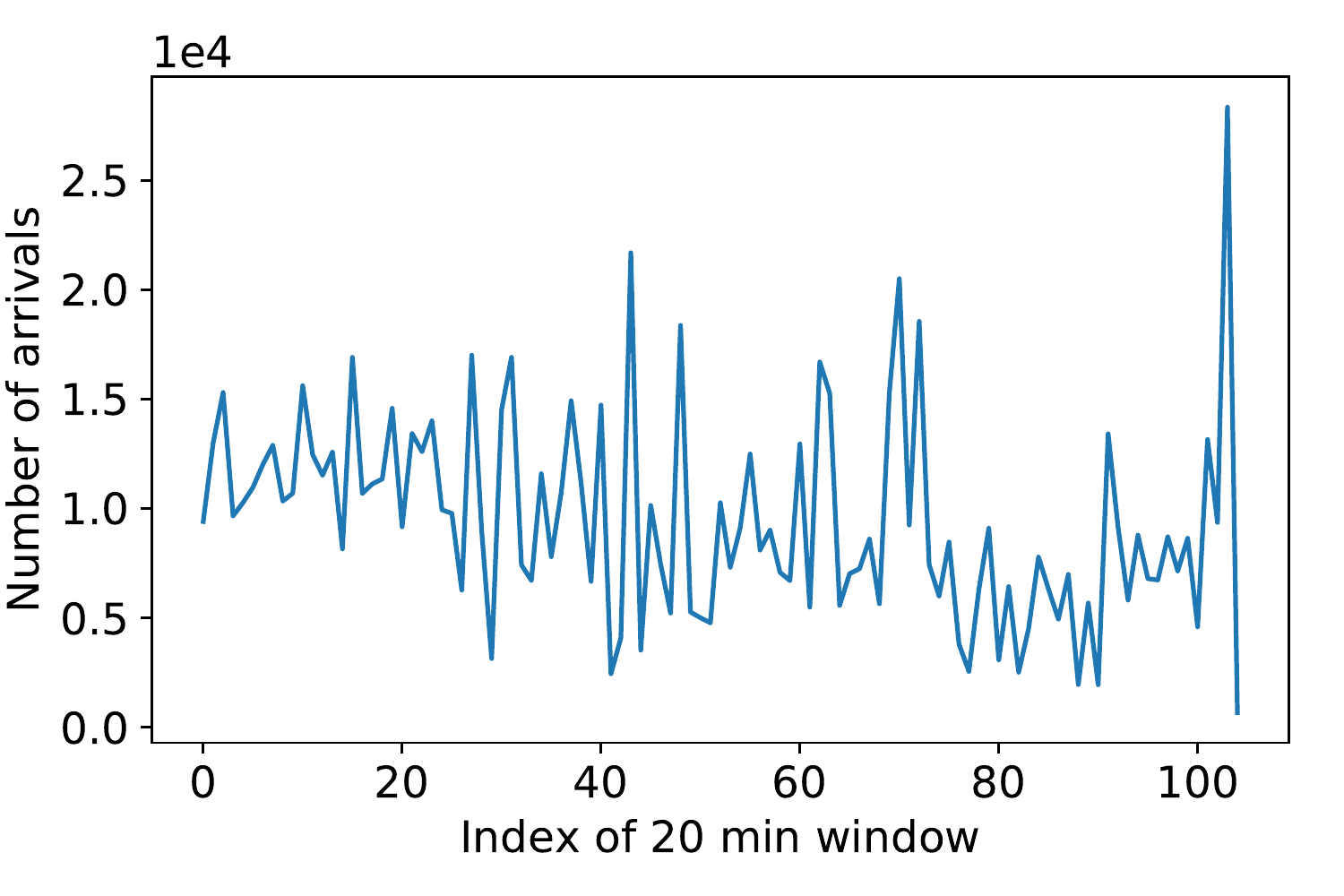}
		\caption{
			\change{
				Number of arrivals over time in the Google trace, computed over 20-minute time windows.
			}
		}
		\label{exp_google_arrivals}
	\end{minipage}\hfill
	\begin{minipage}{.5\textwidth}
			\centering
			\includegraphics[width=\columnwidth]{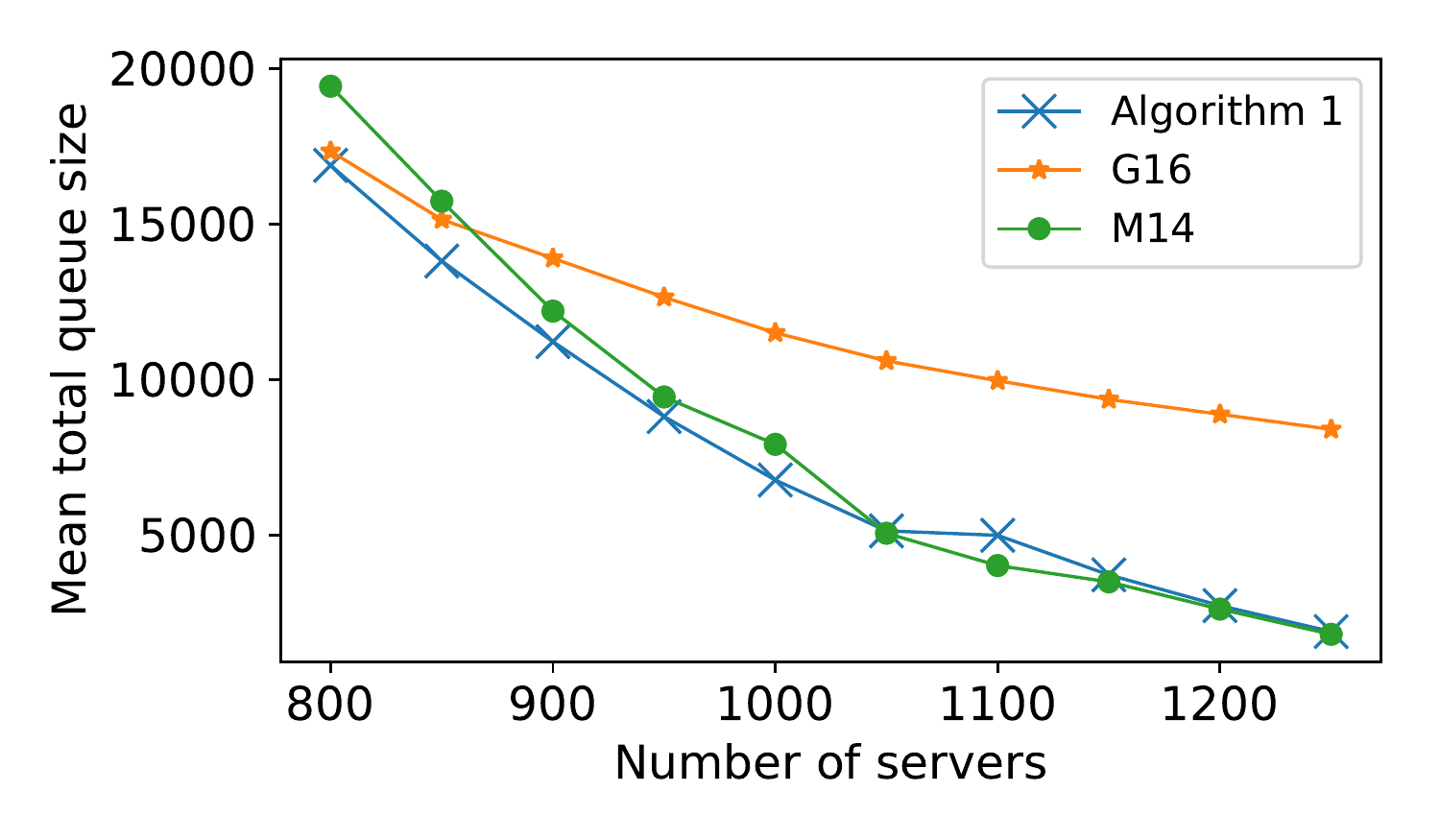}
			\caption{
				\change{The performance of different algorithms under the Google trace, for different number 
					of servers. 
				}
			}
			\label{exp_google1}
	\end{minipage}
\end{figure}

We evaluate the performance of all the algorithms using the above trace and for
different number of servers that ranges from $800$ to $1250$. Note that since
the trace is fixed and we have no control over it, the change in the number of servers
implicitly controls the traffic intensity.
All runs were repeated $3$ times and the reported results
which appear in Figure~\ref{exp_google1} is the average of these runs.
Our algorithm had the default configuration, with $z= 0.002$ and $q(s)=1-s$ if $s<0.015$ otherwise $q(s)=0$.
As we can see, our algorithm has the best overall performance in the whole range of the number of servers. The performance of G16 deteriorates as the number of servers scales up, while the performance of M14 deteriorates as the number of servers scales down, all consistent with our synthetic simulations.

\section{Conclusions}
In this paper, we introduced a new approach to non-preemptive VM scheduling in the cloud, with heterogeneous resources, and characterized the fraction of the maximum throughput that it can achieve. The algorithm can be tuned to provide a natural tradeoff between throughput, delay, and complexity. The evaluation results, using synthetic and real traffic traces, show that the algorithm outperforms the other methods, when the number of servers or the traffic intensity scales. In general, given an approximation algorithm to Knapsack with approximation ratio $r$, our algorithm can provide $\beta r$ fraction of the throughput region for $\beta <r$. One natural question is under which cases it is possible to relax this condition to $\beta <1$ 
(we saw it is indeed possible in the case of a greedy approximation algorithm). 
Other questions are related to how to incorporate preemptions 
(through proper preemption cost models), or provide deadline (strict delay) 
and fairness guarantees, which we postpone to future research.

\bibliographystyle{ACM-Reference-Format}
\bibliography{references,javad,javad2}

\appendix
\section{Proofs}\label{appendix-proofs}

\subsection{Proof of Proposition~\ref{prop:Mq}}\label{proof:Mq}

Let $\tau_j$ be the random variable denoting the service times of type-$j$ jobs.
In the proof, we use additionally the following notations:
the time when the state of server $\ell$ changes to active: $t^\ell_{(a)}$,
the duration that server remains active: $\Delta t^\ell_{(a)}$,
and the respective values when it changes to stalled state: $t^\ell_{(s)}$ and
$\Delta t^\ell_{(s)}$.

Let $K_{max}<\infty$ denote the maximum number of jobs that can fit in
any server, then at any time there are at most $LK_{max}$ jobs in all
the servers. A lower bound on the probability
$\mathds{P}(E_{\mathbf{S}(t_0),M,N})$ is then as follows
\begin{equation} \label{p1}
\begin{aligned}
\mathds{P}\left(E_{\mathbf{S}(t_0),M,N} \right) \ge^{(a)} &
\prod_\ell \prod_j
\mathds{P}\left(\tau_{j} < M T\right)^{\max(\bar{k}_j^\ell(t_0), 
	\tilde{k}_j^\ell(t_0))}
 \quad \mathds{P}\left(\Delta t^\ell_{(a)} > NT | \mathbf{S}(t_0) \right) \\
\ge^{(b)} & \left(1-e^{-M} \right)^{LK_{max}} 
 \prod_\ell \mathds{P}\left(\Delta t^\ell_{(a)} > NT | \mathbf{S}(t_0) \right).
\end{aligned}
\end{equation}
In the above, Inequality (a) bounds $\mathds{P}(E_{\mathbf{S}(t_0),M,N})$
by the probability that if a server becomes stalled at a time
in the interval $[t_0, t_0 + NT]$, it becomes empty within $MT$
amount of time, and once the server becomes active,
it remains active for at least $NT$ amount of time.
This ensures that a server will become stalled at most once in the interval
$[t_0, t_0 + NT]$ and for at most $MT$ time duration, as illustrated in
Figure \ref{fig-emn}.

Inequality (b) uses the fact that
$\mathds{P}(\tau_{j} < MT) \ge \mathds{P}(\tau_{j} < M/\mu_j) = (1-e^{-M})$,
since service time is exponentially distributed, and by bounding the maximum
number of jobs in system by $LK_{max}$.

To bound $\mathds{P}(\Delta t^\ell_{(a)} > NT | \mathbf{S}(t_0))$ in (\ref{p1}),
we use Lemma \ref{lem_active}.
Let $\mathbf{A}_1$ and $\mathbf{D}_1$ denote the arrival
and departure vectors respectively between the initial reference time $t_0$
and the first time server changes to active, $t^\ell_{(a)}$,
while $\mathbf{A}_2$ and $\mathbf{D}_2$
denote the same quantities between times $t^\ell_{(a)}$ and $t^\ell_{(a)} + NT$. For notational compactness, let $\mathbf{Q}(t_0)=\mathbf{q}_0$,
$\mathbf{Q}(t^\ell_{(a)})=\mathbf{q}_a^\ell$, $\mathbf{S}(t_0)=\mathbf{S}_0$.
Then
\change{
\be \label{p2}
\begin{aligned}
	& \mathds{P}\left(\Delta t^\ell_{(a)} > NT | \mathbf{S}_0\right)
	>^{(a)} \mathds{P}\left(\left \Vert \mathbf{A}_2 \right \Vert_\infty +
	\left \Vert \mathbf{D}_2 \right \Vert_\infty <
	C_b \left \Vert \mathbf{q}_a^\ell \right \Vert | \mathbf{S}_0 \right) \\
	& >^{(b)}  \mathds{P}\left(\left \Vert \mathbf{q}_a^\ell \right \Vert \ge
	C_c \left \Vert \mathbf{q}_0  \right \Vert \right)
	\mathds{P}\left(\left \Vert \mathbf{A}_2 \right \Vert_\infty +
	\left \Vert \mathbf{D}_2 \right \Vert_\infty <
	C_b \left \Vert \mathbf{q}_a^\ell \right \Vert  |
	\left \Vert \mathbf{q}_a^\ell \right \Vert \ge
	C_c \left \Vert \mathbf{q}_0 \right \Vert, \mathbf{S}_0 \right) \\
	& \ge \mathds{P}\left(\left \Vert \mathbf{q}_a^\ell \right \Vert \ge
	C_c \left \Vert \mathbf{q}_0  \right \Vert \right)
	\mathds{P}\left(\left \Vert \mathbf{A}_2 \right \Vert_\infty +
	\left \Vert \mathbf{D}_2 \right \Vert_\infty <
	C_b C_c \left \Vert \mathbf{q}_0 \right \Vert | \mathbf{S}_0 \right),
\end{aligned}
\ee
where $C_c$ is any arbitrary positive constant less than $1$
.}
In the above, Inequality (a) uses Lemma \ref{lem_active} with $C=C_b$,
and Inequality (b) is due to the law of total probability,
for the event $\left \Vert \mathbf{q}_a^\ell \right \Vert \ge
C_c \left \Vert \mathbf{q}_0  \right \Vert$ and its complement.

For notational compactness, let $\frac{C_b C_c}{2} = C_d$ and
$\frac{1-C_c}{2\sqrt{J}} = C_e$. Then the above probabilities
can be further bounded as follows:
\begin{equation} \label{p3}
\begin{aligned}
&\mathds{P}\left(\left \Vert \mathbf{A}_2 \right \Vert_\infty +
\left \Vert \mathbf{D}_2 \right \Vert_\infty <
C_b C_c \left \Vert \mathbf{q}_0 \right \Vert | \mathbf{S}_0 \right) \ge^{(a)} \\
&\mathds{P}\left(\left \Vert \mathbf{A}_2 \right \Vert_\infty <
C_d \left \Vert \mathbf{q}_0 \right \Vert | \mathbf{S}_0 \right)
\mathds{P}\left(\left \Vert \mathbf{D}_2 \right \Vert_\infty <
C_d \left \Vert \mathbf{q}_0 \right \Vert | \mathbf{S}_0 \right) =^{(b)} \\
& \prod_j \mathds{P}\left( A_{2,j} < C_d \left \Vert \mathbf{q}_0 \right \Vert
| \mathbf{S}_0 \right)
\mathds{P}\left(D_{2,j} < C_d \left \Vert \mathbf{q}_0 \right \Vert
| \mathbf{S}_0 \right) \ge^{(c)} \\
& \prod_j \left (1- \frac{\lambda_j NT}
{C_d \left \Vert \mathbf{q}_0 \right \Vert}\right)
\left (1- \frac{K_{max}\mu_j NT}{C_d
	\left \Vert \mathbf{q}_0 \right \Vert}\right) \ge^{(d)}
\left (1- \frac{\lambda_{max} NT}
{C_d \left \Vert \mathbf{q}_0 \right \Vert}\right)^J
\left (1- \frac{K_{max} \mu_{max} NT}
{C_d \left \Vert \mathbf{q}_0 \right \Vert}\right)^J,
\end{aligned}
\end{equation}
and
\begin{equation} \label{p4}
\begin{aligned}
& \mathds{P}\left(\left \Vert \mathbf{q}_a^\ell \right \Vert \ge
C_c \left \Vert \mathbf{q}_0 \right \Vert \right) =
\mathds{P}\left(\left \Vert \mathbf{q}_0 \right \Vert -
\left \Vert \mathbf{q}_a^\ell \right \Vert <
(1-C_c) \left \Vert \mathbf{q}_0 \right \Vert\right) \ge^{(e)} \\
& \mathds{P}\left(\left \Vert \mathbf{q}_0 - \mathbf{q}_a^\ell \right \Vert <
(1-C_c) \left \Vert \mathbf{q}_0 \right \Vert \right) =
\mathds{P}\left(\left \Vert \mathbf{A}_1 - \mathbf{D}_1 \right \Vert <
(1-C_c) \left \Vert \mathbf{q}_0 \right \Vert \right)  \ge^{(f)}  \\
& \mathds{P}\left(\left \Vert \mathbf{A}_1 \right \Vert
+ \left \Vert \mathbf{D}_1 \right \Vert <
(1-C_c) \left \Vert \mathbf{q}_0 \right \Vert \right) \ge^{(g)}
\mathds{P}\left(\left \Vert \mathbf{A}_1 \right \Vert_\infty +
\left \Vert \mathbf{D}_1 \right \Vert_\infty <
\frac{1-C_c}{\sqrt{J}} \left \Vert \mathbf{q}_0 \right \Vert\right) \ge^{(h)}  \\
& \left (1- \frac{\lambda_{max} NT}
{C_e \left \Vert \mathbf{q}_0 \right \Vert}\right)^J
\left (1- \frac{K_{max} NT^2 }{C_e \left \Vert \mathbf{q}_0 \right \Vert}\right)^J.
\end{aligned}
\end{equation}
In the above, (a) is due to the property $p(X+Y < C) \ge p(X < C/2) p(Y < C/2)$;
(b) is due to definition of infinity norm;
\change{
(c) is due to Markov's inequality
with arrival rates $\lambda_j$ independent of state $\mathbf{S}_0$
}
and departure rates upper-bounded by $K_{max} \mu_j^{-1}$,
also independent of state;
(d) uses that $\lambda_{max} \ge \lambda_j$ and
$T \ge 1/\mu_j$; (e) and (f) are due to triangle inequality; (g) is due to ratio
bound between infinity and 2-norm; and finally (h) is similar to (d).

Combining Equations (\ref{p1}), (\ref{p2}), (\ref{p3}), (\ref{p4}), we have
\[
\mathds{P}(E_{\mathbf{S}(t_0),M,N}) > \text{Factor}_0 \times \text{Factor}_1 \times \text{Factor}_2 \times \text{Factor}_3 \times \text{Factor}_4,
\]
where
\begin{equation} \label{p5}
\begin{aligned}
&\text{Factor}_0=\left(1-e^{-M} \right)^{ LK_{max}}, \quad \text{Factor}_1= \left(1-
\frac{\lambda_{max} NT}{C_d \left \Vert \mathbf{q}_0 \right \Vert}\right)^{LJ}, \quad
\text{Factor}_2=
\left (1- \frac{K_{max} NT \mu_{max}}{C_d \left \Vert \mathbf{q}_0 \right \Vert}\right)^{LJ}, \\
& \text{Factor}_3=\left (1- \frac{\lambda_{max} NT}{C_e \left \Vert \mathbf{q}_0
	\right \Vert}\right)^{LJ},\quad
\text{Factor}_4=
\left (1- \frac{K_{max} NT \mu_{max} }{C_e \left \Vert \mathbf{q}_0 \right \Vert}\right)^{LJ}.
\end{aligned}
\end{equation}
Hence, to ensure $\mathds{P}(E_{\mathbf{S}(t_0),M,N}) > 1 - \epsilon$, 
it suffices that each of the 5 factors, $\text{Factor}_0$, $\text{Factor}_1$, 
$\text{Factor}_2$, $\text{Factor}_3$, $\text{Factor}_4$, to be greater than 
$(1-\epsilon)^{1/5}$.

Using the inequality $(1-c)^x > 1-cx$ for $x>1$, it is sufficient to have
\begin{equation*} \label{Mq-cond}
	\begin{aligned}
		M &> \log \left(\frac{5L K_{max}}{\epsilon} \right); \left \Vert \mathbf{q_0} \right \Vert
		> \frac{ 5LJNT \max(\lambda_{max}, K_{max} T)}{\epsilon \min(C_d, C_e)}.
	\end{aligned}
\end{equation*}
Finally the Proposition follows if
$C_1 = \log \left(5 LK_{max}\right)$ and
$C_2 = 5LJT \max(\lambda_{max}, K_{max} T)/{\min(C_d, C_e)}$.

\subsection{Proof of Lemma~\ref{lem-inf}} \label{proof:lem-inf}
Note that what we want to bound is the following expression and
then take its limit as $u$ goes to $0$.
\ben
\frac{\mathds{E}_{\mathbf{S}(t)}[V(t+u)]-V(t)}{u} = \sum_j
\frac{\mathds{E}_{\mathbf{S}(t)}[Q_j(t+u)^2 - Q_j(t)^2]}{2u \mu_j}.
\een
By definition,
\[{Q}_j(t+u) = {Q}_j(t) + A_j(t, t+u) - D_j(t, t+u),\] where $A_j(t, t+u)$
and $D_j(t, t+u)$ are respectively the number of arrivals and departures
of type $j$ from $Q_j$ during $(t, t+u)$. By squaring the both sides,
it is straightforward to see that
\begin{equation}
\begin{aligned}
Q_j(t+u)^2 \leq Q_j(t)^2 + A_j(t, t+u)^2+D_j(t, t+u)^2 + 2Q_j(t)(A_j(t, t+u) - D_j(t, t+u)). \nonumber
\end{aligned}
\end{equation}
Recall that number of arrivals is a Poisson process with rate $\lambda_j$
and each job $j$ already in a server leaves after an exponentially
distributed amount of time with rate $\mu_j$. Hence, it is easy to see that
\begin{equation} \label{eq-Asqr}
\begin{aligned}
&\mathds{E}_{\mathbf{S}(t)}[A_j(t, t+u)^2] =
\lambda_j u  + \oo(u),
\end{aligned}
\end{equation}
and similarly for $D_j(t, t+u)$,
\begin{equation} \label{eq-Dsqr}
\begin{aligned}
\mathds{E}_{\mathbf{S}(t)}[D_j(t, t+u)^2] \le &\sum_{\ell} I_\ell(t)
{\tilde{k}}^{\ell}_j(t)\mu_j u +
\sum_{\ell} (1-I_\ell(t))\sum_{j^\prime}
{\bar{k}}^{\ell}_{j^\prime}(t)\mu_{j^\prime} K_{max}^2 u
+ \oo(u)\\
\le & L K_{max} \mu_j u + LK_{max}^3 \mu_{max} u+\oo(u).
\end{aligned}
\end{equation}
In the above bound, we used the fact that a job $j$ may depart from queue
either when a job $j$ completes service in an active server or when any
job departs from a stalled server and makes the server empty, in which case
up to $K_{max}$ jobs can be scheduled in that server.
We also used that $\sum_j\sum_\ell k^{\ell}_j \le LK_{max}$ and
that $\mu_j \le \mu_{max}$ for any job type $j$.

Assuming $Q_j(t)>0$, if server $\ell$ is in an active period then
${\bar{k}}^{\ell}_j(t)={\tilde{k}}^{\ell}_j(t)$
(i.e., there are no empty slots for type-$j$ jobs).
and the above inequality also clearly holds if $Q_j(t)=0$.
Using the the indicator function $I_\ell(t)$,
we can write the following inequality that holds
for any state of servers.
\begin{equation} \label{eq-difQsqr}
\begin{aligned}
\mathds{E}_{\mathbf{S}(t)}[Q_j(t+u)^2 - Q_j(t)^2] \leq
\lambda_ju + LK_{max}\mu_j u + LK_{max}^3 \mu_{max} u + 
2Q_j(t)(\lambda_j-\sum_{\ell} I_\ell(t) {\tilde{k}}^{\ell}_j(t)\mu_j)u
+\oo(u).
\end{aligned}
\end{equation}
Notice that in the above upper bound, we have ignored the queue departures when
the server is in a stalled period.

Thus at any time $t$, taking the limit as $u \to 0$,
\begin{equation} \label{eq-AV-limit}
\begin{aligned}
A V(t) \le  \left[\sum_j {Q}_j(t)\left(\rho_j - \sum_\ell
I_\ell(t) \tilde{k}_j^\ell(t) \right) \right]+B_2,
\end{aligned}
\end{equation}
for a constant $B_2=\sum_j(\rho_j+LK_{max}^3\frac{\mu_{max}}{\mu_j} + LK_{max})$.

\subsection{Proof of Lemma~\ref{prop:weight_const}}\label{proof:weight_const}
Define $R_0(\tilde{\mathbf{k}}^\ell)$ as
the set of queue size vectors $\mathbf{Q}$ for which
$f(\tilde{\mathbf{k}}^\ell, \mathbf{Q}) >
\beta r f(\mathbf{k}^\ell, \mathbf{Q})$ for any
$\mathbf{k}^\ell \in \mathcal{K}^\ell$. Similarly define
$R_1(\tilde{\mathbf{k}}^\ell)$ as the set of queue size vectors not in
$R_0(\tilde{\mathbf{k}}^\ell)$ for which
$f(\tilde{\mathbf{k}}^\ell, \mathbf{Q}) >
\beta r f(\mathbf{k}^\ell, \mathbf{Q}) - B_1$
for any $\mathbf{k}^\ell \in \mathcal{K}^\ell$
and finally $R_2(\tilde{\mathbf{k}}^\ell)$
as the set of the queue size vectors not in
$R_0(\tilde{\mathbf{k}}^\ell)$ or $R_1(\tilde{\mathbf{k}}^\ell)$.
We want to show that, with high probability, the queue size vector
does not take a value in $R_2(\tilde{\mathbf{k}}^\ell)$ during
an active period.
%
%

Note that at the beginning of an active period, the queue size vector
is in the set $R_0(\tilde{\mathbf{k}}^\ell)$ and the active period of
server $\ell$ ends when
at \textit{the time of a job departure from server $\ell$},
the queue size vector is either in $R_1(\tilde{\mathbf{k}}^\ell)$ or
$R_2(\tilde{\mathbf{k}}^\ell)$. Let $t_{i}$ be the $i$-th time that
the queue size vector transitions from set $R_0(\tilde{\mathbf{k}}^\ell)$
to $R_1(\tilde{\mathbf{k}}^\ell)$ while still in the active period.
Then there are three possible cases after $t_i$:

1. the queue size vector transitions back to $R_0(\tilde{\mathbf{k}}^\ell)$
before a job departs from server $\ell$,

2. the queue size vector remains in $R_1(\tilde{\mathbf{k}}^\ell)$
until a job departs from server $\ell$,

3. the queue size vector reaches $R_2(\tilde{\mathbf{k}}^\ell)$
before next job departure from server $\ell$.	

We denote the respective probabilities that each the events above
occurs by $p_0(t_i)$, $p_1(t_i)$ and $p_2(t_i)$.

The event $E_{B_1, \ell}$, which according to description
is the event that $f(\tilde{\mathbf{k}}^\ell, \mathbf{Q}(t)) >
\beta r f(\mathbf{k}^\ell, \mathbf{Q}(t)) - B_1$,
for any $\mathbf{k}^\ell \in \mathcal{K}^\ell$ and at
any time $t$ in the active period,
does not occur with probability
\begin{equation} \label{eq-PEB1}
1 - \mathds{P}(E_{B_1}, \ell) = \sum_{i=0}^{\infty}
p_2(t_{i})\prod_{j=0}^{i-1} p_0(t_{j}),
\end{equation}
which we want to show it is less than
$\epsilon$ for $B_1$ large enough.

First note that $p_0(t_{i})$ is strictly less
than $1$, i.e., $p_0(t_{i}) < 1 - C_f$ for some positive constant $C_f$.
To see that, note that the second case will occur,
if the next event after time $t_{i}$ is a job departure
from server $\ell$. Arrival and service processes are all
Poisson and the rate of both is at most
$r_1 = {J\lambda_{max} + LK_{max} \mu_{max}} \,.$	
The rate of job departures from server $\ell$ is also Poisson
and has a rate of at least $r_2 = \min_{j \in \mathcal{J}} \mu_j\,.$
The probability that departure from server $\ell$ happens before
any other event is therefore at least $C_f = \frac{r_2}{r_1 + r_2}$
and hence $p_1(t_{i}) \ge C_f \,$ and consequently
\begin{equation}\label{eq-p0Cf}
p_0(t_{i}) < 1 - p_1(t_{i}) < 1 - C_f ,\quad C_f=\frac{r2}{r1+r2}.
\end{equation}

Next we find an upper bound on $p_2(t_{i})$. At every arrival or
departure each of the queue sizes can change by at most $K_{max}$.
Considering $t$ is the time that the queue change occurs, and
$t^-$ the time right before	the change, the change in the weight
of the server configuration can be bounded as
\begin{equation}
\begin{aligned}
&f(\mathbf{k}^\ell, \mathbf{Q}(t)) -
f(\mathbf{k}^\ell, \mathbf{Q}(t^-)) = 
\sum_j k^\ell_j (Q_j(t) - Q_j(t^-)) \le
K_{max} \sum_j k^\ell_j \le	K_{max}^2.
\end{aligned} \nonumber
\end{equation}
The difference between configuration weights of any two
queue size vectors, with one in the set $R_0(\tilde{\mathbf{k}}^\ell)$
and the other in $R_2(\tilde{\mathbf{k}}^\ell)$, is at least
$B_1$ by definition. Therefore the number of events (arrivals
or departures) needed to transition from one set to the other
is at least $N_{B_1} = \lceil\frac{B_1}{K_{max}^2}\rceil$
and they should occur before any departure from server
$\ell$. The probability that this happens is
$(1 - C_f)^{N_{B_1}-1}$ for the choice of $C_f$ in~(\ref{eq-p0Cf}).
The time $t_{i}$ is the time that the first of these
events happens, which makes the queue size vector transition
to set $R_1(\tilde{\mathbf{k}}^\ell)$, hence
\begin{equation} \label{eq-p2B1}
p_2(t_{i}) \le (1- C_f)^{N_{B_1} - 1}.
\end{equation}
Lastly using Inequalities (\ref{eq-p0Cf}) and (\ref{eq-p2B1}) in
(\ref{eq-PEB1}), we get
\begin{equation}
1 - \mathds{P}(E_{B_1}, \ell) < \frac{(1-C_f)^{N_{B_1} -1}}{C_f} \nonumber.
\end{equation}
We can ensure that this expression is less than $\epsilon$ by choosing
$B_1 > - C_3 \log{\epsilon} + C_4,$ where the constants $C_3$ and $C_4$ are
\[
C_3 = -\frac{K_{max}^2}{\log{(1-C_f)}},\ C_4 = \frac{K_{max}^2\log{C_f}}{\log{(1-C_f)}}.
\]

\change{
\subsection{Proof of Lemma~\ref{lem-inf-hyper}} \label{proof:lem-inf-hyper}
}

\change{
	Following the steps of Lemma~\ref{lem-inf} we will first find a bound
	for the change in the nominator of the Lyapunov function in an interval
	$[t, t+u]$, for a particular job type $j$.
	State $\mathbf{S}(t)$ is defined as in Section~\ref{sec-lyap}
	but now it also includes the classes of the scheduled jobs
	$\mathcal{O}_j(t)$ for every $j \in \mathcal{J}$.
	Throughout the proof we will use that values $w_{j, c}$ are bounded
	or more specifically that $W = \max_{j,c} |w_{j, c}| < \infty$.
}

\change{
	Using the definition of Equation~(\ref{lyap-hyper}) we get
	\begin{equation}
	\begin{aligned} \label{eq-Qsqr-hyper}
	&\mathds{E}_{\mathbf{S}(t)}
	\Bigg[\left(Q_j(t+u) + \sum_{i \in \mathcal{O}_j(t+u)}
	w_{j, c(i)} \right)^2 - \left(Q_j(t) +
	\sum_{i \in \mathcal{O}_j(t)} w_{j, c(i)} \right)^2 \Bigg] \leq\\
	&\mathds{E}_{\mathbf{S}(t)}
	\Bigg[Q_j(t+u)^2 - Q_j(t)^2\Bigg] +
	\mathds{E}_{\mathbf{S}(t)}
	\Bigg[\left(\sum_{i \in \mathcal{O}_j(t+u)} w_{j, c(i)}\right)^2 -
	\left(\sum_{i \in \mathcal{O}_j(t)} w_{j, c(i)}\right)^2\Bigg] + \\
	&2\mathds{E}_{\mathbf{S}(t)}
	\Bigg[Q_j(t+u)\sum_{i \in \mathcal{O}_j(t+u)} w_{j, c(i)}
	-Q_j(t)\sum_{i \in \mathcal{O}_j(t)} w_{j, c(i)}\Bigg]. \\
	\end{aligned}
	\end{equation}
	Now we will give bounds for each one of the above terms
	and we will combine them later.
}

\change{
	The first one can be bounded with the same approach as the one that
	gave the bound of Equation~(\ref{eq-difQsqr}).
	The only difference here is that each job has different service rate that
	depends on its state and $\mu_{max}$ is now equal to $\max_{j,c} \mu_{j,c}$.
	The bound we get is then
	\begin{equation}
	\begin{aligned}
	\label{eq-hyper-term1}
	&\mathds{E}_{\mathbf{S}(t)}[Q_j(t+u)^2 - Q_j(t)^2] \leq \\
	&\lambda_ju
	+ LK_{max}\mu_j u + LK_{max}^3 \mu_{max} u +
	2Q_j(t)\left(\lambda_j-\sum_{\ell} I_\ell(t) \sum_{i \in \mathcal{O}_j(t)}
	\mu_{j, c(i)}\right)u +\oo(u).
	\end{aligned}
	\end{equation}
}

\change{
	For the second one we rely on the fact that the expression
	$\left(\sum_{i \in \mathcal{O}_j(t)} w_{j, c(i)}\right)^2$ is
	between $0$ and $(LK_{max}W)^2$ and that is the largest change
	that can take place.
	Of course we also need to use the rate at which this change
	occurs in an interval of length $u$,
	which is at most $\lambda_j + LK_{max}\mu_{max}$.
	The result will be the following inequality:
	\begin{equation}
	\begin{aligned}
	\label{eq-hyper-term2}
	\mathds{E}_{\mathbf{S}(t)}
	\Bigg[\left(\sum_{i \in \mathcal{O}_j(t+u)} w_{j, c(i)}\right)^2 -
	\left(\sum_{i \in \mathcal{O}_j(t)} w_{j, c(i)}\right)^2\Bigg] \le
	(LK_{max}W)^2(\lambda_j + LK_{max}\mu_{max})u.
	\end{aligned}
	\end{equation}
}

\change{
	Lastly we can break the last expectation term in two parts using the fact that
	${Q}_j(t+u) = {Q}_j(t) + A_j(t, t+u) - D_j(t, t+u)$. The first part is
	proportional to $Q_j(t)$ and the latter is bounded since
	expected arrivals and departures are bounded.
	Notice that the expected value of weight of newly scheduled jobs is
	$\sum_{c=1}^S p_c w_{j, c} = 0$, so only the jobs that depart are considered
	in first term. Again the result is the following:
	\begin{equation}
	\begin{aligned}
	\label{eq-hyper-term3}
	&2\mathds{E}_{\mathbf{S}(t)}
	\Bigg[Q_j(t+u)\sum_{i \in \mathcal{O}_j(t+u)} w_{j, c(i)}
	-Q_j(t)\sum_{i \in \mathcal{O}_j(t)} w_{j, c(i)}\Bigg] \le \\
	&2\mathds{E}_{\mathbf{S}(t)}\Bigg[Q_j(t)
	\left(\sum_{i \in \mathcal{O}_j(t+u)\setminus  \mathcal{O}_j(t)} w_{j, c(i)}
	- \sum_{i \in \mathcal{O}_j(t)\setminus
		\mathcal{O}_j(t+u)} w_{j, c(i)} \right)\Bigg] + \\
	&2\mathds{E}_{\mathbf{S}}\Bigg[\left(A_j(t, t+u) - D_j(t, t+u)\right)
	\sum_{i \in \mathcal{O}_j(t+u)} w_{j, c(i)}\Bigg] \le \\
	&2 Q_j(t) \left(-\sum_{i \in \mathcal{O}_j(t)}
	\mu_{j, c(i)} w_{j, c(i)} \right)u +
	LK_{max} W(\lambda_j + LK_{max}\mu_{max})u.
	\end{aligned}
	\end{equation}
}

\change{
	Putting together Equations (\ref{eq-hyper-term1}), (\ref{eq-hyper-term2})
	and (\ref{eq-hyper-term3}) we get:
	\begin{equation} \label{eq-hyper-final}
	\begin{aligned}
	&\mathds{E}_{\mathbf{S}(t)}
	\Bigg[\left(Q_j(t+u) + \sum_{i \in \mathcal{O}_j(t+u)}
	w_{j, c(i)}(t+u) \right)^2 -\left(Q_j(t) + \sum_{i \in \mathcal{O}_j(t)}
	w_{j, c(i)}(t) \right)^2 \Bigg] \\
	&\leq
	B_h u + 2Q_j(t)(\lambda_j-\sum_{\ell} I_\ell(t)
	\sum_{i \in \mathcal{O}^\ell_j(t)} (1 + w_{j, c(i)})\mu_{j, c(i)}
	-\sum_{\ell}(1-I_\ell(t))
	\sum_{i \in \mathcal{O}^\ell_j(t)} \mu_{j, c(i)} w_{j, c(i)}(t)),
	\end{aligned}
	\end{equation}
	where $B_h = \lambda_j
	+ LK_{max}\mu_j + LK_{max}^3 \mu_{max} + (LK_{max}W)^2(\lambda_j + LK_{max}\mu_{max}) + LK_{max}W(\lambda_j + LK_{max}\mu_{max})$.
}

\change{	
	Finally by applying the definition of $A V(t)$ from
	equation~(\ref{eq-infA}) to (\ref{eq-hyper-final}) and substituting
	$(1 + w_{j, c(i)})\mu_{j, c(i)}$ with $\mu_j$
	-- as implied by definition~(\ref{eq-w-hyper}) -- and
	$\sum_{i \in \mathcal{O}^\ell_j(t)} \mu_{j, c(i)} w_{j, c(i)}(t)$ by
	its upper bound $K_{max}W \mu_{max}$, we get the result of the lemma,
	for $C_h = K_{max}W \mu_{max}$.
}

\change{
\subsection{Proof of Corollary~\ref{corol-hyper}}\label{proof-corol-hyper}
Essentially Lemma~\ref{lem-inf-hyper} shows that the infinitesimal generator
can be bounded similar to (\ref{lem-inf}) for exponential distribution, 
with only one extra term: $\sum_j Q_j(t) \sum_\ell(1-I_\ell(t))\frac{C_h}{\mu_j}$,
which is nonzero only if there is at least one stalled server at time $t$. 
However we know, the total cumulative time duration that there are any stalled servers, 
is at most $LMT$ by Proposition~\ref{prop:Mq} (the same arguments hold). 
As a result, in the proof of Proposition~\ref{prop:Nq}, we only need to change 
the second term of Equation (\ref{eq:drift2}) to
\[
LMT\mathds{E}_{\mathbf{S}(t_0)}
\left[
\max_t{
	\sum_j Q_j(t)\left(\rho_j +\frac{LC_h}{\mu_j} \right)
}
\right]
\]
and ultimately constant $C_5$ of final result to
$
C_5 = L\max_{ j \in \mathcal{J}}
\left(-1-\frac{\rho_j +\frac{LC_h}{\mu_j}}{v_j}\right).
$
}

\change{
\subsection{Proof of Corollary~\ref{corol-batch}}\label{proof:corol-batch}
There are three parts in the original proof that need to change if
we redefine the arrival rate of a job type $j$ as
\begin{equation}
\lambda_j =
\lambda \sum_{\mathbf{v} \in \mathcal{V}} v_j p_{\mathbf{v}}
\end{equation}
and the workload of a job type $j$ as in Equation~(\ref{eq-batch-workload}).
}

\change{
The first change to the previous proof (under Poisson assumption) is to modify the bound of Equations (\ref{p3})
and (\ref{p4}) since they relied on the assumption that arrivals are independent, whereas under the batch arrivals, the arrivals of various job types are no longer independent.
We can still compute a new bound as follows
\begin{equation} \label{p3_dep}
\begin{aligned}
\mathds{P}\left(\left \Vert \mathbf{A}_2 \right \Vert_\infty <
C_f \left \Vert \mathbf{q}_0 \right \Vert \right) \ge
\mathds{P}\left(\sum_j A_{2,j} < C_f \left \Vert \mathbf{q}_0 \right \Vert
\right) \ge 
 1 - \frac{\mathds{E} \left[\sum_j A_{2,j}\right]}
{C_f \left \Vert \mathbf{q}_0 \right \Vert} \ge
1 - \frac{NT \sum_j \lambda_j}{C_f \left \Vert \mathbf{q}_0 \right \Vert},
\end{aligned}
\end{equation}
by the application of Markov's inequality for the random variable
$\sum_j A_{2,j}$.
}
\change{
Then we also change Equation (\ref{eq-Asqr}). It is easy to see that under the batch arrival model
\begin{equation} \label{eq-Asqr-batch}
\mathds{E}_{\mathbf{S}(t)}[A_j(t, t+u)^2] = 
\lambda \sum_{\mathbf{v} \in \mathcal{V}} v_j^2 p_{\mathbf{v}} u + \oo(u)
\end{equation}
}

\change{
Eventually this last result will change the expression of $B_2$ in
equation \ref{eq-AV-limit}, with $\rho_j$ being replaced by
$\frac{\lambda \sum_{\mathbf{v} \in \mathcal{V}} v_j^2 p_{\mathbf{v}}}{\mu_j}$.
}

\change{
Lastly we will have to update the constants of Lemma \ref{prop:weight_const}
to consider that the maximum change in number of jobs can be more than $K_{max}$
but is again bounded, since arrivals in each arrival event were assumed bounded.
}

\subsection{Proof of Lemma \ref{lem2}}\label{proof-lem2}
Let us first denote the normalized vector of resources of job type
$j$ as $\mathbf{w}_j = (w_{j1}, w_{j2}, \cdots, w_{jR})$ which means that the
values are normalized with the capacity of the server.
\change{
	Let $j^\prime$ be the job type which has the resource
 	with the highest relative value, i.e.,
	$j^\prime =\arg\max_{j \in \calJ}\left(Q_{j}(t)/(\max_n w_{jn})\right)$.
	We show that the maximal configuration that included only jobs of type
	${j^\prime}$ is $r$-max weight with $r=\frac{N_f}{R(N_f+1)}$.
}
This implies the configuration of job type $j=j^\star$ that maximizes
$Q_j(t) \left \lfloor 1/\max_{n=1,\cdots,R}{w_{jn}} \right \rfloor$
should also be $r$-max weight since its weight is greater than or equal to
that of $j^\prime$.

Using the job type $j^\prime$, the total number of jobs that can fit in the server is $\left \lfloor 1/\max_{n=1,\cdots, R}{w_{{j^\prime}n}} \right \rfloor$
jobs and the corresponding weight will be:
\begin{equation}
\begin{aligned}
f({\mathbf{k}^{(r)}}^\ell(t), \mathbf{Q}(t)) &= Q_{j^\prime}(t) \left 
\lfloor 1/\max_{n=1,\cdots,R}{w_{{j^\prime}n}} \right \rfloor
> \frac{N_f}{N_f+1} Q_{j^\prime}(t)/\max_n{w_{{j^\prime}n}} \\
&= \frac{N_f}{(N_f+1)R} Q_{j^\prime}(t)R/\max_n{w_{{j^\prime}n}}
\ge \frac{N_f}{(N_f+1)R} \max_{{\mathbf{k}}^\ell} f({\mathbf{k}}^\ell, \mathbf{Q}(t)),
\end{aligned} \nonumber
\end{equation}
where the last inequality follows because $Q_{j^\prime}(t)R/\max_n{w_{{j^\prime}n}}$
is equivalent with filling all $R$ resources with the maximum relative value job
$j^\prime$ without leaving residual capacity, which is an upper bound of the
max weight value $\max_{{\mathbf{k}}^\ell} f({\mathbf{k}}^\ell, \mathbf{Q}(t))$.

\subsection{Proof of Corollary~\ref{lem-beta}}\label{proof:lem-beta}
The term $\beta$ first appears in the proof of Theorem~\ref{theorem} in
Equation~(\ref{eq-integralk}) and is treated as constant. By focusing
on one term of that integral we will show how the bound will change
if $\beta$ is a function as defined in the previous description.
As a reminder
\begin{equation}
\begin{aligned}
&\mathds{E}_{\mathbf{S}(t_0)}\left[\int_{t=t_0}^{t_f}\sum_j Q_j(t)
\tilde{k}_j^\ell(t)\right] >
\mathds{E}_{\mathbf{S}(t_0)}\left[\int_{t=t_0}^{t_f}\sum_j Q_j(t) r\beta(\mathbf{Q}(t))
 {k^\star}_j^{\ell}(t)\right] > \\
&r\mathds{E}_{\mathbf{S}(t_0)}\left[\min_{t_0 \le t < t_f}{\beta(\mathbf{Q}(t))}\right]
\mathds{E}_{\mathbf{S}(t_0)}\left[\int_{t=t_0}^{t_f}\sum_j Q_j(t) {k^\star}_j^{\ell}(t)\right].
\end{aligned} \nonumber
\end{equation}

It then suffices to find a lower bound of $\mathds{E}_{\mathbf{S}(t_0)}[\min{\beta(\mathbf{Q}(t))}]$
for which we will prove that for large enough queues it is higher than
$(1-\epsilon)(1-\bar\epsilon) \bar\beta +\epsilon \bar\beta_{min}$
for any $\epsilon > 0$ and $\bar \epsilon > 0$.
Let value $\bar Q$ be such that, for any $\mathbf{Q}$ with
$\left \Vert  \mathbf{Q} \right \Vert_1 > \bar{Q}$,
$h(\mathbf{Q}) > (1 - \bar\epsilon) \bar\beta$ for some $\bar\epsilon > 0$.
Then
\begin{equation}
\begin{aligned}
&\mathds{E}_{\mathbf{S}(t_0)}\left[\min_{t_0 \le t \le t_f}
{\beta(\mathbf{Q}(t))}\right] > 
\mathds{P}\left(\min_{t_0 \le t \le t_f}
{\left \Vert \mathbf{Q}(t)\right \Vert_1} >\bar Q | \mathbf{S}(t_0)\right)
 (1-\bar\epsilon)\bar\beta +
\mathds{P}\left(\min_{t_0 \le t \le t_f}
{\left \Vert \mathbf{Q}(t)\right \Vert_1}
\le \bar Q | \mathbf{S}(t_0)\right) \beta_{min}.
\end{aligned} \nonumber
\end{equation}

The result follows if we can have
$\mathds{P}(\min{\left \Vert \mathbf{Q}(t)\right \Vert_1} >
\bar Q | \mathbf{S}(t_0)) > 1-\epsilon$.
Using the shorthand $\mathbf{Q}(t_0) = \mathbf{q}_0$ we have
\begin{equation}
\begin{aligned}
&\mathds{P}(\min_t{\left \Vert \mathbf{Q}(t)\right \Vert_1} > \bar Q | \mathbf{S}(t_0)) > 
\mathds{P}(\min_t{\left \Vert \mathbf{Q}(t)\right \Vert_1} >
	C \left \Vert \mathbf{q}_0\right \Vert_1 | \mathbf{S}(t_0)) \cdot
\mathds{1}(\left \Vert \mathbf{q}_0\right \Vert_1 > \bar Q/C)  \ge \\
&\mathds{P}(\min_t{\left \Vert \mathbf{Q}(t)\right \Vert} >
	\sqrt{J} C \left \Vert \mathbf{q}_0\right \Vert | \mathbf{S}(t_0)) \cdot
\mathds{1}(\left \Vert \mathbf{q}_0\right \Vert > \bar Q/C).
\end{aligned} \nonumber
\end{equation}

Finally assuming $\left \Vert \mathbf{q}_0\right \Vert > \bar Q/C$
and process of Equation~(\ref{p4}) we have
\begin{equation} \label{p-lem3}
\begin{aligned}
&\mathds{P}(\min_t{\left \Vert \mathbf{Q}(t)\right \Vert} >
	\sqrt{J} C \left \Vert \mathbf{q}_0\right \Vert) >
\left (1- \frac{\lambda_{max} NT}
	{\sqrt{J} C \left \Vert \mathbf{q}_0 \right \Vert}\right)^J
\left (1- \frac{K_{max} NT^2}
	{\sqrt{J} C \left \Vert \mathbf{q}_0 \right \Vert}\right)^J
	> 1 - \bar\epsilon,
\end{aligned}
\end{equation}
with the last inequality being true when
\begin{equation}
\left \Vert \mathbf{q_0} \right \Vert
  > \frac{2LJNT \max(\lambda_{max}, K_{max} T)}
  {\bar\epsilon \sqrt{J} C} \nonumber
\end{equation}

The last derivation follows the same steps as the one that led to
formula~(\ref{Mq-cond}). The condition~(\ref{p-lem3}) is satisfied
for all initial queue sizes except possibly for those for which
\begin{equation}
\left \Vert \mathbf{q_0} \right \Vert
  < \max \left(\frac{2LJNT \max(\lambda_{max}, K_{max} T)}
  {\bar\epsilon \sqrt{J} C}, \frac{\bar Q}{C} \right). \nonumber
\end{equation}

\end{document}